\begin{document}

\title{Decentralized Federated Averaging \\via Random Walk}

\author{Changheng~Wang,~\IEEEmembership{Student Member,~IEEE,}  Zhiqing~Wei,~\IEEEmembership{Member,~IEEE,} Lizhe~Liu, Qiao~Deng, Yingda Wu,\\ Yangyang Niu,~\IEEEmembership{Student Member,~IEEE,} Yashan~Pang, and Zhiyong~Feng,~\IEEEmembership{Senior Member,~IEEE}
	
\thanks{
	\textit{Corresponding author: Zhiqing Wei.}
	
	Changheng Wang is with Beijing University of Posts and Telecommunications, Beijing 100876, China, and also with the National Key Laboratory of Advanced Communication Networks, Shijiazhuang 050081, China (email: ch\_wang@bupt.edu.cn).
	
	Zhiqing Wei, Qiao Deng, Yingda Wu, Yangyang Niu, and Zhiyong Feng are with Beijing University of Posts and Telecommunications, Beijing 100876, China (e-mail: weizhiqing@bupt.edu.cn; qxqx233@bupt.edu.cn; wydwydwyd@bupt.edu.cn; niuyy@bupt.edu.cn; fengzy@bupt.edu.cn).

    Lizhe Liu, Yashan Pang are with the National Key Laboratory of Advanced Communication Networks, Shijiazhuang 050081, China (e-mail: liu\_lizhe@sina.com; yashanpang@163.com).
}}

\markboth{}%
{Shell \MakeLowercase{\textit{et al.}}: A Sample Article Using IEEEtran.cls for IEEE Journals}

\maketitle

\begin{abstract}
Federated Learning (FL) is a communication-efficient distributed machine learning method that allows multiple devices to collaboratively train models without sharing raw data. FL can be categorized into centralized and decentralized paradigms. The centralized paradigm relies on a central server to aggregate local models, potentially resulting in single points of failure, communication bottlenecks, and exposure of model parameters. In contrast, the decentralized paradigm, which does not require a central server, provides improved robustness and privacy. The essence of federated learning lies in leveraging multiple local updates for efficient communication. However, this approach may result in slower convergence or even convergence to suboptimal models in the presence of heterogeneous and imbalanced data. To address this challenge, we study decentralized federated averaging via random walk (DFedRW), which replaces multiple local update steps on a single device with random walk updates. Traditional Federated Averaging (FedAvg) and its decentralized versions commonly ignore stragglers, which reduces the amount of training data and introduces sampling bias. Therefore, we allow DFedRW to aggregate partial random walk updates, ensuring that each computation contributes to the model update. To further improve communication efficiency, we also propose a quantized version of DFedRW. We demonstrate that (quantized) DFedRW achieves convergence upper bound of order $\mathcal{O}(\frac{1}{k^{1-q}})$ under convex conditions. Furthermore, we propose a sufficient condition that reveals when quantization balances communication and convergence. Numerical analysis indicates that our proposed algorithms outperform (decentralized) FedAvg in both convergence rate and accuracy, achieving a 38.3\% and 37.5\% increase in test accuracy under high levels of heterogeneities, without increasing communication costs for the busiest device.
\end{abstract}

\begin{IEEEkeywords}
Decentralized federated averaging, random walk, stochastic gradient descent, heterogeneous networks.
\end{IEEEkeywords}

\section{Introduction}
\subsection{Motivations}
\IEEEPARstart{D}{istributed} machine learning disperses data and computation across multiple devices, which can both improve computational efficiency and efficiently utilize the distributed resources of the system. This enables the processing of large-scale datasets and the training of complex models. However, traditional distributed machine learning methods face major challenges, including the need to transfer private data between devices and low communication efficiency, both of which restrict system performance. Federated Learning (FL), as one of the primary frameworks of distributed optimization, allows the training of local models on multiple distributed devices. It updates the global model by sharing local model parameters or gradients with a central server or neighboring devices \cite{1Peter}, achieving collaborative learning while ensuring data privacy and demonstrating excellent scalability.

The primary advantages of FL lie in its high protection of data privacy and security while still leveraging distributed computing resources for efficient model training. Additionally, FL significantly reduces the bandwidth required for data transmission by transferring model updates instead of raw data. By performing multiple epochs of local updates, it makes full use of distributed computing resources and improves communication efficiency. FL has been widely researched and applied in fields such as healthcare \cite{2Rieke}, \cite{3Wu}, smart cities \cite{4Zeng}, and financial services \cite{5Zheng}.

FL includes centralized and decentralized paradigms, differentiated by the presence or absence of a central server to coordinate and aggregate local model updates distributed across multiple devices \cite{6Li}. As the number of participating devices increases, the communication load between the central server and local devices significantly rises, creating a communication bottleneck that leads to a decline in overall system performance. The central server also faces high demands on computational and storage resources due to the frequent reception and aggregation of numerous model updates from local devices and the need to maintain system consistency. Moreover, single point of failure (SPOF) is a critical issue faced by centralized FL (CFL) \cite{7Korkmaz}. To overcome these challenges, decentralized FL (DFL) eliminates the reliance on a central server, utilizing the Peer-to-Peer (P2P) network structure to distribute the processing and aggregation of model updates. The decentralized paradigm not only enhances system robustness and scalability, better adapting to dynamic environments, but also further improves privacy protection, reduces communication load, and mitigates the imbalance of computational resources \cite{7Imteaj}.

\begin{figure}[!t]
	\centering
	\includegraphics[width=3in]{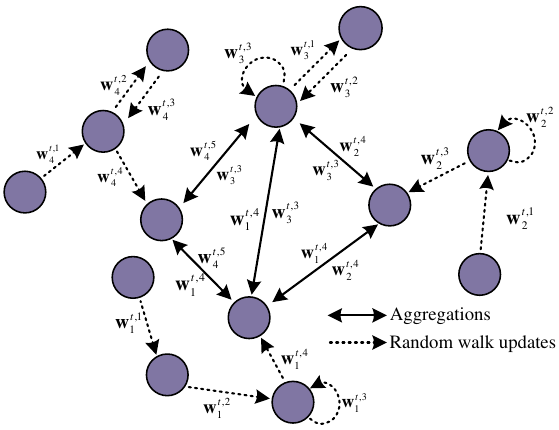}
	\caption{Decentralized federated averaging via random walk. The model of the $i$-th update in the $p$-th random walk chain during the $t$-th global round is denoted as $\mathbf{w}_p^{t,i}$. The self-loops at $p=1$, $i=3$, $p=i=2$ and $p=i=3$ indicate that after this update, the next update remains on the same device. The model undergoes decentralized aggregation periodically.}
	\label{fig_1}
\end{figure}

Due to factors such as user behavior, environment, hardware, and software differences, the local data on distributed devices in a federated network is not always independently and identically distributed (IID) \cite{7Jianyu}. Multiple local stochastic gradient descent (SGD) iterations may cause the local models on devices to update in a more personalized direction, deviating from the global optimal objective, which may lead to slower convergence of the global model and potentially result in convergence to a suboptimal model \cite{8Mao}. To address the poor performance of DFL on data with statistical heterogeneity and distribution uncertainty, multiple local model updates on devices should not be restricted to local devices. Traversing diverse local data helps improve the consistency between local and global models.

The random walk SGD sends the model to a random neighbor for sequential updates, effectively utilizing more heterogeneous data while ensuring that every computation and communication cost contributes to the improvement of global model \cite{9Sun}. Therefore, we employ a DFL algorithm by utilizing the random walk method. In each iteration, the model is updated at a particular device and then sent to a randomly selected neighbor for the next update using its local data. Periodically, the local model is sent to a set of random neighbors for aggregation. Fig. \ref{fig_1} illustrates how our proposed framework operates. This approach addresses the statistical heterogeneity of data across local devices while balancing time efficiency with communication and computation costs. Furthermore, performing multiple iterations of local model updates using random walks results in lower communication overhead compared to traditional decentralized SGD (DSGD) methods, which aggregate immediately after a single local update, as depicted in Fig. \ref{fig_2} (a).

The imbalance in data across local devices, coupled with the variations in hardware, battery life, and network connectivity within a federated network, causes significant heterogeneity among devices \cite{30Liao}. This system heterogeneity manifests in differences in communication, computation, and storage capabilities, which intensify the straggler effect. When local data is statistically heterogeneous, stragglers may possess unique data characteristics. Ignoring these stragglers can result in sampling bias and reduce the number of devices available for training, negatively impacting convergence. To address this issue, we propose to allow each set of devices executing the random walk to perform a variable number of random walks according to their computational capabilities. By dynamically adjusting the workload, we can balance the load among devices, thereby mitigating the straggler effect.

We focus on two critical issues in DFL. First, while multiple local updates can effectively reduce communication costs and enhance training efficiency in homogeneous systems, their benefits may diminish or even become counterproductive as heterogeneities increase. Therefore, it is essential to develop DFL methods that can adapt to these heterogeneities. Second, as the number of participating devices or the amount of model parameters increases, the communication cost among local devices rises sharply. Designing effective communication strategies is another key objective of this paper.

The proposed approach is applicable to various distributed learning environments. For instance, in scenarios such as personalized recommendation systems, collaborative unmanned aerial vehicle (UAV) sensing, and vehicular networks, devices collect highly heterogeneous data. The local update strategy based on random walks can mitigate data drift. Similarly, in computationally heterogeneous environments such as industrial internet of things (IIoT) and smart cities, the adaptive adjustment of random walk lengths enables efficient utilization of computational resources. Moreover, in decentralized settings like post-disaster emergency networks, decentralized finance (DeFi), and distributed medical systems, the absence of a reliable central server or privacy concerns may prevent uploading private models. In such cases, decentralized aggregation facilitates efficient model aggregation, prevents data leakage, and enhances system robustness. Finally, in latency-sensitive applications such as online model updates for autonomous driving and network intrusion detection, communication overhead often becomes a bottleneck. The quantization technique based on parameter differences can significantly reduce transmission costs, while the periodic aggregation of multiple random walk chains accelerates global model convergence.

\begin{figure*}[!t]
	\centering
	\includegraphics[width=6.7in]{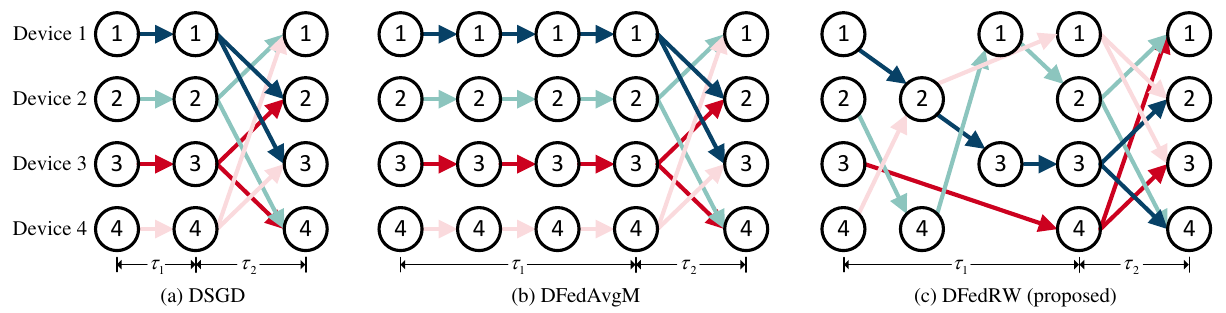}
	\caption{Comparison of communication and training styles of DSGD, DFedAvgM and the proposed DFedRW. Different colored chains represent different model update trajectories. In process $\tau_1$, DSGD and DFedAvgM perform one or multiple local updates on a single device. DFedRW employs a random walk approach to perform a variable number of updates to address statistical and system heterogeneity. In process $\tau_2$, inter-device communications take place to aggregate the global model.}
	\label{fig_2}
\end{figure*}

\subsection{Our Contributions}
We are interested in decentralized settings where devices have limited computation and communication capabilities. To restrict computational costs, we focus on first-order methods, where model updates involve only the gradients of the local loss functions and do not depend on higher-order derivatives. We outline our contributions in the following three aspects.

\begin{itemize}
	\item Algorithmically, we propose a decentralized distributed learning algorithm, DFedRW, to learn models from distributed data on devices within a heterogeneous network. Instead of restricting local epochs to a single device, our algorithm completes multiple local updates using a random walk method. The number of updates is determined by the communication and computational resources of the devices. Additionally, we introduce a quantized version of DFedRW to further reduce communication costs between devices.
	\item Theoretically, we prove that DFedRW and its quantized version can achieve a convergence upper bound of order $\mathcal{O}(\frac{1}{k^{1-q}})$ in convex settings, approaching the $\mathcal{O}(\frac{1}{\sqrt{k}})$ level of SGD. We also demonstrate that their convergence bound depends on the level of statistical and system heterogeneity, with greater heterogeneities leading to a closer bound between the quantized and original versions. Furthermore, we provide sufficient conditions for balancing communication and convergence performance.
	\item Empirically, we conduct extensive numerical experiments by training deep neural networks (DNNs) under various statistical and system heterogeneity settings. The numerical results show that our proposed (quantized) DFedRW achieves faster convergence and higher accuracy compared to the baselines, and its advantages are amplified in scenarios with high levels of heterogeneities and sparse graphs. Moreover, these benefits are achieved without increasing the communication cost for the busiest nodes.
\end{itemize}

\subsection{Organizations}
The rest of the paper is organized as follows. Section II provides a brief summary of the related literature. Section III outlines the mathematical formulation of the problem. In Section IV, we introduce the DFedRW algorithm along with its quantized version. The convergence analysis of the proposed algorithms is presented in Section V. Section VI gives the numerical results and their detailed analysis. Finally, the paper concludes with Section VII. The complete technical proofs are provided in the appendix.

\subsection{Notation}
Let lowercase, lowercase bold, and uppercase bold letters represent scalars, vectors, and matrices, respectively. For a vector $\mathbf{w} \in \mathbb{R}^d$, the local copy at device $i$ is denoted as $\mathbf{w}_i$. For a matrix $\mathbf{X}=(x_{i, j})_{N \times M}$, $\mathbf{X}^\top$ represents its transpose, $\mathbf{X}(i,:)$ represents its $i$-th row, and $\|\mathbf{X}\|:=\sqrt{\sum_{i=1}^N \sum_{j=1}^M x_{i, j}^2}$ denotes its Frobenius norm. When $N$$=$$M$$=$$1$, $\|\cdot\|$ is the $l_2$ norm of a vector. We denote the $i$-th eigenvalue of a square matrix $\mathbf{H}$ as $\lambda_i(\mathbf{H})$. For a function $f(\mathbf{w}): \mathbb{R}^d \rightarrow \mathbb{R}$, $\nabla f(\mathbf{w})$ represents its gradient. $\mathbb{E}[\cdot]$ represents the expectation taken with respect to the given probability space. Given a scalar $x$, $\operatorname{sgn}(x)$ indicates its sign, $\lceil x \rceil$ and $\lfloor x \rfloor$ denote the ceiling and floor functions, respectively.

\section{Related Works}
We conducted a review of three closely related areas to this work: decentralized federated learning, statistical and system heterogeneity, and random walk learning. For clarity, we provide a comparative overview of related work in Table \ref{table_0}.

\begin{table}[!t]
	\centering
	\caption{Comparison of Related Work in DFL.}
	\label{table_0}
	\renewcommand{\arraystretch}{1}
	\begin{tabular}{lll}
		\toprule
		Taxonomy & References & Solution Categorization \\ 
		\midrule
		\multirow{4}{*}{Efficiency} & \cite{15Zhou}  & Lazy updates \\
		& \cite{16Tang} & Sparsification \\
		& \cite{14Sun}, \cite{17Chen} & Quantization \\
		& \cite{14Sun}, \cite{34Sun} & Momentum acceleration \\
		\midrule
		\multirow{2}{*}{Topology} & \cite{19Koloskova} & Adaptive network topology \\
		& \cite{20Bellet} & Grouped topology \\
		\midrule
		& \cite{8Mao} & Proximal regularization \\
		Statistical & \cite{25Ayache}, \cite{32Sun}, \cite{33Ayache} & Random walk learning \\
		heterogeneity & \cite{27Mendieta} & Standard regularization \\
		& \cite{30Liao} & Consensus-based connectivity  \\
		\midrule
		System & \cite{29Shi} & Local consistency optimization \\
		heterogeneity & \cite{30Liao} & Adaptive update frequency \\
		\bottomrule
	\end{tabular}
\end{table}

\textit{Decentralized federated learning}. DFL is a learning paradigm built upon federated learning and decentralized training. It aims to enhance communication efficiency and protect private data from being sent to untrusted parameter servers \cite{11Hashemi}. \cite{12Heged} and \cite{13Lian} provide a comprehensive empirical study comparing centralized and decentralized FL in terms of convergence rates, bandwidth consumption, and computational complexity. In \cite{14Sun}, the use of momentum in decentralized federated averaging (DFedAvgM) is proposed, as illustrated in Figure 2 (b). This method establishes convergence guarantees for decentralized federated learning with multiple local iterations under non-convex and Polyak-Łojasiewicz (PŁ) conditions. In addressing the challenges of optimizing the efficiency of DFL, lazy, sparse, and quantized techniques improve system performance by reducing the consumption of computational resources and lowering the demand on communication bandwidth. Zhou \textit{et al}. \cite{15Zhou} introduced lazily aggregated gradients (LAG) for non-convex FL to reduce uplink communication overhead. In \cite{16Tang}, a communication-efficient DFL framework is proposed, which requires only the exchange of information with peers in a highly sparsified model. Chen \textit{et al}. \cite{17Chen} proposed an adaptive non-uniform quantization level adjustment scheme to minimize quantization distortion in DFL, demonstrating convergence without assuming convex loss. The topology of DFL networks directly impacts model communication overhead, convergence, and generalization \cite{18Vogels}. It has been shown in \cite{19Koloskova} that under low-noise conditions, the similarity of functions and communication topology affects the convergence performance of DSGD. In \cite{20Bellet}, a novel sparse topology is introduced, which groups sparsely connected nodes to address label distribution skew by replacing global labels with intra-group labels, achieving convergence rate similar to fully connected network. More details on DFL can be found in \cite{21Edoardo}, while the latest advancements, additional applications, and outstanding issues are discussed in \cite{22Li}, \cite{23Martínez}, and \cite{24Qu}.

\begin{table*}[!t]
	\centering
	\small 
	\caption{\label{table_1}Advantages of DFedRW Compared to FedAvg and DFedAvg.}
	\renewcommand{\arraystretch}{1.4} 
	\setlength{\tabcolsep}{6pt} 
	\begin{tabular}{>{\raggedright\arraybackslash}p{3cm} >{\raggedright\arraybackslash}p{4.5cm} >{\raggedright\arraybackslash}p{4.5cm} >{\raggedright\arraybackslash}p{4.5cm}}
		\toprule
		& \textbf{\textsf{FedAvg}} \cite{28McMahan} & \textbf{\textsf{DFedAvgM}} \cite{14Sun} & \textbf{\textsf{DFedRW (proposed)}} \\ 
		\midrule
		\textbf{\textsf{1. Robustness}} & SPOF can disrupt the FL system. & Partial device failures do not impact the overall functionality. & Partial device failures do not impact the overall functionality. \\
		\rowcolor[HTML]{F7F7F7} 
		\textbf{\textsf{2. Data privacy}} & Attacking the server exposes all device parameters or gradients. & Each device only has parameters or gradients of some neighbors. & Each device only has parameters or gradients of some neighbors. \\
		\textbf{\textsf{3. Statistical heterogeneity}} & Multiple local updates drive the model towards local optima. & Multiple local updates drive the model towards local optima. & Random walk-based updates steer the model to a global optimum. \\
		\rowcolor[HTML]{F7F7F7} 
		\textbf{\textsf{4. System heterogeneity}} & Dropping stragglers affects data integrity and model convergence. & Dropping stragglers affects data integrity and model convergence. & Integrating partial contributions from stragglers. \\
		\bottomrule
	\end{tabular}
\end{table*}

\textit{Heterogeneities}. In the DFL system, the inconsistent data distribution across user devices results in statistical heterogeneity among local data. Multiple local iterations can cause local models to diverge from the global model and slow down the convergence rate \cite{25Ayache}. To address this issue, many works have introduced proximal terms \cite{26Li,8Mao} to combat data heterogeneity during model training. However, some added proximal terms may lead to a significant increase in computational or storage costs. Mendieta \textit{et al}. \cite{27Mendieta} analyzed the effectiveness of standard regularization methods in countering data heterogeneity from the perspective of machine learning training principles. Differences in communication, computation, and storage capabilities contribute to device heterogeneity, potentially causing communication delays or computational bottlenecks, which can impact training efficiency. Naively ignoring stragglers can increase data heterogeneity \cite{26Li}. Therefore, statistical and system heterogeneity interact with each other. Shi \textit{et al}. \cite{29Shi} proposed a scheme to enhance the consistency between local models and local flat models, achieving a trade-off between communication complexity and generalization performance. Liao \textit{et al}. \cite{30Liao} addressed system and statistical heterogeneity by integrating adaptive network topology and local update frequency, and established convergence upper bounds related to both aspects. 

\textit{Random walk learning}. As a decentralized learning algorithm, random walk learning achieves the seemingly opposing goals of extending the benefits of locality and mitigating statistical heterogeneity \cite{25Ayache}. Sun \textit{et al}. \cite{32Sun} established the convergence of irreversible finite-state Markov chains for non-convex problems. Ayache \textit{et al}. \cite{33Ayache} examined random walks under uniform, static, and adaptive weights, deriving non-asymptotic convergence bounds related to the Lipschitz constant and graph. In \cite{34Sun}, convergence results are established for random walk SGD with adaptive step sizes and momentum under mild assumptions. Sun \textit{et al}. \cite{9Sun} introduced a decentralized Markov chain that only communicates intermediate results with neighbors, elucidating the dependency between topology and mixing time. Shah \textit{et al}. \cite{35Shah} investigated the convergence of the alternating direction method of multipliers (ADMM) updated iteratively in a random walk manner. Since random walk is serially executed along a path, where model parameters or gradients are passed from one device to the next for updates, this method achieve efficient communication at the expense of increased algorithm runtime. To enhance communication efficiency and accelerate the convergence of the Markov chain, \cite{36Ye} proposes parallel random walk ADMM.

This work distinguishes itself by presenting a communication-efficient decentralized federated training framework that combats arbitrary system and statistical heterogeneity among local devices. By employing parallel random walk SGD to update the DFL model parameters, it achieves a balance between communication, computation, and convergence bound. In Fig. 2, we compare the proposed decentralized federated averaging via random walk (DFedRW) with traditional distributed learning strategies. Furthermore, we introduce a quantized DFedRW that effectively reduces communication costs during the update and aggregation processes. We detail the convergence results of the proposed algorithms under convex conditions, revealing the relationship between convergence bounds and heterogeneities. In Table \ref{table_1}, we summarize the advantages of DFedRW compared to FedAvg \cite{28McMahan} and DFedAvgM \cite{14Sun}. Moreover, we propose sufficient conditions for optimizing the trade-off between communication efficiency and convergence bound in quantized implementations.

\section{Problem Formulation}
\subsection{Communication Graph}
We model the network structure in the consensus optimization problem using an undirected graph $\mathcal{G}=(\mathcal{V}, \mathcal{E})$, where $\mathcal{V}=[n]:=\{1,2,\ldots,n\}$ represents the set of $n$ devices, and $\mathcal{E}=\{(i,j)\in\mathcal{V}\times\mathcal{V}\}$ denotes the set of edges connecting neighboring devices. The set of neighbors for device $i\in\mathcal{V}$ is denoted by $\mathcal{N}(i)$. The undirected nature of the graph implies that if $\forall(i,j)\in\mathcal{E}$, then $(j,i)\in\mathcal{E}$. Additionally, since the devices allow self-loops, we have $(i,i)\in\mathcal{E}$, $\forall i$.

\subsection{Learning Objective}
Our objective is to minimize the empirical average of local losses over distributed data on a graph with $n$ devices \cite{28McMahan}
\begin{equation}
	\label{eq_1}
    \min_{\mathbf{w} \in \mathcal{W}} f(\mathbf{w}):=\frac{1}{n} \sum_{i=1}^n \underbrace{\mathbb{E}_{\xi \sim \mathcal{D}_i} F_i(\mathbf{w} ; \xi)}_{=: f_i(\mathbf{w})},
\end{equation}
where $i$-th device maintains a local dataset $\mathcal{D}_i=\{\xi_{i, 1}, \xi_{i, 2}, \ldots, \xi_{i, n}\}$, and $\xi_{i, r}=(\mathbf{x}_{i, r}, y_{i, r}) \in \mathbb{R}^m \times \mathbb{R}$, for $\forall r \in[n]$, consists of $m$-dimensional feature vectors and one-dimensional real-value labels. $\mathcal{D}_i$ is sampled from the unknown local distribution $\Pi_i$ at $i$-th device. The global objective function is $f(\mathbf{w})$, $\mathbf{w} \in \mathcal{W} \subseteq \mathbb{R}^d$, with $\mathcal{W}$ being a closed and bounded feasible set. $f_i(\mathbf{w}):=\int_{\Pi_i} F_i(\mathbf{w}; \xi) d \Pi_i(\xi)$ denotes the local objective at $i$-th device. $F_i(\mathbf{w} ; \xi)$ represents the local loss function associated with the training data $\xi$.

The optimal model $\mathbf{w}^*$ of $f(\mathbf{w})$ satisfies \cite{28McMahan}
\begin{equation}
	\label{eq_2}
	\mathbf{w}^*=\arg \min _{\mathbf{w} \in \mathcal{W}} f(\mathbf{w}).
\end{equation}

\subsection{Heterogeneities}
In distributed networks, the statistical distribution of local data at each device is typically non-independent and identically distributed (Non-IID). Moreover, even if the local data is IID, sampling discrepancies can induce statistical heterogeneity. This heterogeneity in local data can impact the convergence bound of the global model and may cause it to diverge from the global optimum \cite{7Jianyu}. To quantify the extent of local dissimilarity, we use the variance of the expected local gradient relative to the global gradient for a given model.

\theoremstyle{definition}
\newtheorem{definition}{Definition}
\begin{definition}[$\delta^2$-local dissimilarity \cite{37Karimireddy}]
	\label{definition_1}
	The local objective function $F_i$ is $\delta^2$-locally dissimilar at $\mathbf{w}$ if
	\begin{equation}
	\delta(\mathbf{w})^2  \geq \frac{\mathbb{E}_i[\| \nabla F_i(\mathbf{w}) \|^2]}{\| \nabla f(\mathbf{w}) \|^2},
    \end{equation}
	for $\| \nabla f(\mathbf{w}) \| \neq 0$, and $\delta(\mathbf{w})\geq 1$.
\end{definition}

The dissimilarity among local functions increases as $\delta(\mathbf{w})$ becomes larger. $\mathbb{E}_i[\| \nabla F_i(\mathbf{w}) \|^2]=\| \nabla f(\mathbf{w}) \|^2$ is a uniform stationary solution for all $F_i$, and the distributed network can be regarded as a homogeneous system.

In practice, we can collect statistics on the squared local gradient norms of devices over multiple communication rounds, compute their expected value $\mathbb{E}_i [\|\nabla F_i(\mathbf{w}) \|^2]$, and then divide it by the squared global gradient norm $ \|\nabla f(\mathbf{w})\|^2 $ to estimate $\delta^2$. The global gradient norm can be estimated using the currently aggregated global model. When $\delta^2$ approaches 1, the data distribution is relatively uniform (approaching IID), which facilitates global convergence. Conversely, a big value of $\delta^2$ indicates significant statistical heterogeneity, leading to training oscillations and adversely affecting the convergence.

Due to the varying communication, computation, and storage capabilities of devices in the network, the efficiency in solving local problems differs across the devices. Therefore, we allow each random walk trajectory to approximate the local objective solution with a level of inexactness inversely proportional to the capability of device. In this paper, the random walk trajectory refers to the sequence of devices encountered during multiple local updates using the random walk SGD. The concept of a $\gamma_i^t$-inexact solution is defined as follows.

\begin{definition}[$\gamma_i^t$-inexactness \cite{26Li}]
	\label{definition_2}
	We define $\mathbf{w}^{\prime}$ as a $\gamma_i^t$-inexact solution of $\min _{\mathbf{w} \in \mathcal{W}} F_i(\mathbf{w})$ if 
		\begin{equation}
		\|\nabla F_i(\mathbf{w}^{\prime})\| \leq \gamma_i^t\|\nabla F_i(\mathbf{w}^t)\|,
		\end{equation}
		for $\gamma_i^t \in [0,1]$, and $\delta(\mathbf{w})\geq 1$.
\end{definition}

The parameter $\gamma_i^t$ represents the level of inexactness, quantifying the local computation capability of device $i$ during the $t$-th communication round. In practice, we can estimate $\gamma_i^t$ from the ratio of gradient norm reduction after local updates on device $i$, where $\mathbf{w}^{\prime}$ denotes an approximate optimal solution obtained during local optimization. A smaller $\gamma_i^t$ indicates stronger local computational capacity, allowing for more effective model updates and contributing to the stable convergence of the global model. In contrast, a larger $\gamma_i^t$ suggests that local updates are coarse, which may introduce oscillations in global optimization and hinder convergence. Following this, we will present a heterogeneity expression applicable to the analysis of DFedRW.

\newtheorem{lemma}{Lemma}
\begin{lemma}[Random walk inexactness]
	\label{lemma_1}
	A random walk update is $\hat{\gamma}$-inexact if
	\begin{equation}
	\|\nabla F_{i^k}(\mathbf{w}^k)\| \leq \hat{\gamma}\|\nabla F_{i^{k-K}}(\mathbf{w}^{k-K})\|,
	\end{equation}
	where $\hat{\gamma}=\prod_{q=k-K}^{k-1} \gamma_{i^q}^{q}$, and $K$ is the number of random walk epochs in a communication round.
\end{lemma}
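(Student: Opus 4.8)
The plan is to unroll the $K$ random-walk epochs that make up one communication round and to multiply together the single-epoch inexactness guarantees supplied by Definition~\ref{definition_2}. I would index the epochs of the round by $q=k-K,k-K+1,\dots,k-1$ and let $i^q$ denote the device holding the walking model $\mathbf{w}^q$ at the start of epoch $q$. Epoch $q$ consists of device $i^q$ performing its inexact local optimization of $F_{i^q}$ initialized at $\mathbf{w}^q$ and forwarding the output $\mathbf{w}^{q+1}$ to the next device $i^{q+1}$ on the path; thus $\mathbf{w}^{q+1}$ is, by construction, a $\gamma_{i^q}^q$-inexact solution of $\min_{\mathbf{w}\in\mathcal{W}}F_{i^q}(\mathbf{w})$ with reference iterate $\mathbf{w}^q$, so Definition~\ref{definition_2} yields
\begin{equation}
\|\nabla F_{i^{q+1}}(\mathbf{w}^{q+1})\|\;\le\;\gamma_{i^q}^q\,\|\nabla F_{i^q}(\mathbf{w}^q)\|,
\end{equation}
where the local function on the left has already been switched to that of the device that next carries the iterate, using the random-walk convention that the forwarded model becomes the reference point for the subsequent epoch.

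Next I would telescope: applying this single-epoch bound successively for $q=k-1,k-2,\dots,k-K$ and composing the inequalities gives
\begin{equation}
\|\nabla F_{i^k}(\mathbf{w}^k)\|\le\gamma_{i^{k-1}}^{k-1}\|\nabla F_{i^{k-1}}(\mathbf{w}^{k-1})\|\le\cdots\le\Big(\prod_{q=k-K}^{k-1}\gamma_{i^q}^q\Big)\|\nabla F_{i^{k-K}}(\mathbf{w}^{k-K})\|,
\end{equation}
which is exactly the claimed estimate with $\hat{\gamma}=\prod_{q=k-K}^{k-1}\gamma_{i^q}^q$. As a consistency check, every factor lies in $[0,1]$, so $\hat{\gamma}\in[0,1]$ and $\hat{\gamma}$ shrinks as the walk length $K$ increases; hence a random-walk update is at least as accurate as any of its constituent single epochs, which is the feature the convergence analysis in Section V will exploit.

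The hard part will be the bookkeeping at the device boundary in the first step, not the algebra of the telescoping. The gradient appearing on the output side of epoch $q$ is that of $F_{i^q}$, whereas the gradient on the input side of epoch $q+1$ is that of $F_{i^{q+1}}$, and a priori these are gradients of different local functions at the same point $\mathbf{w}^{q+1}$. To close the chain rigorously I would fix precisely how the random-walk update is specified across devices: either (i) declare that the model handed to $i^{q+1}$ is exactly the reference iterate entering $i^{q+1}$'s own $\gamma$-inexact solve, so that the quantity $\|\nabla F_{i^{q+1}}(\mathbf{w}^{q+1})\|$ that bounds epoch $q+1$ is the same object Definition~\ref{definition_2} controls for epoch $q$ after relabeling, or (ii) read $\|\nabla F_{i^q}(\mathbf{w}^q)\|$ uniformly as the local gradient norm at the current device and current iterate, which is how it is used downstream. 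Once this convention is stated, the remaining steps are the routine product bound above.
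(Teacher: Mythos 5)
Your proposal is correct and follows essentially the same route as the paper's proof: apply the single-epoch inexactness of Definition~\ref{definition_2} at each device $i^{k-K},\dots,i^{k-1}$ along the trajectory and telescope the resulting inequalities to obtain the product $\hat{\gamma}=\prod_{q=k-K}^{k-1}\gamma_{i^q}^{q}$. Your remark about the bookkeeping at the device boundary (the switch from $\nabla F_{i^q}$ to $\nabla F_{i^{q+1}}$ at the handed-off iterate) is a real subtlety that the paper's proof also passes over silently, so flagging the needed convention is a welcome addition rather than a deviation.
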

\begin{proof}
	Appendix A provides a detailed proof of Lemma \ref{lemma_1}.
\end{proof}

Specifically, the inexactness parameter $\hat{\gamma}$ adjusts the local communication and computation load by influencing the number of local random walk iterations performed on each random walk trajectory during the $t$-th communication round, thereby combating the heterogeneity of devices. Notably, a smaller $\hat{\gamma}$ indicates higher accuracy in solving the local model through random walk, which is achieved by performing more frequent random walk updates.

\subsection{Random Walk Updates and Decentralized Aggregation}
\renewcommand{\thefootnote}{\dag}
We design a decentralized model update scheme on the graph $\mathcal{G}$, where each device communicates with a random neighbor within one-hop through a random walk, without the involvement of a central server. Specifically, a random walk trajectory on $\mathcal{G}$ starts from an initial device $i^0\in\mathcal{V}$, where $i^0$ is randomly and uniformly sampled from $\mathcal{V}$. At each step of the trajectory, the current device activates the next device to visit. In the $k$-th update, device $i^{k}$ uses the model parameters $\mathbf{w}^{k}$ received from the previous neighboring device and its own local data $\xi^k$ to update the local model, where $i^{k}$ is the $k$-th device on the random walk trajectory. The updated model is then projected back into the feasible set $\mathcal{W}$. We illustrate this with the first-order SGD method as an example\footnote{Although our analysis is conducted specifically with SGD, our work is also applicable to a broad range of update schemes that utilize unbiased estimations.}, with the update rule as follows \cite{14Sun}.
\begin{equation}
	\label{eq_3}
	\mathbf{w}^{k+1} \leftarrow \boldsymbol{\Pi}_{\mathcal{W}}\left(\mathbf{w}^k-\eta^k \mathbf{g}_{i^k}\right),
\end{equation}
where $\boldsymbol{\Pi}_{\mathcal{W}}$ denotes the projection operator, and $\mathbf{g}_{i^k} = \nabla F_{i^k}(\mathbf{w}^k, \xi^k)$ represents the local gradient computed at the $k$-th step (in device $i^k$) of the random walk trajectory. Since the process is inherently stochastic, to ensure the stochastic algorithm converges to the optimal solution, we employ a decreasing step size strategy \cite{26Li}, denoted by $\eta^k$.

We employ the Metropolis-Hastings (MH) \cite{25Ayache} decision rule to uniformly sample devices in the graph for random walk updates, using only local information of the devices to design the transition matrix $\mathbf{P}\in \mathbb{R}^{n \times n}$ such that the Markov chain converges to the stationary distribution $\boldsymbol{\Pi}^*:=[(\pi^*)^{\top},(\pi^*)^{\top}, \ldots,(\pi^*)^{\top}]^{\top} \in \mathbb{R}^{n \times n}$. Initially, at the $k$-th step of the random walk, a neighboring device $j$ is chosen as a candidate for the ($k+1$)-th step based on probability $a_\mu(i^{k}, j)=\min \{1, \frac{\operatorname{deg}(i^{k})}{\operatorname{deg}(j)}\}$. If device $j$ is not accepted, the random walk remains at device $i^k$ for the ($k+1$)-th step, resulting in a self-loop $i^{k+1}=i^{k}$. Otherwise, $i^{k+1}=j$. The transition matrix is denoted by \cite{33Ayache}

\begin{equation}
	\label{eq_10}
	\mathbf{P}(i^{k}, j)=\left\{\begin{array}{cc}
		\frac{a_\mu(i^{k}, j)}{\operatorname{deg}(i^{k})}, \qquad j \neq i^{k}, j \in \mathcal{N}(i^{k}), \\
		1-\sum_{j \in \mathcal{N}(i^{k})} \frac{a_\mu(i^{k}, j)}{\operatorname{deg}(i^{k})}, \qquad j=i^{k},
	\end{array}\right.
\end{equation}
where $\text{deg}(i)$ denotes the degree of device $i$. The random walk transition probabilities are designed according to the MH rule align with the stationary distribution $\boldsymbol{\Pi}^*$ \cite{8Mao}.

In the iterative update process, the random walk effectively addresses the explicit objectives of the learning model while also capturing statistical heterogeneity across different devices. By iterating (\ref{eq_3}) for $K$ times, the updated result $\mathbf{w}^{K-1}$ of the random walk trajectory is obtained.

We define a random walk on $\mathcal{G}$ using a Markov chain, where the active device $i^k$ represents a state within the Markov chain, with the state space $\mathcal{V}$. The transition matrix $\mathbf{P}$ inherits the original connectivity structure from $\mathcal{G}$.

\begin{definition}[Finite-state time-homogeneous Markov chain \cite{9Sun}]
	\label{definition_3}
    A random process $X_1, X_2, \ldots$ is referred to as a finite-state time-homogeneous Markov chain if the transition probabilities between states in a finite state space $\mathcal{V}$ remain constant over time. Its transition matrix satisfies $\mathbb{P}(X_{k+1}=j \mid X_0=i_0, \ldots, X_k=i)=\mathbb{P}(X_{k+1}=j \mid X_k=i)=\mathbf{P}_{i, j}$.
\end{definition}

Consider a strongly connected graph $\mathcal{G}$ that includes self-loops, resulting in an aperiodic and irreducible random walk $(i^k)_{k \geq 0}$, $i^k \in \mathcal{V}$, where transition matrix $\mathbf{P}_{i,j}=\mathbb{P}(i^{k+1}=j \mid i^k=i) \in[0,1]$, and the stationary distribution satisfies $\pi^*=\lim _k \pi^k=[\pi_1^*, \pi_2^*, \ldots, \pi_K^*]$ with $\sum_{i=1}^K \pi_i^*=1$, $\min _i\{\pi_i^*\}>0$, and $(\pi^*)^\top \mathbf{P}=(\pi^*)^\top$. For $(i^k)_{k \geq 0}$, achieving convergence through multiple iterations can be described using the Markov chain mixing time \cite{38Levin}.

\renewcommand{\thefootnote}{\ddag}
\begin{definition}
	\label{definition_4}
	Let $\lambda_1=1>\lambda_2 \geq \cdots \geq \lambda_K$ to represent eigenvalues\footnote{If the absolute values of all eigenvalues are less than 1, the state distribution of the random walk converges to a stationary distribution in the long term. The closer the absolute values of the eigenvalues are to 1, the slower the rate of convergence.} of the transition matrix $\mathbf{P}$ associated with the random walk, we define $
	\lambda_P:=\frac{\max \left\{|\lambda_2(\mathbf{P})|,|\lambda_K(\mathbf{P})|\right\}+1}{2} \in[0,1)$ \cite{38Levin}.
\end{definition}

\begin{lemma}[Convergence of Markov chain \cite{8Mao}]
	\label{lemma_10}
	Consider an aperiodic and irreducible random walk. Let integer ${\tau(\zeta)}$ be the mixing time, we establish the bound as
	\begin{equation}
		\max _i\|\boldsymbol{\Pi}^*-\mathbf{P}^{\tau(\zeta)}(i,:)\| \leq \zeta \lambda_{P}^{\tau(\zeta)}.
	\end{equation}
\end{lemma}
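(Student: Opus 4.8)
The plan is to exploit the reversibility of the Metropolis--Hastings (MH) chain with respect to its stationary distribution $\pi^*$ and then run the classical spectral ergodicity argument. First I would verify detailed balance: for the MH rule in (\ref{eq_10}) one has, for neighboring $i\neq j$, $\mathbf{P}_{i,j}=\min\{1/\operatorname{deg}(i),1/\operatorname{deg}(j)\}=\mathbf{P}_{j,i}$, so $\pi_i^*\mathbf{P}_{i,j}=\pi_j^*\mathbf{P}_{j,i}$ with $\pi^*$ the uniform distribution on $\mathcal{V}$ — this confirms both the claimed stationary distribution and reversibility. Consequently $\mathbf{P}$ is self-adjoint on $\mathbb{R}^K$ equipped with the weighted inner product $\langle u,v\rangle_{\pi^*}:=\sum_i\pi_i^* u(i)v(i)$. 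Aperiodicity and irreducibility (ensured by the self-loops and strong connectivity of $\mathcal{G}$) then give, by Perron--Frobenius, a simple eigenvalue $\lambda_1=1$ with eigenvector $\mathbf{1}$ and $|\lambda_\ell|<1$ for $\ell\ge2$; writing $\lambda_*:=\max\{|\lambda_2(\mathbf{P})|,|\lambda_K(\mathbf{P})|\}<1$ we have $\lambda_P=(\lambda_*+1)/2\in[\lambda_*,1)$ exactly as in Definition~\ref{definition_4}. Let $\{f_\ell\}_{\ell=1}^K$ be a $\langle\cdot,\cdot\rangle_{\pi^*}$-orthonormal eigenbasis with $f_1=\mathbf{1}$; the spectral decomposition then reads $\mathbf{P}^t(i,j)=\pi_j^*\sum_{\ell=1}^K\lambda_\ell^t f_\ell(i)f_\ell(j)$.

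The second step is to peel off the stationary component: the $\ell=1$ term equals exactly $\pi_j^*$, hence $\mathbf{P}^t(i,j)-\pi_j^*=\pi_j^*\sum_{\ell=2}^K\lambda_\ell^t f_\ell(i)f_\ell(j)$. I would then bound the $\pi^*$-weighted row deviation using orthonormality of the $f_\ell$ and the completeness identity $\sum_{\ell=1}^K f_\ell(i)f_\ell(j)=\delta_{i,j}/\pi_i^*$:
\[
\sum_{j}\frac{\bigl(\mathbf{P}^t(i,j)-\pi_j^*\bigr)^2}{\pi_j^*}=\sum_{\ell=2}^K\lambda_\ell^{2t}f_\ell(i)^2\le\lambda_*^{2t}\sum_{\ell=1}^K f_\ell(i)^2=\frac{\lambda_*^{2t}}{\pi_i^*}.
\]
Converting this weighted estimate into the unweighted Frobenius/$\ell_2$ norm used in the statement costs a factor $\sqrt{\pi_{\max}^*}$, so $\|\mathbf{P}^t(i,:)-\pi^*\|\le\sqrt{\pi_{\max}^*/\pi_i^*}\,\lambda_*^t\le\sqrt{\pi_{\max}^*/\pi_{\min}^*}\,\lambda_*^t$, where $\pi_{\min}^*:=\min_i\pi_i^*>0$; for the uniformly-targeted MH chain this prefactor is in fact $1$.

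The last step is bookkeeping: absorb the prefactor into the constant, i.e. take $\zeta$ to be any value at least $\sqrt{\pi_{\max}^*/\pi_{\min}^*}$ (the constant by which the mixing time $\tau(\zeta)$ is calibrated in \cite{38Levin}), and replace $\lambda_*^t$ by $\lambda_P^t$ using $\lambda_*\le\lambda_P$, which only weakens the bound. Taking the maximum over the initial state $i$ and specializing to $t=\tau(\zeta)$ yields $\max_i\|\boldsymbol{\Pi}^*-\mathbf{P}^{\tau(\zeta)}(i,:)\|\le\zeta\lambda_P^{\tau(\zeta)}$, where each row of $\boldsymbol{\Pi}^*$ equals $(\pi^*)^\top$.

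I expect the spectral-decomposition step to be the crux: it hinges entirely on reversibility, so the proof must first establish detailed balance for (\ref{eq_10}); without it $\mathbf{P}$ need not admit a real orthonormal eigenbasis and the clean geometric rate $\lambda_P^t$ would have to be replaced by a weaker Jordan-block estimate. The remaining delicate point is the transition between the $\pi^*$-weighted norm (in which the decay is transparent) and the unweighted norm appearing in the lemma, together with matching the emerging constant to the free parameter $\zeta$; both become routine once reversibility and the eigenvalue bound $\lambda_*<1$ are in hand.
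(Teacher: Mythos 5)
Your argument is correct, but note that the paper itself does not prove Lemma \ref{lemma_10}: it is imported from \cite{8Mao}, and the surrounding text attributes the constant $\zeta$ to the Jordan canonical form of $\mathbf{P}$ following \cite{38Levin}. That cited route bounds the deviation from stationarity for a general (possibly non-reversible) aperiodic irreducible chain by splitting off $\mathbf{1}(\pi^*)^\top$ and estimating powers of possibly defective Jordan blocks, which produces terms of the form $t^{m-1}\lambda_*^t$; the deliberate slack in Definition \ref{definition_4}, namely $\lambda_P=(\lambda_*+1)/2>\lambda_*$, exists precisely so that these polynomial prefactors can be absorbed into $\zeta\lambda_P^t$ once $t\ge K_\mathbf{P}$. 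Your proof instead specializes to the chain the paper actually constructs: you observe that the MH rule (\ref{eq_10}) gives $\mathbf{P}_{i,j}=\min\{1/\operatorname{deg}(i),1/\operatorname{deg}(j)\}=\mathbf{P}_{j,i}$, hence a symmetric, doubly stochastic, reversible matrix with uniform $\pi^*$, and you then run the standard $L^2(\pi^*)$ spectral decomposition to obtain the row bound $\sqrt{\pi^*_{\max}/\pi^*_i}\,\lambda_*^t\le\zeta\lambda_P^t$. This buys a fully self-contained derivation with an explicit (essentially unit) constant and no need for the threshold $K_\mathbf{P}$, at the cost of generality: the lemma as stated, and Assumption \ref{assumption_5}, do not posit reversibility, so your argument covers only the reversible instantiation --- though that is the only one the paper uses, and the real ordered eigenvalues assumed in Definition \ref{definition_4} indicate that reversibility is intended throughout. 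The one cosmetic point worth flagging is the dimensional mismatch in the statement ($\boldsymbol{\Pi}^*$ is defined as an $n\times n$ matrix yet is subtracted from a single row of $\mathbf{P}^{\tau(\zeta)}$); your reading of it as the row $(\pi^*)^\top$ is the correct one.
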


The inequality indicates that, for any time $k$, the state distribution of state $i$ after $\tau(\zeta)$ steps of the Markov chain is constrained by $\zeta \lambda_{P}^{\tau(\zeta)}$ with respect to the stationary distribution $\boldsymbol{\Pi}^*$. As $k \geq K_\mathbf{P}$, $\zeta$ is a constant related to the Jordan canonical form of $\mathbf{P}$, $K_\mathbf{P}$ is a constant that depends on $\lambda_2(\mathbf{P})$ and $\lambda_K(\mathbf{P})$ \cite{38Levin}, and ${\mathbf{P}}^{\tau(\zeta)}$ is the ${\tau(\zeta)}$-th power of ${\mathbf{P}}$.

We employ a decentralized federated averaging approach for the weighted aggregation of local models. In the $t$-th communication round, device $i$ aggregates model parameters from a subset of its neighbors and updates using a locally weighted average \cite{25Ayache}
\begin{equation}
	\label{eq_4}
	\mathbf{w}_i^{t+1} \leftarrow \sum_{l \in \mathcal{N}_A(i)} \frac{n_l^t}{m^t} \mathbf{w}_l^t,
\end{equation}
where $\mathcal{N}_A(i)$ is the subset of devices that are directly connected to device $i$ and participate in the model aggregation. $m^t=\sum_{l \in \mathcal{N}_A(i)} n_l^t$, and $n_l^t$ denotes the number of local examples on device $l$.

\section{Proposed DFedRW and Its Quantized Version}
\subsection{Decentralized Federated Averaging via Random Walk}
Unlike DFedAvgM, our DFedRW replaces local updates with random walk updates among different devices to better combat statistical heterogeneity. In contrast to other decentralized random walk approaches that utilize a single update path \cite{8Mao,9Sun,35Shah}, DFedRW incorporates multiple parallel random walk trajectories and aggregates the local modes using a decentralized communication scheme. For clarity, the notations used in DFedRW are summarized in Table \ref{table_2}.

\begin{table}[!h]
	\renewcommand{\arraystretch}{1}
	\caption{Notations Used in DFedRW}
	\label{table_2}
	\centering
	\setlength{\tabcolsep}{5mm}{
		\begin{tabular}{p{0.3cm} p{6.5cm}}
			\toprule
			\textbf{\textbf{Notation}} & \textbf{Description} \\
			\midrule
			$i^{t,k}$ & The $k$-th device on the random walk trajectory in the $t$-th communication round \\
			$\mathbf{w}_i^{t, k}$ & Local model parameters held by device $i^{t,k}$ \\
			$\hat{\mathbf{g}}_{i^{t,k}}$  & Unbiased estimation of local gradient on device $i^{t,k}$ \\
			$\eta^{\bar{k}}$ & Globally decreasing learning rate \\
			$\mathbf{w}_i^{t, 0}$ & Initial model parameters on the device $i$ in the $t$-th communication round \\
			$\mathbf{w}_l^{t, last}$ & Pre-aggregated model parameters obtained by device $l$ during its last update in the $t$-th communication round \\
			$\mathcal{N}_A(i)$ & 
			Subset of devices participating in decentralized model aggregation on device $i$ \\
			$\mathcal{N}_c(i)$ & Subset of devices that receive the local model sent by device $i$ \\
			$\widetilde{\mathbf{w}}_{i}^{t,k}$ & Local model parameters held by $i^{t,k}$ in QDFedRW \\
			$\widetilde{\mathbf{g}}_{i^{t,k}}$ & Local gradient unbiased estimation on $i^{t,k}$ in QDFedRW  \\
			$\mathbf{Q}^{t,k}(i)$ & Quantized difference of updated parameters sent by device $i$ to the next device on a random walk trajectory
			 \\
			$\mathbf{Q}^{t}(i)$ & Quantized difference of updated parameters participating in global aggregation sent by device $i$ to $\mathcal{N}_c(i)$ \\
			\bottomrule
	\end{tabular}}
\end{table}

Therefore, the random walk update mechanism in DFedRW is not as straightforward as depicted in (\ref{eq_3}). In the $t$-th communication round, the $k$-th random walk update is obtained as
\begin{equation}
	\label{eq_5}
	\mathbf{w}_i^{t, k+1} \leftarrow \boldsymbol{\Pi}_{\mathcal{W}}\left(\mathbf{w}_i^{t, k}-\eta^{\bar{k}} \hat{\mathbf{g}}_{i^{t,k}}\right),
\end{equation}
where $\hat{\mathbf{g}}_{i^{t,k}} = \hat{\nabla} F_{i^{t,k}}(\mathbf{w}_i^{t, k})$ is the unbiased estimate of the local gradient at the walking device $i^{t,k}$, and $\eta^{\bar{k}}$ denotes the globally decreasing learning rate with $\bar{k}=(t-1)K+k$. For convenience, we omit the subscript of $\mathbf{w}_{i^{t,k}}^{t,k}$ and $\mathbf{w}_{i^{t,k+1}}^{t,k+1}$. In fact, $\mathbf{w}_i^{t,k}$ should be written as $\mathbf{w}_{i^{t,k}}^{t,k}$, indicating the local model parameters held by device $i^{t,k}$.

The main distinction between (\ref{eq_5}) and (\ref{eq_3}) lies in their handling of Markov chains on $\mathcal{G}$. In (\ref{eq_3}), there is only a single Markov chain performing serial random walk updates, and the final result is the global model learned at a specific device. However, in (\ref{eq_5}), we employ $M<n$ random walk trajectories to derive the model parameters $\mathbf{w}_i^{t, last}$ for all devices from their last update in the $t$-th communication round. These models are not the desired global model, they are merely intermediate results used for decentralized aggregation in the subsequent communication round.

Therefore, the decentralized weighted averaging aggregation scheme in (\ref{eq_4}) can rewritten as
\begin{equation}
	\label{eq_6}
	\mathbf{w}_i^{t+1,0} \leftarrow \sum_{l \in \mathcal{N}_A(i)} \frac{n_l^t}{m^t} \mathbf{w}_l^{t, last},
\end{equation}
where $\mathbf{w}_i^{t+1,0}$ is the initial model of the device $i$ in the ($t+1$)-th communication round. We denote $\mathbf{w}_l^{t, last}$ as the pre-aggregated model obtained by device $l$ after its last update on any random walk trajectory. Notably, any device in the system can participate in a decentralized aggregation, it is not limited to only those devices at the end of the random walk trajectory that possess the parameters $\mathbf{w}_i^{t, K_m-1}$, $m \in [M]$, where $K_m$ denotes the number of random walk updates in the $m$-th trajectory. That is, device $l$ may participate in updates on multiple random walk trajectories during the $t$-th communication round, but only uses the last updated parameters in the aggregation.

Given the system heterogeneity, the number of devices in each random walk trajectory is different. Consequently, the number of random walk epochs $K$ varies across each Markov chain. Specifically, in the $t$-th communication round, each of the $M$ chains performs $K_m$ random walk SGD to obtain model parameters. These parameters are then combined using decentralized aggregation with weighted averaging. Iterate the above update and aggregation process until convergence. The DFedRW algorithm is described in Algorithm \ref{alg_1}.

\begin{algorithm}[!h]
	\caption{DFedRW}
	\label{alg_1}
	\begin{algorithmic}[1]
		\STATE \textbf{Paremeters}: $\eta^0>0$, $M$, $K_m \in \mathbb{Z}^+$ \\
		\textbf{Initialization}: random network parameter $\mathbf{w}^{1,0}\in\mathcal{W}$
		
		\FOR{$t=1,2,\ldots$}
		\STATE Randomly and uniformly sample $M$ random walk initial devices $\{i_1^{t, 0}, i_2^{t, 0}, \ldots, i_M^{t, 0}\}$ from $\mathcal{V}$
		\FORALL{$m = \{1, 2, \ldots, M\}$ \textbf{in parallel}}
		\FOR{$k=\{0,2,\ldots,K_m-1\}$}
		\STATE Device $i^{t, k}(m)$ computes the gradient estimate $\hat{\mathbf{g}}_{i^{t,k}}$, where $i^{t, k}(m)$ is the updating device on the $m$-th trajectory.
		\STATE (\textit{training}) Device $i^{t, k}(m)$ performs the random walk SGD (\ref{eq_5}) and sends the updated parameters $\mathbf{w}_i^{t, k+1}$ to the next random neighbor $i^{t, k+1}(m)$
		\ENDFOR
		\ENDFOR
		\STATE (\textit{communication}) Device $i \in \mathcal{V}$ periodically performs weighted aggregation (\ref{eq_6}) of local models from the set of random neighbors $\mathcal{N}_A(i)$
		\ENDFOR
	\end{algorithmic}
\end{algorithm}

When the number of random walk epochs $K_m\geq1$ and consists entirely of self-loops, the DFedRW algorithm reduces to DFedAvgM without momentum (DFedAvg). Conversely, when $K_m=1$, DFedRW simplifies to DSGD.

Regarding the handling of stragglers, Algorithm 1 permits the random walk trajectory to perform a variable number of random walk updates based on the capabilities of the devices along the chain, rather than simply discarding chains that cannot complete the specified number of computational rounds. As a result, DFedRW allows integration of partial contributions from stragglers to address system heterogeneity. In fact, whether in centralized or decentralized FL frameworks, the statistical heterogeneity and the system heterogeneity are always interrelated \cite{26Li}. DFedRW removes the limitation that local updates can only be performed on a single device between two communication rounds, and supports flexible local update counts while addressing both system and statistical heterogeneity. This enhances the stability of distributed systems.

In the implementation of DFedRW, a global controller manages the training process by: 1) Selecting the initial devices for the random walk chain. 2) Assigning an aggregation timer $\varphi$ to the initial devices. This timer is passed along with the random walk and decays over time. When $\varphi=0$, device $i$ performs decentralized aggregation using the models of its subset of aggregation neighbors $\mathcal{N}_A(i)$. Additionally, the selection of random walk neighbors is determined by device $i$. Each device maintains an independent neighbor cache, storing recently discovered neighbors. Based on this cache, device $i$ identifies its connected neighbor subset $\mathcal{N}(i)$ and randomly selects the next random walk device $j$. To minimize unnecessary communication overhead, the neighbor cache is updated only when the device starts up, remains inactive for a prolonged period, or fails to communicate with a neighbor.

Although this paper does not investigate the data privacy, it is apparent that the use of random walk updates in DFedRW inherently provides a higher privacy level. This is because the random walk trajectories are not shared with any neighbors. In other words, each device only knows the information about the devices immediately preceding and succeeding it on the trajectory, not the entire set of devices on the trajectory. As a result, it is difficult for any adversary to infer the local training data of other devices without being able to trace back the entire trajectory. This offers higher privacy protection compared to methods involving multiple local iterations such as diffusion \cite{11Hashemi}, consensus \cite{39Wang}, DFedAvgM, and D-ADMM \cite{40Mota}.

\subsection{Efficient Communication via Quantization}
In DFedRW, device $i$ sends $\mathbf{w}_i$ to a random neighbor $j \in \mathcal{N}(i)$ during the update process and to its neighbor subset $\mathcal{N}_c(i)$ during the aggregation process. Therefore, when the number of random walk trajectories $M$ and the size of $|\mathcal{N}_c(i)|$ are large, the communication between devices becomes the primary bottleneck limiting the efficiency of the algorithm.

To address this, we consider using stochastic quantization techniques \cite{14Sun,41Alistarh} to reduce the communication cost of the algorithm. We achieve quantization by stochastically rounding the updated model parameters at the device to a set of discrete values, while preserving the original statistical properties. By ensuring that the expected value remains unchanged and introducing minimal variance, we quantize the normalized parameter $\frac{|w^\nu|}{\|\mathbf{w}\|}$ to a discrete set. Given a quantization interval $s>0$ and a quantization bit width $b \in \mathbb{Z}^{+}$, each $\frac{|w^\nu|}{\|\mathbf{w}\|}$ is stochastically mapped to a discrete set $\{0, s, 2 s, \ldots,(2^{b-1}-1) s\}$ , where one bit in $b$ is reserved as the sign bit. For any $\mathbf{w} \in \mathbb{R}^d$, the stochastic quantization function \cite{41Alistarh} is defined as:
\begin{equation}
	\label{eq_7}
	Q(w^\nu)=\left\{\begin{array}{ll}
		s \ell \|\mathbf{w}\| \cdot \operatorname{sgn}(w^\nu), \quad\text{w.p. } 1-\Phi(\mathbf{w},\nu,\ell); \\
		s (\ell+1)\|\mathbf{w}\| \cdot \operatorname{sgn}(w^\nu), \quad\text{otherwise},
	\end{array}\right.
\end{equation}
where $w^\nu$ is a component of $\mathbf{w}$, and $Q(\mathbf{w})=[Q(w^1), Q(w^2), \ldots, Q(w^d)]$. $\ell \in \mathbb{Z}^+$ satisfies $0 \leq \ell<\frac{1}{2^{b-1}}$ such that $\frac{|w^\nu|}{\|\mathbf{w}\|} \in[s \ell, s(\ell+1)]$. Clearly, for any $w^\nu \in \mathbf{w}$, $\nu \in [d]$, since $\Phi(\mathbf{w},\nu,\ell) = \frac{|w^\nu|}{s\|\mathbf{w}\|}-\ell$ is the relative position of $\frac{|w^\nu|}{\|\mathbf{w}\|}$ in the quantization interval, we have $\mathbb{E}[Q(w^\nu)]=\|\mathbf{w}\| \cdot \operatorname{sgn}(w^\nu) \cdot (s \ell(1-\Phi)+s (\ell+1)\Phi)=w^\nu$, which implies that our stochastic quantization scheme is unbiased.

For any quantized $w^\nu$, we output a tuple $(\Lambda, s,\|\mathbf{w}\|)$, where $\Lambda$ encodes the quantization index $\ell$ along with the sign bit. For vector $\mathbf{w}$, the additional cost of transmitting the 32-bit $\|\mathbf{w}\|$ and $s$ ensures relatively stable quantization error across a wide range of gradient scales.

\begin{lemma}[Bounded quantization variance]
	\label{lemma_2}
	For any $\mathbf{w} \in \mathbb{R}^d$, the variance between the quantized parameter $Q(\mathbf{w}) \in \mathbb{R}^d$ and its true value is bounded, that is, $\mathbb{E}\left[\|Q(\mathbf{w})-\mathbf{w}\|^2\right] \leq \frac{\sigma^2 d s^2}{4}$, where $\sigma$ is the bound of the parameter norm $\|\mathbf{w}\|$.
\end{lemma}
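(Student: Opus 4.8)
The plan is to prove the variance bound componentwise and then sum. First I would fix an index $\nu \in [d]$ and analyze $\mathbb{E}[(Q(w^\nu) - w^\nu)^2]$. From the definition of the stochastic quantizer in (\ref{eq_7}), $Q(w^\nu)$ takes the value $s\ell\|\mathbf{w}\|\cdot\operatorname{sgn}(w^\nu)$ with probability $1-\Phi(\mathbf{w},\nu,\ell)$ and the value $s(\ell+1)\|\mathbf{w}\|\cdot\operatorname{sgn}(w^\nu)$ with probability $\Phi(\mathbf{w},\nu,\ell)$, where $\Phi = \frac{|w^\nu|}{s\|\mathbf{w}\|}-\ell \in [0,1]$. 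Since the excerpt already establishes unbiasedness, $\mathbb{E}[Q(w^\nu)] = w^\nu$, so the quantity I need is exactly the variance of a two-point (Bernoulli-type) random variable supported on $s\ell\|\mathbf{w}\|\cdot\operatorname{sgn}(w^\nu)$ and $s(\ell+1)\|\mathbf{w}\|\cdot\operatorname{sgn}(w^\nu)$.

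The key step is the elementary fact that a random variable taking two values $a$ and $a+h$ with probabilities $1-\Phi$ and $\Phi$ has variance $h^2\Phi(1-\Phi)$. Here the gap between the two quantization levels is $h = s\|\mathbf{w}\|$ (the sign factor does not affect the spacing), so $\operatorname{Var}(Q(w^\nu)) = s^2\|\mathbf{w}\|^2\,\Phi(1-\Phi)$. Then I apply the standard bound $\Phi(1-\Phi)\leq \tfrac14$ for $\Phi\in[0,1]$, which gives $\mathbb{E}[(Q(w^\nu)-w^\nu)^2] \leq \frac{s^2\|\mathbf{w}\|^2}{4}$ for every $\nu$.

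Finally I would sum over the $d$ coordinates. Because the quantizations of distinct components share the common normalizing factor $\|\mathbf{w}\|$ but the rounding randomness is applied per coordinate, $\mathbb{E}[\|Q(\mathbf{w})-\mathbf{w}\|^2] = \sum_{\nu=1}^d \mathbb{E}[(Q(w^\nu)-w^\nu)^2] \leq d\cdot\frac{s^2\|\mathbf{w}\|^2}{4}$. Invoking the stated bound $\|\mathbf{w}\|\leq \sigma$ then yields $\mathbb{E}[\|Q(\mathbf{w})-\mathbf{w}\|^2] \leq \frac{\sigma^2 d s^2}{4}$, as claimed.

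I do not anticipate a genuine obstacle here; the only point requiring a little care is bookkeeping the $\ell$ that depends on $\nu$ (each coordinate lands in its own quantization cell, so $\ell = \ell(\nu)$), and confirming that the two-point variance formula applies verbatim after factoring out $\operatorname{sgn}(w^\nu)$ and $\|\mathbf{w}\|$. If one wanted a slightly sharper statement one could keep $\max_\nu \Phi_\nu(1-\Phi_\nu)$ instead of $\tfrac14$, but the $\tfrac14$ bound is what the lemma asks for and is tight in the worst case (midpoint of a cell).
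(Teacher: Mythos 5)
Your proposal is correct and follows essentially the same route as the paper's proof: the paper sets $\varpi=\frac{|w^\nu|}{\|\mathbf{w}\|}-s\ell$ and computes the per-coordinate second moment directly as $\|\mathbf{w}\|^2\varpi(s-\varpi)$, which is exactly your two-point variance $s^2\|\mathbf{w}\|^2\,\Phi(1-\Phi)$, then bounds it by $s^2\|\mathbf{w}\|^2/4$ and sums over coordinates. The only difference is cosmetic — you invoke the standard Bernoulli variance identity while the paper expands it by hand.
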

\begin{proof}
	Appendix B provides a detailed proof of Lemma \ref{lemma_2}.
\end{proof}

As indicated by Lemma \ref{lemma_2}, the stochastic quantization scheme achieves a balance between quantization precision and variance. Since we quantize the normalized parameters, the quantization interval $s$ is small, leading to a correspondingly smaller upper bound on the quantization variance.

In \cite{41Alistarh}, communication costs of SGD are reduced by transmitting quantized gradients. However, directly quantizing parameters or gradients is feasible for smooth loss functions, but poses challenges in deep neural networks. The paper \cite{14Sun} suggests quantizing the differences in model parameters to preserve performance, as quantizing these differences does not impact the continuity and smoothness of the loss function itself. For DFedAvgM, only the parameter differences transmitted in a decentralized manner during aggregation need to be quantized. Given the different update mechanisms, the communication process in our DFedRW involves not just decentralized aggregation but also the critical random walk updates. As $M$ or $K$ of parallel random walk trajectories increases, the communication cost scales proportionally, presenting greater challenges for our quantization scheme.

To mitigate the accumulation of quantization errors, QDFedRW adopts incremental quantization, similar to \cite{14Sun}. This method effectively reduces information loss avoids the severe error accumulation, particularly beneficial for DNNs with non-smooth loss functions or LSTMs that rely on gating mechanisms and long-term memory.
	
In addition, we quantize all communication processes in DFedRW. During the local update process of the random walk, (\ref{eq_5}) is reformulated as
\begin{equation}
	\label{eq_8}
	\begin{aligned}
		\mathbf{w}_{i}^{t, k+1} \leftarrow \boldsymbol{\Pi}_{\mathcal{W}}\left(\widetilde{\mathbf{w}}_{i}^{t,k}-\eta^{\bar{k}} \widetilde{\mathbf{g}}_{i^{t,k}}\right),
	\end{aligned}
\end{equation}
where $\widetilde{\mathbf{w}}_{i}^{t,k} = \mathbf{w}_{i}^{t, k}+\mathbf{Q}^{t,k}(j)$ and $\widetilde{\mathbf{g}}_{i^{t,k}} = \hat{\nabla} F_{i^{t,k}}\big(\mathbf{w}_{i}^{t, k}+\mathbf{Q}^{t,k}(j)\big)$.

The above equation can be described as follows. At the $k$-th step of the random walk, device $i^{t,k}$ receives the quantized update differences $\mathbf{Q}^{t,k}(j) = Q(\mathbf{w}_{j}^{t, k}-\mathbf{w}_{j}^{t, k-1})$ sent by device $j=i^{t,k-1}$. This received difference is added to the local model parameters $\mathbf{w}_{i}^{t, k}$ of device $i^{t,k}$, serving as a substitute for the pre-update local model. Device $i^{t,k}$ then uses this model to perform the next epoch of the random walk update based on its local data. At the aggregation time, device $l$ sends the quantized differences $\mathbf{Q}^{t}(i) = Q(\mathbf{w}_i^{t, last}-\mathbf{w}_i^{t, 0})$ between its final updated parameters and the initial parameters in this round to a subset $\mathcal{N}_c(l)$ of its neighbors. Device $i$ also periodically aggregates the quantized values received from its neighbors $\mathcal{N}_A(i)$ . The quantized version of (\ref{eq_6}) is reformulated as 
\begin{equation}
	\label{eq_9}
	\begin{aligned}
		\mathbf{w}_i^{t+1,0} \leftarrow \mathbf{w}_i^{t, 0}+\sum_{l \in \mathcal{N}_A(i)} \frac{n_l^t}{m^t} \mathbf{Q}^{t}(l).
	\end{aligned}
\end{equation}

We will analyze the convergence of the quantized DFedRW from the perspective of transmitting differences in the next section. The quantized DFedRW algorithm is presented in Algorithm \ref{alg_2}.

\renewcommand{\thefootnote}{\P}
\begin{algorithm}[!h]
	\caption{Quantized DFedRW (QDFedRW)}
	\label{alg_2}
	\begin{algorithmic}[1]
		\STATE \textbf{Paremeters}: $\eta^0>0$, $M$, $K_m \in \mathbb{Z}^+$, $s$ \\
		\textbf{Initialization}: random network parameter $\mathbf{w}^{1,0}\in\mathcal{W}$
		
		\FOR{$t=1,2,\ldots$}
		\STATE Randomly and uniformly sample $M$ random walk initial devices $\{i_1^{t, 0}, i_2^{t, 0}, \ldots, i_M^{t, 0}\}$ from $\mathcal{V}$
		\FORALL{$m = \{1, 2, \ldots, M\}$ \textbf{in parallel}}
		\FOR{$k=\{0,2,\ldots,K_m-1\}$}
		\STATE Device $i^{t, k}(m)$ computes an unbiased estimate $\widetilde{\mathbf{g}}_{i^k}$ of the gradient using local $\xi_i$, $\mathbf{w}_i^{t, k}$, and $\mathbf{Q}^{t,k}(j)$ received from neighbor $i^{t, k-1}(m)$ 
		\STATE (\textit{training}) Device $i^{t, k}(m)$ performs the random walk SGD (\ref{eq_8}), sends the updated parameters difference $\mathbf{Q}^{t,k+1}(i)$ and the quantization precision $s$ to the next random neighbor $i^{t, k+1}(m)$
		\ENDFOR
		\ENDFOR
		\STATE (\textit{communication}) Device $i \in \mathcal{V}$ periodically performs weighted aggregation (\ref{eq_9}) of local models from the set of random neighbors $\mathcal{N}_A(i)$
		\ENDFOR
	\end{algorithmic}
\end{algorithm}

In each update of the random walk, device $i$ in DFedRW sends $32d$ bits of model parameters to a random neighbor. In contrast, quantized DFedRW only requires sending ($64+bd$) bits of quantized parameters, which includes 32 bits for the quantization precision parameter $s$, 32 bits for $\|\mathbf{w}\|$, and $bd$ bits for the quantization index and sign bit. In each communication round, device $i$ in DFedRW and quantized DFedRW sends $32 d\left|\mathcal{N}_c(i)\right|$ bits and $(64+b d)\left|\mathcal{N}_c(i)\right|$ bits of parameters, respectively, to a subset of random neighbors. Consequently, quantized DFedRW can significantly reduce communication overhead when the model dimension $d$ is large, and the number of quantization bits $b$ is less than the original $32$ bits.

\section{Convergence Analysis}
\subsection{Assumptions}
In this section, we present the assumptions required for proving the convergence of DFedRW. The assumptions are commonly used in theoretical analyses within the field of machine learning \cite{9Sun}, \cite{25Ayache}, \cite{32Sun}.

\newtheorem{assumption}{Assumption}
\begin{assumption}
	\label{assumption_1}
	The local loss function $F_i$ at device $i$ is convex and differentiable with respect to $\mathcal{W} \subseteq \mathbb{R}^d$. For any $\mathbf{w}_1, \mathbf{w}_2 \in \mathcal{W}$, we have $F_i(\mathbf{w}_1) \geq F_i(\mathbf{w}_2)+\nabla F_i(\mathbf{w}_2)^{\top}(\mathbf{w}_1-\mathbf{w}_2)$.
\end{assumption}

\begin{assumption}
	\label{assumption_3}
	Decreasing step size $\eta^k$ satisfies $\sum_{k=1}^{\infty} \eta^k=+\infty$, and $\sum_{k=1}^{\infty} \ln k \cdot(\eta^k)^2<+\infty$.
\end{assumption}

\begin{assumption}
	\label{assumption_5}
	The Markov chain $(i^k)_{k \geq 0}$ is time-homoge-\\neous, irreducible, and aperiodic. It has a transition matrix $\mathbf{P}$ and a stationary distribution $\pi^*$.
\end{assumption}

\subsection{Convergence Analysis of DFedRW and its Quantified Version}
Our objective is to delineate the factors that influence the asymptotic convergence bound of (quantized) DFedRW. Our primary contribution is the integration of random walk with decentralized optimization, enriched by the inclusion of more challenging settings of statistical and system heterogeneity. This framework provides a comprehensive convergence analysis for DFedRW under convex conditions. Furthermore, we conduct an analysis and comparison of the quantized DFedRW. 

For convex and $L$-smooth settings, SGD has a well-established convergence bound of order $\mathcal{O}(\frac{1}{\sqrt{k}})$ \cite{13Lian}. It has been demonstrated in \cite{32Sun} and \cite{33Ayache}, that random walk SGD with decreasing step sizes achieves a convergence bound of order $\mathcal{O}(\frac{1}{k^{1-q}})$ for convex loss functions. In the subsequent convergence analysis, we show that the proposed DFedRW algorithm also achieves a similar convergence bound. Furthermore, the constants in the convergence bound are influenced by the heterogeneities setting and the quantization technique employed.

As shown in (\ref{eq_5}) and (\ref{eq_8}), the random walk trajectory remains unbroken by the aggregation process. In other words, the random walk before and after aggregation forms a continuous trajectory rather than two independent trajectories. Our convergence analysis is conducted from this perspective.

\newtheorem{theorem}{Theorem}
\begin{theorem}[Convergence of DFedRW]
	\label{theorem_1}
	Under Assumptions \ref{assumption_1}, \ref{assumption_3} and \ref{assumption_5}, let $\eta^{k}=\mathcal{O}(\frac{1}{k^q})$, $\frac{1}{2}<q<1$, then the following convergence rate is upper bounded by
	\begin{equation}
	\begin{aligned}
		& \mathbb{E}\left[f\left(\overline{\mathbf{w}}^{k}\right)-f\left(\mathbf{w}^*\right)\right] \\
		& \quad \leq \frac{\frac{1}{2}\left\|\mathbf{w}^{0}-\mathbf{w}^*\right\|^2+\psi\left(n, \lambda_P\right)+\kappa(\delta, \hat{\gamma})+\lambda(n)}{k^{1-q}},
	\end{aligned}
    \end{equation}
	where
	\vspace{-2ex}
	\[\begin{aligned}
	    \overline{\mathbf{w}}^{k}=\frac{\sum_{i=1}^k \eta^{i} \mathbf{w}^{i}}{\sum_{i=1}^k \eta^{i}}, \quad\lambda(n)=\sum_{k} \frac{n \eta^{k}}{2 k},
    \end{aligned}\]\\
    \vspace{-2ex}
	\[\begin{aligned}
		\psi\left(n, \lambda_P\right)=\frac{(1+n) \cdot 2 D^2 \sum_{k=\bar{K}}^{\infty} \ln k \cdot(\eta^{k})^2}{\ln \left(1 / \lambda_P\right)},
	\end{aligned}\]\\
	\vspace{-2ex}
	and \\
	\vspace{-2ex}
	\[\begin{aligned}
	    \kappa(\delta, \hat{\gamma})=\frac{\left(\delta^2+\hat{\gamma}^2\right) D^2}{4} \sum_{k}(\eta^{k})^2.
    \end{aligned}\]
\end{theorem}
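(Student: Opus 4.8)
The plan is to track the squared distance $\|\mathbf{w}^k - \mathbf{w}^*\|^2$ along the unbroken random walk trajectory (treating aggregation as just another step in the chain, per the remark preceding the theorem), derive a one-step recursion, and then sum it with the weights $\eta^k$ to obtain a telescoping bound on $\sum_k \eta^k \mathbb{E}[f(\mathbf{w}^k) - f(\mathbf{w}^*)]$. First I would use the nonexpansiveness of the projection $\boldsymbol{\Pi}_{\mathcal{W}}$ together with the update rule (\ref{eq_5}) to write $\|\mathbf{w}^{k+1}-\mathbf{w}^*\|^2 \le \|\mathbf{w}^k-\mathbf{w}^*\|^2 - 2\eta^{\bar k}\langle \hat{\mathbf{g}}_{i^k}, \mathbf{w}^k - \mathbf{w}^*\rangle + (\eta^{\bar k})^2\|\hat{\mathbf{g}}_{i^k}\|^2$. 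Taking conditional expectation over the stochastic gradient makes the last term controlled by $D^2$ (a uniform gradient bound over the bounded set $\mathcal{W}$, which furnishes the $D^2$ appearing in $\psi$ and $\kappa$), and turns the cross term into $-2\eta^{\bar k}\langle \nabla F_{i^k}(\mathbf{w}^k), \mathbf{w}^k - \mathbf{w}^*\rangle$. Here is where convexity (Assumption \ref{assumption_1}) enters: $\langle \nabla F_{i^k}(\mathbf{w}^k), \mathbf{w}^k - \mathbf{w}^*\rangle \ge F_{i^k}(\mathbf{w}^k) - F_{i^k}(\mathbf{w}^*)$. The difficulty is that this is the loss of a \emph{single} device $i^k$ chosen by the Markov chain, not the global loss $f$; converting $\mathbb{E}_{i^k\sim \mathbf{P}^k}[F_{i^k}(\mathbf{w}^k)]$ into $\mathbb{E}_{i\sim \pi^*}[F_i(\mathbf{w}^k)] = f(\mathbf{w}^k)$ (up to the correct $1/n$ normalization since $\pi^*$ is uniform under the MH chain) is exactly the Markov-chain sampling-bias correction.

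The main obstacle, as anticipated, is this Markov-chain mixing argument. I would handle it by splitting $F_{i^k}(\mathbf{w}^k) = \big(F_{i^k}(\mathbf{w}^k) - f(\mathbf{w}^k)\big) + f(\mathbf{w}^k)$ and bounding the expectation of the first bracket using Lemma \ref{lemma_10}: after $\tau(\zeta)$ steps the state distribution is within $\zeta\lambda_P^{\tau(\zeta)}$ of $\boldsymbol{\Pi}^*$ in the relevant norm, so $|\mathbb{E}[F_{i^k}(\mathbf{w}^k) - f(\mathbf{w}^k)]| \le 2D\cdot\|\mathbf{w}^k - \mathbf{w}^{k-\tau}\|\cdot(\text{stationarity gap}) + (\text{drift of } \mathbf{w} \text{ over } \tau \text{ steps})$. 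Choosing $\tau = \tau_k \asymp \log k / \log(1/\lambda_P)$ makes $\lambda_P^{\tau_k}$ decay like $1/k$, and the parameter drift over $\tau_k$ steps is $\le \sum_{j=k-\tau_k}^{k-1}\eta^j D \lesssim \tau_k \eta^k D$; multiplying by the outer weight $\eta^k$ and using $\tau_k \lesssim \log k$ produces terms of the form $\eta^k \cdot \log k \cdot \eta^k D^2$, whose sum is finite by Assumption \ref{assumption_3} — this is precisely the origin of $\psi(n,\lambda_P) = \frac{(1+n)\,2D^2\sum_{k\ge\bar K}\ln k\,(\eta^k)^2}{\ln(1/\lambda_P)}$. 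The factor $(1+n)$ and the separate term $\lambda(n) = \sum_k \frac{n\eta^k}{2k}$ come from the decentralized aggregation step (\ref{eq_6}): aggregating over $\mathcal{N}_A(i)$ introduces a consensus error across the $M$ parallel trajectories, which one bounds by the same bounded-gradient/bounded-domain estimates scaled by $n$, and the $1/k$ factor there reflects that the aggregation contribution per round is amortized against $\eta^k$.

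Next I would incorporate the heterogeneity. Using Definition \ref{definition_1}, the term $\sum_k(\eta^k)^2\mathbb{E}_i\|\nabla F_i(\mathbf{w}^k)\|^2 \le \delta^2\sum_k(\eta^k)^2\|\nabla f(\mathbf{w}^k)\|^2 \le \delta^2 D^2\sum_k(\eta^k)^2$ — and using Lemma \ref{lemma_1}, the inexactness of the random walk solution contributes a term $\hat\gamma^2\|\nabla F(\mathbf{w})\|^2 \le \hat\gamma^2 D^2$; together these give $\kappa(\delta,\hat\gamma) = \frac{(\delta^2+\hat\gamma^2)D^2}{4}\sum_k(\eta^k)^2$, finite again by Assumption \ref{assumption_3}. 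Finally, I would sum the one-step recursion over $k=1,\dots,K$: the distance terms telescope to $\tfrac12\|\mathbf{w}^0-\mathbf{w}^*\|^2$, the error terms collect into $\psi + \kappa + \lambda(n)$, and on the left I would apply Jensen's inequality to $\overline{\mathbf{w}}^k = \frac{\sum_i \eta^i \mathbf{w}^i}{\sum_i \eta^i}$ (valid by convexity of $f$) to get $\sum_i\eta^i(f(\mathbf{w}^i)-f(\mathbf{w}^*)) \ge \big(\sum_i\eta^i\big)\big(f(\overline{\mathbf{w}}^k)-f(\mathbf{w}^*)\big)$. Dividing through by $\sum_{i=1}^k\eta^i$ and using $\eta^i = \mathcal{O}(i^{-q})$ with $\tfrac12 < q < 1$, so that $\sum_{i=1}^k\eta^i = \Theta(k^{1-q})$, yields the claimed $\mathcal{O}(k^{-(1-q)})$ rate with exactly the stated constants. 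The condition $q>\tfrac12$ is what guarantees $\sum(\eta^k)^2 < \infty$ and $\sum \ln k\,(\eta^k)^2 < \infty$, i.e. it makes Assumption \ref{assumption_3} compatible with the polynomial step size; the condition $q<1$ is what keeps $\sum\eta^k$ divergent so the denominator grows.
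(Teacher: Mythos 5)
Your overall strategy is the paper's: non-expansiveness of $\boldsymbol{\Pi}_{\mathcal{W}}$ plus expansion of the squared distance, convexity to produce $F_{i^k}(\mathbf{w}^k)-F_{i^k}(\mathbf{w}^*)$, a mixing-time correction with $\tau^k \asymp \ln(2\zeta k)/\ln(1/\lambda_P)$ so the stationarity gap decays like $1/(2k)$, drift bounds of the form $\tau^k(\eta^k)^2 D^2$ summed via Assumption \ref{assumption_3}, the $\delta^2$ and $\hat\gamma^2$ bounds feeding into $\kappa$, telescoping, Jensen, and division by $\sum_{i\le k}\eta^i = \Theta(k^{1-q})$. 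All of that matches Appendix A.

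There is, however, one concrete misattribution in your accounting of the constants. You claim the $(1+n)$ prefactor in $\psi(n,\lambda_P)$ and the term $\lambda(n)=\sum_k \frac{n\eta^k}{2k}$ arise from a consensus-error analysis of the decentralized aggregation step (\ref{eq_6}) across the $M$ parallel trajectories. That is not where they come from in the paper, and it is also inconsistent with your own opening move of treating aggregation as just another step of one unbroken chain (the paper never performs a separate consensus analysis in this proof). In the paper, $\lambda(n)$ is the finite-mixing bias from Lemma \ref{lemma_6}: when $\mathbb{E}_{i^k}[F_{i^k}(\mathbf{w}^{k-\tau^k})]$ is replaced by $f(\mathbf{w}^{k-\tau^k})=\frac1n\sum_i F_i(\mathbf{w}^{k-\tau^k})$, the row of $\mathbf{P}^{\tau^k}$ differs from $\boldsymbol{\Pi}^*$ by at most $1/(2k)$ in each of $n$ coordinates, producing the $n/(2k)$ error. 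The $(1+n)$ arises because the $\tau^k$-step parameter drift is paid \emph{twice}: once to move $F_{i^k}$ from $\mathbf{w}^{k-\tau^k}$ back to $\mathbf{w}^k$ (Lemma \ref{lemma_5}, Lipschitz constant $D$, contributing the ``$1$'') and once to move $f$ from $\mathbf{w}^{k-\tau^k}$ back to $\mathbf{w}^k$ (Lemma \ref{lemma_7}, where the paper uses Lipschitz constant $nD$ for $f$, contributing the ``$n$''). If you pursued your version literally you would be attempting a genuinely different — and harder — consensus analysis, and it would not obviously reproduce these exact terms; you should instead route both terms through the two-sided drift correction and the mixing bias as above. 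With that fix, the rest of your argument goes through as written.
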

\begin{proof}
	The proof of Theorem \ref{theorem_1} is provided in Appendix A.
\end{proof}

From Theorem \ref{theorem_1}, it is evident that the convergence bound $\mathcal{O}(\frac{1}{k^{1-q}})$ of DFedRW under convexity assumptions is essentially the same as $\mathcal{O}(\frac{1}{\sqrt{k}})$ of SGD. The convergence efficiency of the algorithm is significantly influenced by factors such as the topology of the communication network and heterogeneities. Specifically, superior expansion properties of the graph lead to an enhanced convergence rate. The Non-IID of local data among participating devices exacerbates the inconsistency between local objective functions, thereby slowing down the convergence rate of DFedRW. Additionally, the computational capabilities of devices along the random walk trajectory directly impact the performance of algorithm. Specifically, devices with stronger computational power exhibit a smaller $\hat{\gamma}$, facilitating faster convergence of the algorithm. The analytical conclusions drawn above are highly consistent with the results obtained from numerical simulations in the subsequent section.

Subsequently, we provide a quantified convergence guarantee.

\begin{theorem}[Convergence of quantized DFedRW]
	\label{theorem_2}
	Assume Assumptions \ref{assumption_1}, \ref{assumption_3} and \ref{assumption_5} hold, let $\eta^{k}=\mathcal{O}(\frac{1}{k^q})$, $\frac{1}{2}<q<1$, $\tau^{k}=\min \{k, \max \{ \lceil \frac{\ln (2 \zeta k)}{\ln \left(1 / \lambda_P\right)} \rceil ,K_\mathbf{P}\}\}$, then the following convergence rate is upper bounded by
		\begin{equation}
		\begin{aligned}			
			& \mathbb{E}\left[f\left(\overline{\mathbf{w}}^{k}\right)-f\left(\mathbf{w}^*\right)\right] \\
			& \leq \frac{\frac{1}{2}\left\|\mathbf{w}^{0}-\mathbf{w}^*\right\|^2+\psi\left(n, \lambda_P\right)+\kappa(\delta, \hat{\gamma})+\lambda(n)+\psi(d, s)}{k^{1-q}},
		\end{aligned}
		\end{equation}
		where
		\vspace{-2ex}
		\[
		\begin{aligned}
			\psi(d, s) = (1+n) \frac{D \sigma \sqrt{d} s}{2} \sum_{k} \eta^{k} \tau^{k}.
		\end{aligned}
		\]
\end{theorem}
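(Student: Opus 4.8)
\textbf{Proof proposal for Theorem \ref{theorem_2}.}

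The plan is to build directly on the convergence machinery developed for Theorem \ref{theorem_1} and to isolate exactly where quantization enters. First I would set up the standard one-step descent inequality: starting from the projection-based update (\ref{eq_8}) and using nonexpansiveness of $\boldsymbol{\Pi}_{\mathcal{W}}$, expand $\|\mathbf{w}^{k+1}-\mathbf{w}^*\|^2$ to obtain a recursion of the form
\[
\mathbb{E}\left[\|\mathbf{w}^{k+1}-\mathbf{w}^*\|^2\right] \leq \mathbb{E}\left[\|\mathbf{w}^{k}-\mathbf{w}^*\|^2\right] - 2\eta^{k}\mathbb{E}\left[\langle \nabla F_{i^k}(\mathbf{w}^k), \mathbf{w}^k-\mathbf{w}^*\rangle\right] + (\eta^{k})^2 \mathbb{E}\left[\|\widetilde{\mathbf{g}}_{i^k}\|^2\right] + (\text{quantization cross-terms}).
\]
The cross-terms arise because $\widetilde{\mathbf{w}}_i^{t,k} = \mathbf{w}_i^{t,k} + \mathbf{Q}^{t,k}(j)$ replaces the true pre-update model; using the unbiasedness $\mathbb{E}[Q(\mathbf{w})]=\mathbf{w}$ (the calculation just before Lemma \ref{lemma_2}) kills the first-order quantization term in expectation, and Lemma \ref{lemma_2} bounds the residual second-order term by $\tfrac{\sigma^2 d s^2}{4}$ per communicated difference. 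I would also handle the gradient-evaluation-at-perturbed-point issue by Lipschitz/bounded-gradient arguments so that $\|\widetilde{\mathbf{g}}_{i^k} - \hat{\mathbf{g}}_{i^k}\|$ is controlled by $\|\mathbf{Q}^{t,k}(j)\|$.

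Next I would reproduce, essentially verbatim from the Theorem \ref{theorem_1} argument, the three non-quantization contributions: the Markov-chain mixing error handled via Lemma \ref{lemma_10} and Definition \ref{definition_4} with the choice $\tau^{k}=\min\{k,\max\{\lceil \ln(2\zeta k)/\ln(1/\lambda_P)\rceil, K_\mathbf{P}\}\}$ (this is precisely the mixing time that makes $\zeta\lambda_P^{\tau^k}\leq \tfrac{1}{2k}$, which is why $\lambda(n)=\sum_k \tfrac{n\eta^k}{2k}$ appears), yielding $\psi(n,\lambda_P)$; the heterogeneity terms $\kappa(\delta,\hat\gamma)$ coming from Definition \ref{definition_1} and Lemma \ref{lemma_1}; and the bias correction $\lambda(n)$. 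These are untouched by quantization. The new term is $\psi(d,s)$: each time a quantized difference is transmitted along a random-walk step or in aggregation, Lemma \ref{lemma_2} contributes a term of order $\eta^k \cdot D\sigma\sqrt{d}\,s$ (the $D$ from the bounded feasible set / bounded gradients, used to pair the quantization error with $\|\mathbf{w}^k-\mathbf{w}^*\|$ via Cauchy–Schwarz), and the factor $\tau^k$ enters because the mixing-time windowing used to decorrelate $i^k$ from $\mathbf{w}^k$ forces one to sum the quantization error over a block of length $\tau^k$ before telescoping; the factor $(1+n)$ is the same aggregation/averaging constant that multiplies $\psi(n,\lambda_P)$.

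Then I would telescope the recursion from $k=0$, divide by $\sum_{i=1}^k \eta^i = \Theta(k^{1-q})$ (using $\eta^k=\mathcal{O}(k^{-q})$, $\tfrac12<q<1$), and apply convexity (Assumption \ref{assumption_1}) with Jensen's inequality to move from $\tfrac{1}{\sum \eta^i}\sum \eta^i(f(\mathbf{w}^i)-f(\mathbf{w}^*))$ to $f(\overline{\mathbf{w}}^k)-f(\mathbf{w}^*)$. Convergence of the numerator sums $\sum_k (\eta^k)^2$, $\sum_k \ln k\,(\eta^k)^2$ is guaranteed by Assumption \ref{assumption_3}; the only genuinely new convergence requirement is that $\sum_k \eta^k \tau^k < \infty$, which holds because $\tau^k = \mathcal{O}(\ln k)$ and $\sum_k \eta^k \ln k < \infty$ again follows from Assumption \ref{assumption_3} (indeed $\ln k\,(\eta^k)^2$ summable with $q>\tfrac12$ gives $\eta^k\ln k$ summable). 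The main obstacle, and the step I would spend the most care on, is the quantization error bookkeeping over a full communication round: unlike DFedAvgM where only the aggregation step is quantized, here every one of the $K_m$ random-walk hops quantizes a difference, so the per-round accumulated error must be shown to telescope cleanly (incremental/difference quantization is essential here) rather than compound multiplicatively — making rigorous that the $\mathbf{Q}^{t,k}(j) = Q(\mathbf{w}_j^{t,k}-\mathbf{w}_j^{t,k-1})$ scheme produces an additive $\mathcal{O}(\sqrt{d}\,s)$ error per hop, coupled correctly with the $\tau^k$-length mixing block, is the crux.
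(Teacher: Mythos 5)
Your overall strategy matches the paper's: the non-quantized terms are carried over unchanged from Theorem \ref{theorem_1}, and the new term $\psi(d,s)$ arises exactly where you say it does --- in the mixing-window sums $\sum_{n=k-\tau^k}^{k-1}\mathbb{E}\|\mathbf{w}^{n+1}-\mathbf{w}^{n}\|$ of Lemmas \ref{lemma_5} and \ref{lemma_7}, where each transmitted increment is replaced by its quantized version and the triangle inequality plus Cauchy--Schwarz applied to Lemma \ref{lemma_2} yields an additive $\frac{\sigma\sqrt{d}s}{2}$ per hop, giving the factor $\eta^k\tau^k$ and the $(1+n)$ from the two window sums. One remark: the paper does not actually touch the one-step descent recursion (\ref{eq_12}) at all --- it reuses it verbatim and localizes the entire quantization effect in the window terms --- so your ``quantization cross-terms killed by unbiasedness'' step is unnecessary bookkeeping, though not harmful.

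There is, however, one genuinely false claim in your write-up: you assert that $\sum_k \eta^k\tau^k<\infty$ because ``$\ln k\,(\eta^k)^2$ summable with $q>\tfrac12$ gives $\eta^k\ln k$ summable.'' This is a non sequitur: with $\eta^k=\Theta(k^{-q})$ and $\tfrac12<q<1$, Assumption \ref{assumption_3} gives $\sum_k \ln k\cdot k^{-2q}<\infty$ since $2q>1$, but $\sum_k \ln k\cdot k^{-q}$ diverges because $q<1$. The paper never needs (and never claims) summability of $\sum_k\eta^k\tau^k$; that sum simply sits in the numerator of the stated bound, exactly as $\psi(d,s)$ is written in Theorem \ref{theorem_2}. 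So you should drop that justification --- the proof goes through without it, but as stated your argument contains a step that is wrong.
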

\begin{proof}
	The proof of Theorem \ref{theorem_2} is provided in Appendix B.
\end{proof}

Theorem \ref{theorem_2} establishes the relationship between the convergence rate of quantized DFedRW and the parameter dimensions $d$ and quantization interval $s$. Compared to the non-quantized version, the quantized version introduces an additional upper bound term of order $\mathcal{O}(\sqrt{d}s)$, resulting in a more relaxed bound. This implies that as $d$ decreases and quantization precision improves, the convergence bound of quantized DFedRW approaches that of the original non-quantized version. This observation suggests that while quantization reduces the amount of parameters transmitted during the communication process, it may require more communication rounds. However, revisiting the scenario where $\sqrt{d}s$ is relatively small and heterogeneities are strong, the convergence bound in Theorem \ref{theorem_2} is dominated by $\psi(n, \lambda_P)$ and $\kappa(\delta, \hat{\gamma})$. Therefore, the greater the heterogeneities, the more pronounced the rate advantage of quantized DFedRW. This can be verified in numerical simulations.

We proceed to investigate the sufficient conditions for saving communication cost in quantized DFedRW.

\newtheorem{proposition}{Proposition}
\begin{proposition}
	\label{proposition_1}
	Under Assumptions \ref{assumption_1}, \ref{assumption_3} and \ref{assumption_5}, let $\eta^{k}=\mathcal{O}(\frac{1}{k^q})$, $\frac{1}{2}<q<1$. The quantized DFedRW requires more iterations $k_q$ for convergence compared to the non-quantized version $k_{nq}$, i.e., $T_q=\varrho T_{nq}$, $\varrho>1$. Using $b$ bits to quantize DFedRW, if the expected error satisfies
	\begin{equation}
	\epsilon>\frac{(1+n) D \sigma \sqrt{d} s} {\sqrt{\ln \left(1 / \lambda_P\right)}},
	\end{equation}
	and $b<\frac{32}{\varrho}-\frac{64}{d}$, then the $b$ bits quantized DFedRW requires less communication overhead than the 32 bits DFedRW.
\end{proposition}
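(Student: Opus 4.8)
The strategy is to translate ``communication overhead'' into a total bit count and compare the two algorithms factor by factor, separating out (i) the per-iteration communication cost and (ii) the number of iterations needed to reach $\epsilon$-accuracy $\mathbb{E}[f(\overline{\mathbf{w}}^{k})-f(\mathbf{w}^*)]\le\epsilon$. From the accounting after Algorithm~\ref{alg_2}, every random-walk transmission costs $32d$ bits in DFedRW versus $64+bd$ bits in QDFedRW ($32$ for $s$, $32$ for $\|\mathbf{w}\|$, $bd$ for the quantization indices and sign bits), and the periodic decentralized aggregation scales by exactly the same factor; hence QDFedRW's accumulated communication after running to its stopping iteration satisfies
\[
\frac{\mathrm{Cost}(\text{QDFedRW})}{\mathrm{Cost}(\text{DFedRW})}=\frac{64+bd}{32d}\cdot\frac{T_q}{T_{nq}},
\]
so the claim reduces to $(64+bd)\,T_q<32d\,T_{nq}$. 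The same ratio holds per device, so the conclusion also covers the busiest node.

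Next I would pin down the stopping rules from the two convergence theorems. By Theorem~\ref{theorem_1}, $\epsilon$-accuracy is guaranteed once $C_{nq}/k^{1-q}\le\epsilon$, where $C_{nq}=\tfrac12\|\mathbf{w}^0-\mathbf{w}^*\|^2+\psi(n,\lambda_P)+\kappa(\delta,\hat\gamma)+\lambda(n)$, so $T_{nq}=\lceil(C_{nq}/\epsilon)^{1/(1-q)}\rceil$. By Theorem~\ref{theorem_2} the quantized bound carries the extra nonnegative term $\psi(d,s)=(1+n)\tfrac{D\sigma\sqrt d\,s}{2}\sum_k\eta^k\tau^k$, so $T_q=\lceil((C_{nq}+\psi(d,s))/\epsilon)^{1/(1-q)}\rceil$, which is strictly larger; this is exactly the relation $T_q=\varrho T_{nq}$ with $\varrho:=T_q/T_{nq}\approx(1+\psi(d,s)/C_{nq})^{1/(1-q)}>1$ asserted in the statement. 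The hypothesis $\epsilon>\frac{(1+n)D\sigma\sqrt d\,s}{\sqrt{\ln(1/\lambda_P)}}$ enters here: it places $\epsilon$ above the quantization ``noise floor'' carried by $\psi(d,s)$. Concretely, I would bound $\tau^k\le\frac{\ln(2\zeta k)}{\ln(1/\lambda_P)}$ and invoke the step-size summability of Assumption~\ref{assumption_3} to show that, under this lower bound on $\epsilon$, the quantized bound does reach $\epsilon$ at a finite $T_q$ and that $\varrho$ is dominated by an explicit constant built from the heterogeneity/topology terms already present in $C_{nq}$ — consistent with the remark after Theorem~\ref{theorem_2} that strong heterogeneity drives the quantized and unquantized rates together (i.e. $\varrho$ near $1$).

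Finally, substituting $T_q=\varrho T_{nq}$ into $(64+bd)T_q<32d\,T_{nq}$ cancels $T_{nq}$ and leaves $(64+bd)\varrho<32d$, i.e. $bd\varrho<32d-64\varrho$, i.e. $b<\frac{32}{\varrho}-\frac{64}{d}$ — precisely the stated condition. Note the interval for $b$ is nonempty exactly when $\varrho<d/2$, a mild consequence of the $\epsilon$ hypothesis combined with $b\ge 1$. The main obstacle is the middle step: making the slowdown factor $\varrho$ explicit requires a careful estimate of $\psi(d,s)$, whose defining sum $\sum_k\eta^k\tau^k$ couples the decaying step sizes with the slowly growing mixing times $\tau^k$, and then verifying that $\epsilon>\frac{(1+n)D\sigma\sqrt d\,s}{\sqrt{\ln(1/\lambda_P)}}$ is the right threshold to keep $\psi(d,s)/C_{nq}$ — and hence $\varrho$ — finite and controlled. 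Once $\varrho$ is under control, the reduction of the bit-count comparison to $b<\frac{32}{\varrho}-\frac{64}{d}$ is routine algebra.
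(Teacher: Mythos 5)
Your endgame is exactly the paper's: both per-transmission costs ($32d$ versus $64+bd$) multiply the same structural factor (the paper writes it as $(|\mathcal{N}_c(i)|+K-1)$ per round), so the factor cancels and the comparison reduces to $32d > \varrho(64+bd)$, i.e.\ $b<\frac{32}{\varrho}-\frac{64}{d}$. That part of your argument matches the paper's Appendix C step for step, and your observation that the condition is nonvacuous only when $\varrho < d/2$ is a correct (and unstated-in-the-paper) sanity check.

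The gap — which you name yourself but do not close — is the derivation of the threshold $\epsilon>\frac{(1+n)D\sigma\sqrt{d}s}{\sqrt{\ln(1/\lambda_P)}}$. Two points here. First, your closed-form $T_q=\lceil((C_{nq}+\psi(d,s))/\epsilon)^{1/(1-q)}\rceil$ treats $\psi(d,s)$ as a constant, but $\psi(d,s)=\Upsilon\sum_{k}\eta^k\tau^k$ with $\Upsilon=(1+n)\frac{D\sigma\sqrt{d}s}{2}$ grows with the horizon like $T_q^{1-q}\ln T_q$, so the extra term $\psi(d,s)/k^{1-q}$ does not vanish as $k\to\infty$ — it is a genuine noise floor, and the whole content of the $\epsilon$ condition is to quantify that floor. (You gesture at this coupling in your last paragraph, but your $T_q$ formula contradicts it.) Second, the paper actually computes the floor: it splits $\sum_k\eta^k\tau^k$ at the crossover $T'$ where $k$ overtakes $\lceil\ln(2\zeta k)/\ln(1/\lambda_P)\rceil$ inside the $\min\{\cdot,\max\{\cdot,\cdot\}\}$ defining $\tau^k$, approximates each piece by an integral to get
\begin{equation*}
\psi(d,s)\approx\Upsilon\left(\frac{(T')^{2-q}-1}{2-q}+\frac{\ln(2\zeta T_q)}{\ln(1/\lambda_P)}\,T_q^{1-q}\right),
\end{equation*}
divides by $T_q^{1-q}$, and then in the regime $T_q$ large, $q\to1$ collapses the result via an AM--GM step to $2\Upsilon\sqrt{1/\ln(1/\lambda_P)}<\epsilon$, which is the stated threshold. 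Without some version of this estimate you have not shown that the quantized algorithm reaches accuracy $\epsilon$ at all (finite $T_q$), nor where the specific constant $\sqrt{\ln(1/\lambda_P)}$ in the denominator comes from; bounding $\tau^k\le\frac{\ln(2\zeta k)}{\ln(1/\lambda_P)}$ and citing summability of $\ln k\,(\eta^k)^2$ is not enough, because $\sum_k\eta^k\tau^k$ (first power of $\eta^k$) diverges under Assumption 2. So the proposal is structurally aligned with the paper but leaves its central quantitative step as an unexecuted plan.
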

\begin{proof}
	Detailed proof of Proposition \ref{proposition_1} is provided in Appendix C.
\end{proof}

Proposition \ref{proposition_1} indicates that the dimension $d$, the quantization interval $s$ and the eigenvalue $\lambda_P$ of the random walk transition matrix $\mathbf{P}$ have a significant impact on the expected error $\epsilon$. High-dimensional problems with low quantization accuracy and slow-converging quantized DFedRW exhibit greater $\epsilon$.

\section{Numerical Results}
We utilize the proposed DFedRW and its communication-quantized variant to train DNNs for image classification. Our goal is to validate that (quantized) DFedRW can effectively train distributed models under conditions of different statistical and system heterogeneity. Furthermore, we evaluate it on large-scale language modelling to demonstrate the effectiveness of DFedRW on a real-world problem with natural partitioning of client-side heterogeneous data.

\subsection{Setup}
Our experiments are conducted on a workstation equipped with an Intel i9-11900K CPU, 32GB DDR4 3200MHz RAM, and an NVIDIA RTX 3070 8GB GPU. The implementation is developed using Python 3.9, PyTorch 2.0, and CUDA 11.8.

For image classification, we employ 20 devices to train two models for MNIST and Fashion-MNIST classification: a two-layer fully connected network (2FNN) and a three-layer fully connected network (3FNN). Both models take 784-dimensional inputs and produce 10-dimensional outputs, with log-softmax applied at the output layer. The 2FNN consists of a single hidden layer with 100 units, whereas the 3FNN has two hidden layers with 200 units each. All hidden layers use ReLU activation. The MNIST and Fashion-MNIST datasets each consist of 60,000 training images and 10,000 test images, all in 28×28 grayscale format representing handwritten digits or clothing categories. The training set is partitioned heterogeneously (as detailed below), while the test set retains its original IID distribution to serve as a fair evaluation benchmark. We investigate the effectiveness of (quantized) DFedRW in addressing statistical and systemic heterogeneities. To reduce the effect of randomness and ensure the reliability of the findings, we perform each simulation 7 times and report the average.

In statistical heterogeneity settings, we implement two distinct data partitioning approaches. 1) Deterministic partitioning: For devices with $u$\% similarity, their local datasets contain $u$\% data from a randomly shuffled IID shared data pool, while the remaining $(100-u)$\% from Non-IID data pool. We categorize the data in the Non-IID data pool by label, and each category is divided into four equal-sized shards, for a total of 40 shards. Each device receives any two of these shards. Specifically, in IID setting, $u = 100$. 2) Probabilistic partitioning: We construct a Non-IID setting with varying label distributions across devices using Dirichlet data partitioning. Specifically, we first generate a device distribution matrix $\mathbf{P} \in \mathbb{R}^{C \times n}$, where each row vector $\mathbf{p}_c \sim \mathsf{Dir}(\alpha_d)$ represents the allocation proportions of class $c \in C$ across $n$ devices.  As $\alpha_d \to 0$, the data distribution becomes highly heterogeneous.

In system heterogeneity settings, we apply DFedRW to aggregate partial solutions from multiple random walk trajectory updates. Specifically, the local update frequency of each random walk trajectory is constrained by a global clock, and the local workload depends on the product of implicit values $\hat{\gamma}=\prod_{q=k-K}^{k-1} \gamma_{i^q}^{q}$ across all devices on the trajectory. In cases where devices exhibit no heterogeneity, the number of epochs between two communication rounds for random walk SGD updates remains $K$. System heterogeneity results in certain random walk trajectories performing fewer updates than $K$ epochs. For an FL system with $h$\% heterogeneity, we allocate $K^{\prime} < K$ participating devices for the $h$\% of random walk chains.

\renewcommand{\thefootnote}{*}
For language modelling, we train LSTM for word prediction. We use publicly available Reddit dataset\footnote{https://bigquery.cloud.google.com/dataset/ fh-bigquery:reddit\_comments} from November 2017 for training, with each Reddit user corresponding to an independent federated learning device. To ensure training quality, we only retain users with post counts between $150-500$, totalling 83,293 devices. LSTM first looks up the input words in a dictionary containing 50K words and maps their indices to 128-dimensional word embeddings through an embedding layer. The sequence of word embeddings is processed by a two-layer LSTM, generating a 256-dimensional sequence of hidden states. The hidden state of the last time step is taken and mapped to the dimension of the vocabulary size through the fully connected layer to generate the probability distribution of the next word. The dictionary includes the most frequently occurring words in the training dataset. Additionally, we randomly select 5K comments from the October 2017 Reddit dataset for testing.

\subsection{Comparison between DFedRW and Baselines}
\captionsetup[subfloat]{font=scriptsize}
\captionsetup[subfloat]{labelformat=empty}

\begin{figure*}[!t]
	\centering	
	\subfloat[MNIST ($u$=100, $h$=0) \label{acc_iid_sys0}]{
		\includegraphics[width=0.24\linewidth]{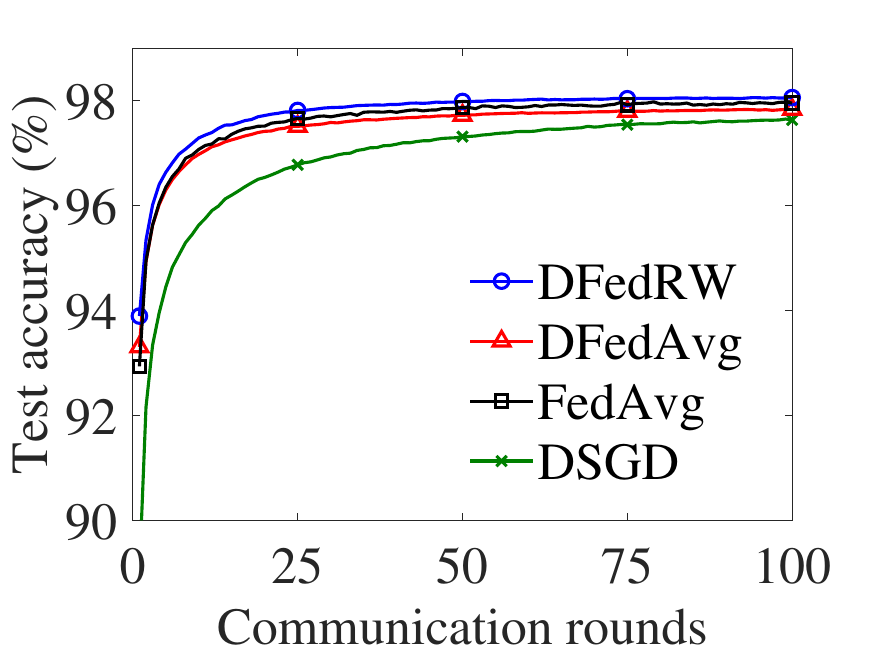}
	}
	\subfloat[MNIST ($u$=50, $h$=0) \label{acc_noniid50_sys0}]{
		\includegraphics[width=0.24\linewidth]{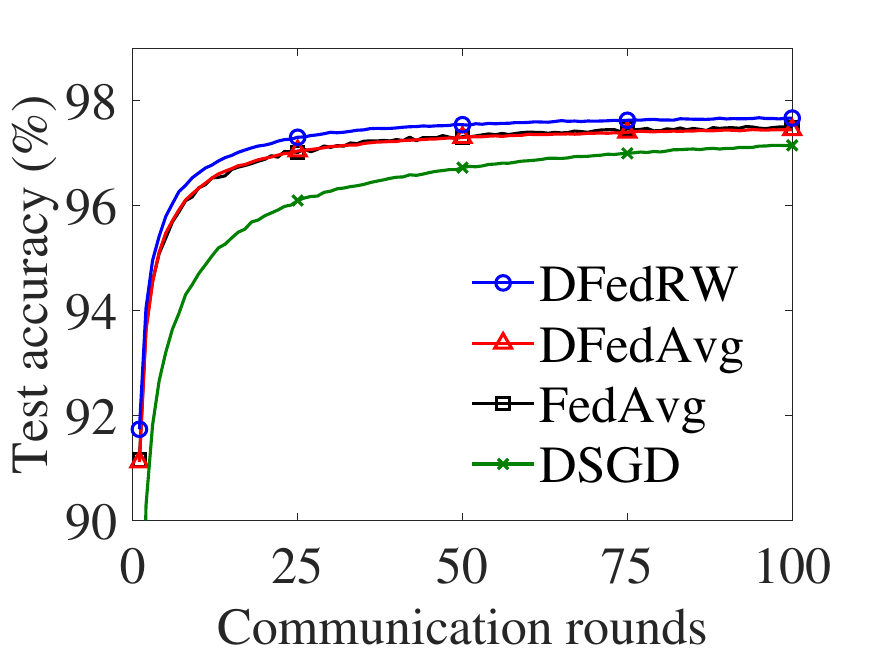}
	}
	\subfloat[MNIST ($u$=0, $h$=0) \label{acc_noniid0_sys0}]{
		\includegraphics[width=0.24\linewidth]{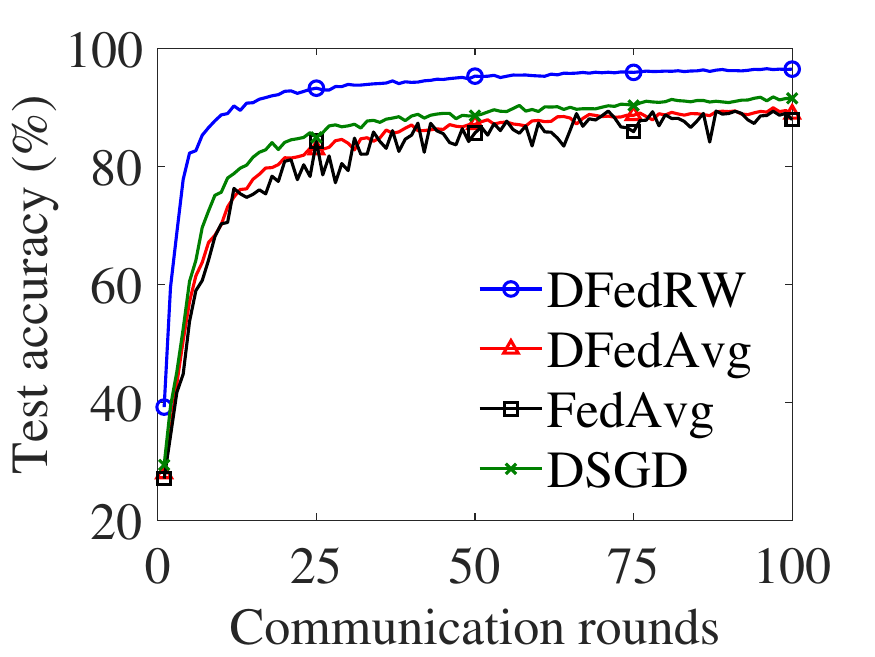}
	}
	\subfloat[MNIST ($u$=0 \& nonbalance, $h$=0) \label{acc_noniid0nonbalance_sys0}]{
		\includegraphics[width=0.24\linewidth]{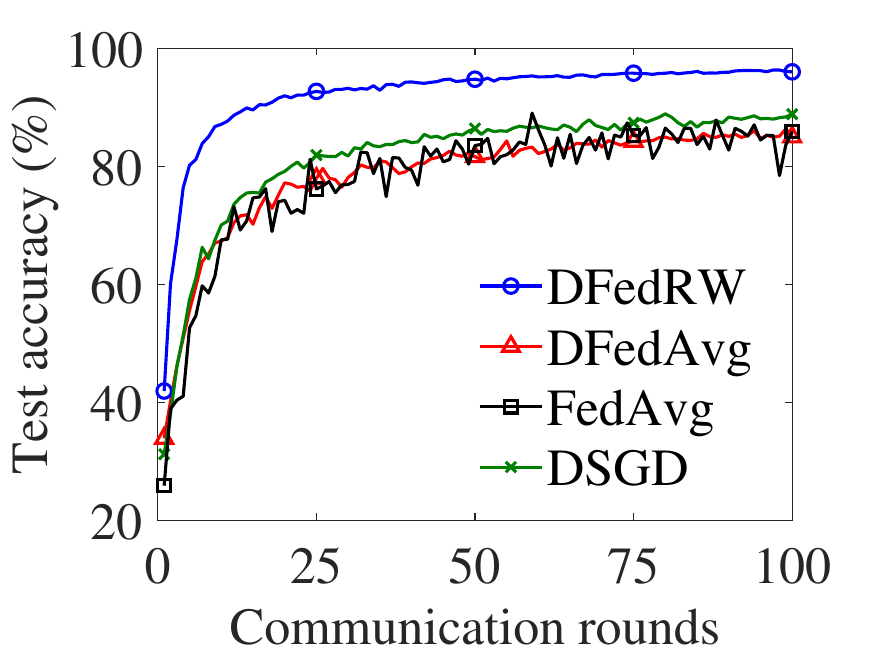}
	}
	
	\vspace{-0.4cm} 
	
	\subfloat[Fashion-MNIST ($u$=100, $h$=0) \label{acc_iid_sys0_FMNIST}]{
		\includegraphics[width=0.24\linewidth]{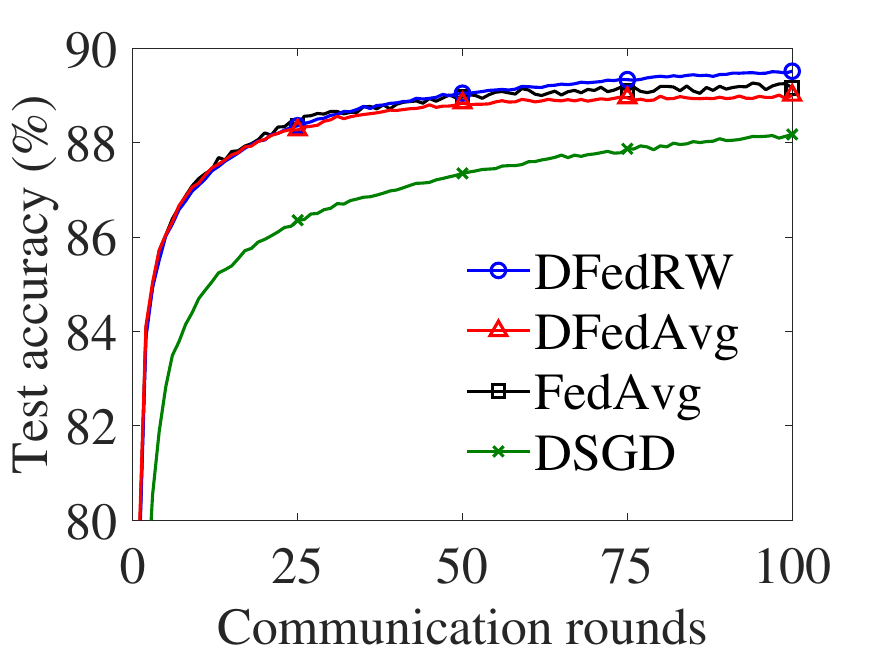}
	}
	\subfloat[Fashion-MNIST ($u$=50, $h$=0) \label{acc_noniid50_sys0_FMNIST}]{
		\includegraphics[width=0.24\linewidth]{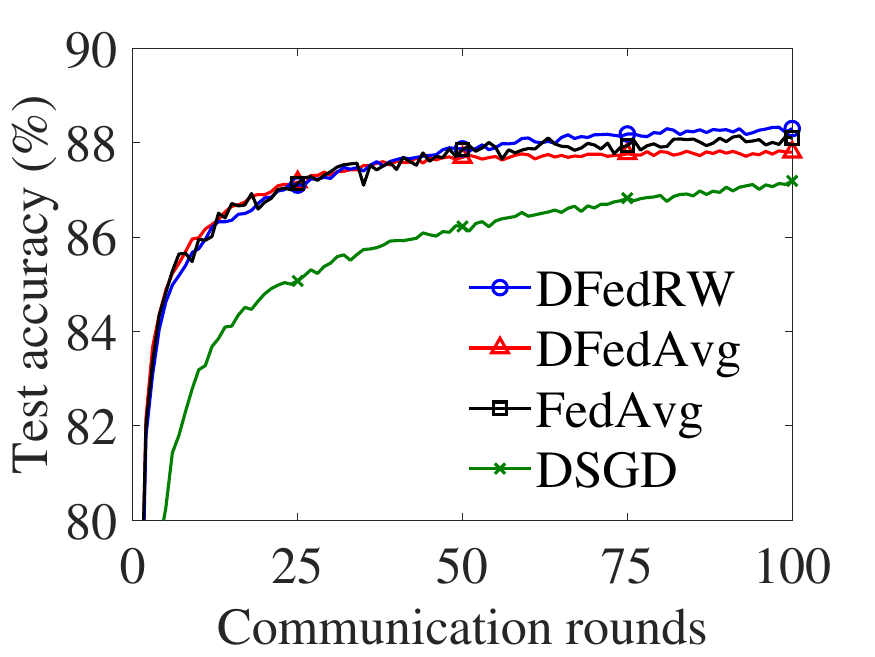}
	}
	\subfloat[Fashion-MNIST ($u$=0, $h$=0) \label{acc_noniid0_sys0_FMNIST}]{
		\includegraphics[width=0.24\linewidth]{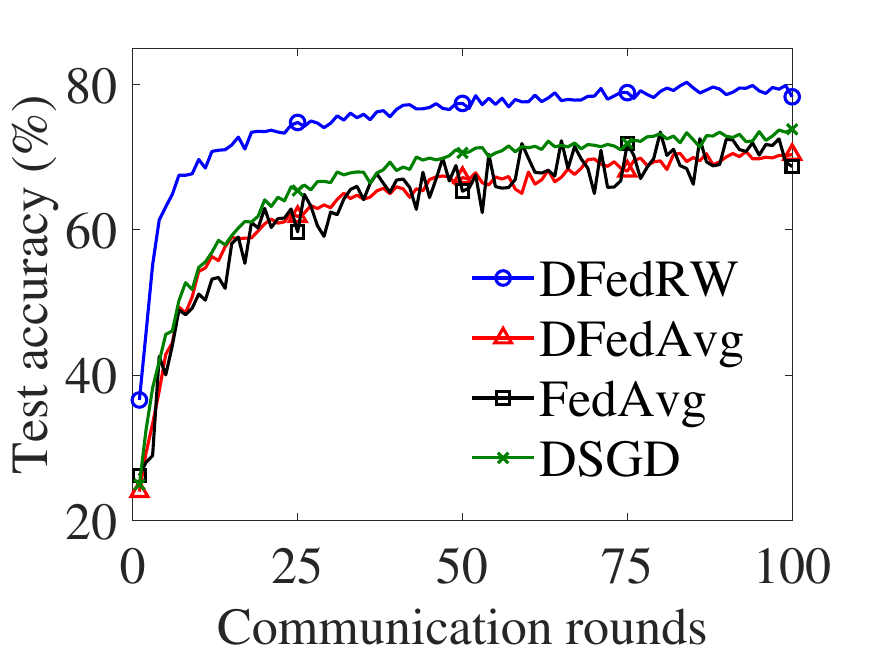}
	}
	\subfloat[Fashion-MNIST ($u$=0 \& nonbalance, $h$=0) \label{acc_noniid0nonbalance_sys0_FMNIST}]{
		\includegraphics[width=0.24\linewidth]{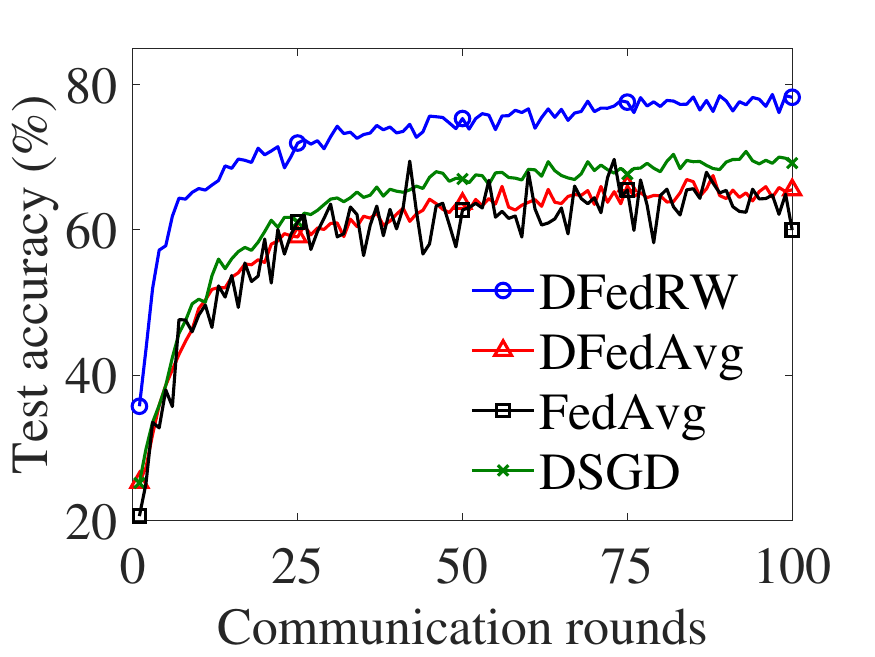}
	}
	
	\caption{Efficiency comparison of DSGD, FedAvg, DFedAvg and DFedRW in training 3FNN for different levels of Non-IID image classification. Fixed absence of system heterogeneity.}
	\label{fig_data}
\end{figure*}

\begin{figure}[!t]
	\centering	
	\subfloat[MNIST ($u$=100, $h$=0) \label{testloss_iid_sys0}]{
		\includegraphics[width=0.49\linewidth]{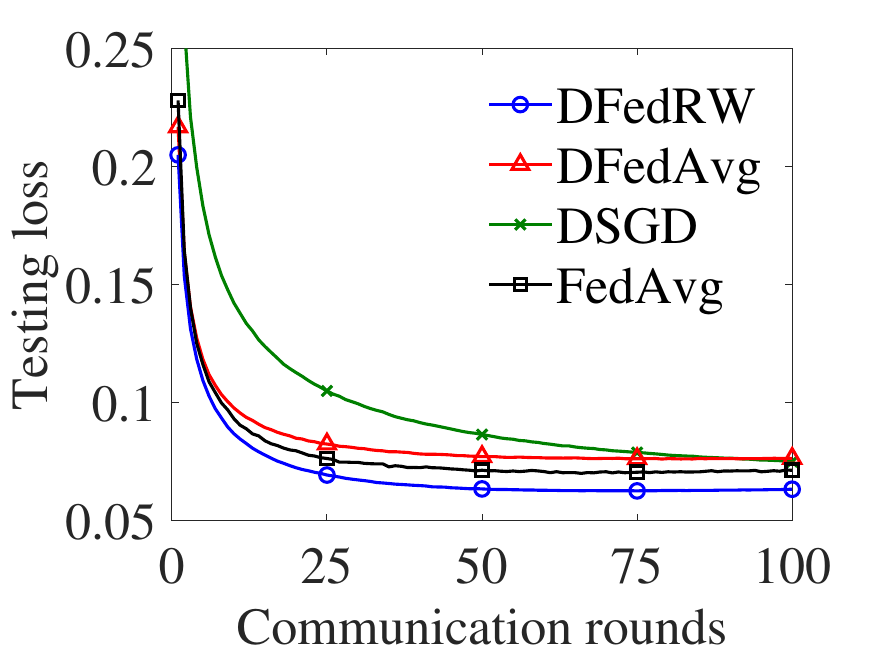}
	}
	\subfloat[MNIST ($u$=50, $h$=0) \label{testloss_noniid50_sys0}]{
		\includegraphics[width=0.49\linewidth]{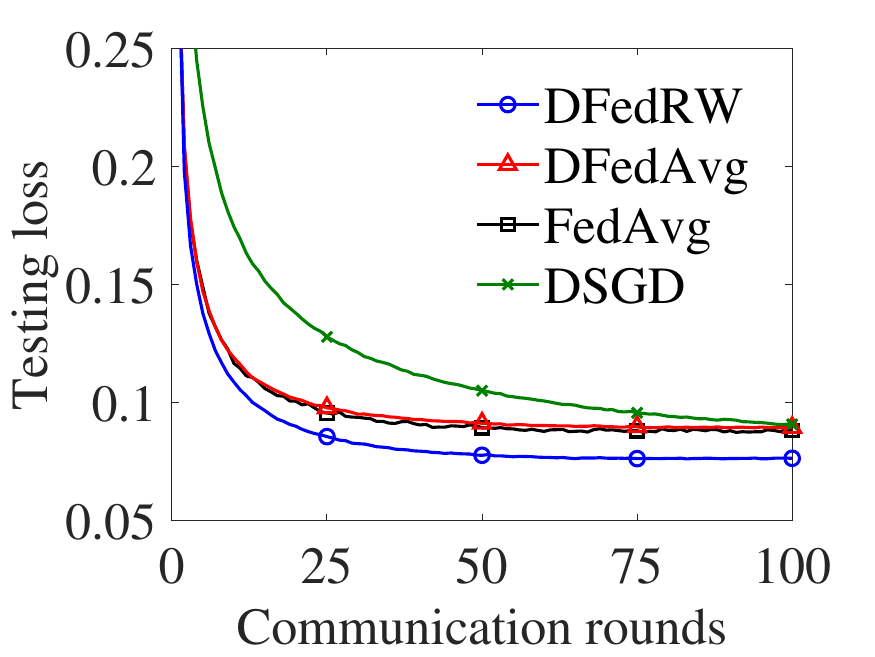}
	}
	
	\vspace{-0.4cm} 
	
	\subfloat[MNIST ($u$=0, $h$=0) \label{testloss_noniid_sys0_FMNIST}]{
		\includegraphics[width=0.49\linewidth]{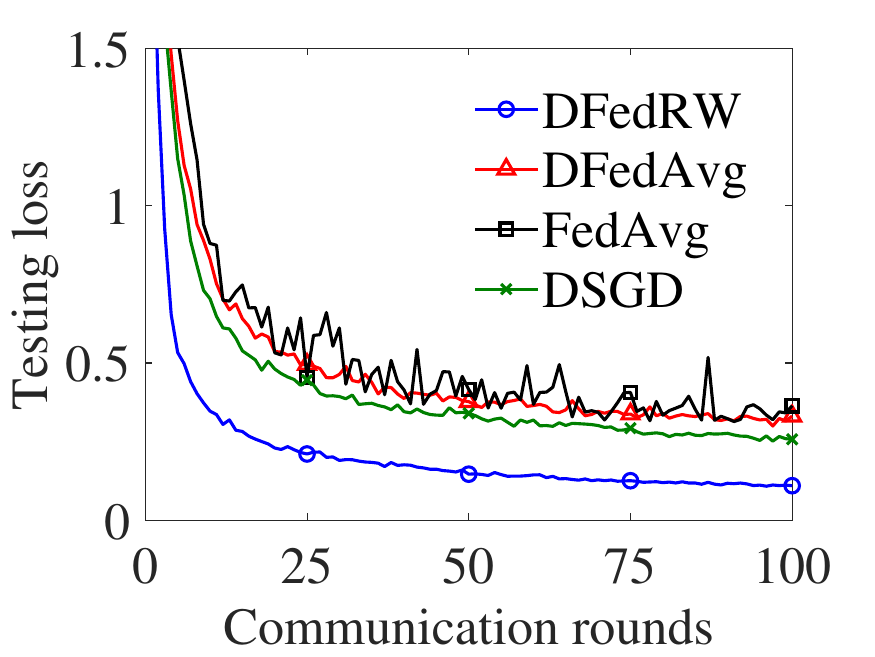}
	}
	\subfloat[MNIST ($u$=0 \& nonbalance, $h$=0) \label{test_MNIST_homo_noniid_nonbalance}]{
		\includegraphics[width=0.49\linewidth]{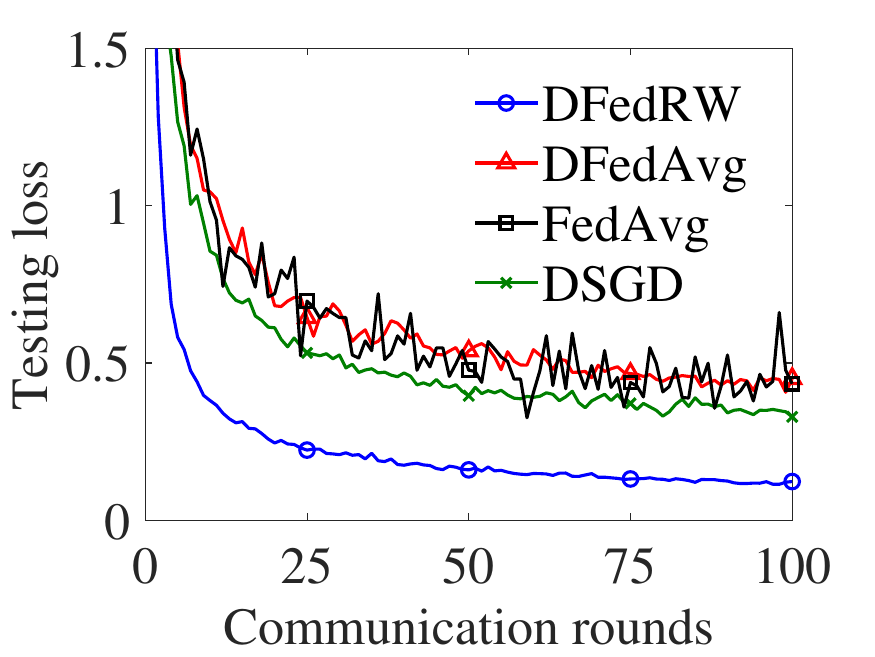}
	}

	\caption{Testing loss comparison of DSGD, FedAvg, DFedAvg and DFedRW in training 3FNN for different data heterogeneity in MNIST classification.}
	\label{fig_data_testloss}
\end{figure}

We select DFedAvg, FedAvg, and DSGD as baselines to evaluate the effectiveness of DFedRW. Both DFedAvg and DFedRW are decentralized, differing in update strategies. DFedAvg aggregates after multiple local updates, while DFedRW employs random walk updates, assessing its impact in heterogeneous settings. As the standard centralized approach, FedAvg serves as a benchmark for analyzing the effects of decentralized aggregation and random walk updates on convergence. DSGD leverages single-step updates to alleviate local drift in Non-IID environments, positioning it as a more reliable baseline under high heterogeneity. During training, we set the batch size of DFedRW and the baselines to 50, with a decreasing learning rate $\eta^{\bar{k}}=\frac{1}{R \bar{k}^{0.499}}$. This is because random sampling algorithms usually adopt a decreasing learning rate to ensure convergence \cite{8Mao}. Specifically, $R=10$ when $u=0, h=90$ (for MNIST) or $u=0, h \geq 50$ (for Fashion-MNIST); otherwise, $R=5$. Each communication round aggregates 25\% of the devices. In the IID setting, the number of local updates for the (D)FedAvg is 5, and correspondingly, the random walk updates for DFedRW are also set to 5. In the non-IID setting, all algorithms adapt their update counts based on device computational capacity, ensuring a fair comparison.

Firstly, we validate the capability of DFedRW to handle heterogeneous data. Fig. \ref{fig_data} compares the test accuracy of different algorithms training the 3FNN under progressively increasing levels of Non-IID data, measured in terms of communication rounds, and Fig. \ref{fig_data_testloss} shows the corresponding testing loss. We set the system heterogeneity to 0\% ($h=0$). In the IID setting, both FedAvg and DFedAvg achieve similar test accuracy and convergence rates. DFedRW tests slightly more accurate than (D)FedAvg. All three outperform DSGD in both aspects. For the Non-IID setting, when 50\% of the data comes from Non-IID shards, the accuracy and convergence rates of DFedRW and the baselines are affected to almost the same extent, and the advantages of DFedRW in addressing non-IID data are not yet apparent. As the level of Non-IID increases further, the accuracy of (D)FedAvg oscillates significantly with rounds when $u = 0$, and the gap with DFedRW increases significantly, to 7.5\% and 9.5\% on the two datasets, respectively. This is due to the fact that multiple updates on heterogeneous local data cause the local model to deviate from the global model. While DSGD suffers relatively less performance loss due to the fact that only one local update is performed in each round. DFedRW achieves the highest accuracy and the fastest convergence rate by traversing more diverse data categories by random walk in the local update. This indicates that the higher the degree of statistical heterogeneity, the more pronounced the advantage of DFedRW becomes.

In particular, we test scenarios under deterministic partitioning where local label distributions are imbalanced, but each device has the same total number of samples, i.e. $u=0$ \& nonbalance in Fig. \ref{fig_data}. In the Non-IID-nonbalance partitioning, the data for devices come from multiple labels, and the amount of data for each label is imbalanced. The process is as follows. We randomly select a label and allocate data to each device, up to a maximum of 1500 samples per label. Then, we switch to the next label and continue this allocation until data budget of the device is reached. Compared to completely heterogeneous data, introducing imbalance significantly reduces the test accuracy and convergence bound of the baselines and exacerbates the magnitude of the oscillation. In contrast, DFedRW maintains accuracy similar to that with balanced data, at the cost of only a slightly reduced convergence bound.

In the Dirichlet data partitioning scheme, we set $\alpha_d = 0.1$ while keeping other settings consistent with the deterministic partitioning scheme to validate the effectiveness of DFedRW in a heterogeneous environment. Each device has a private dataset with a different label distribution and varying data volume. As shown in Fig. \ref{fig_Dirichlet}, DFedRW demonstrates superior accuracy on both datasets and exhibits the smallest oscillation amplitude.

\begin{figure}[!t]
	\centering	
	\subfloat[MNIST (Dirichlet, $\alpha_d=0.1$) \label{acc_MNIST_Dirichlet}]{
		\includegraphics[width=0.49\linewidth]{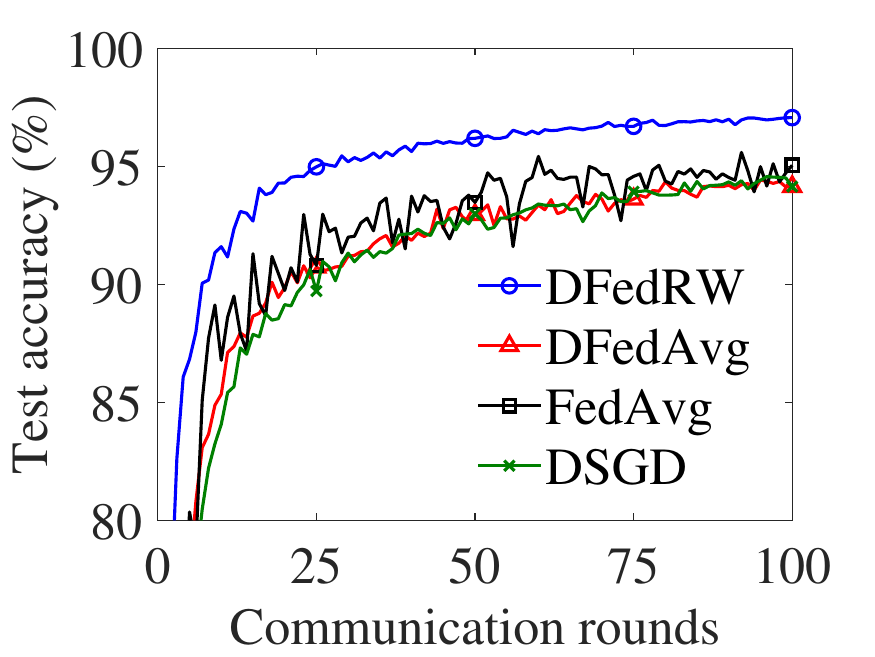}
	}
	\subfloat[MNIST (Dirichlet, $\alpha_d=0.1$) \label{test_MNIST_Dirichlet}]{
		\includegraphics[width=0.49\linewidth]{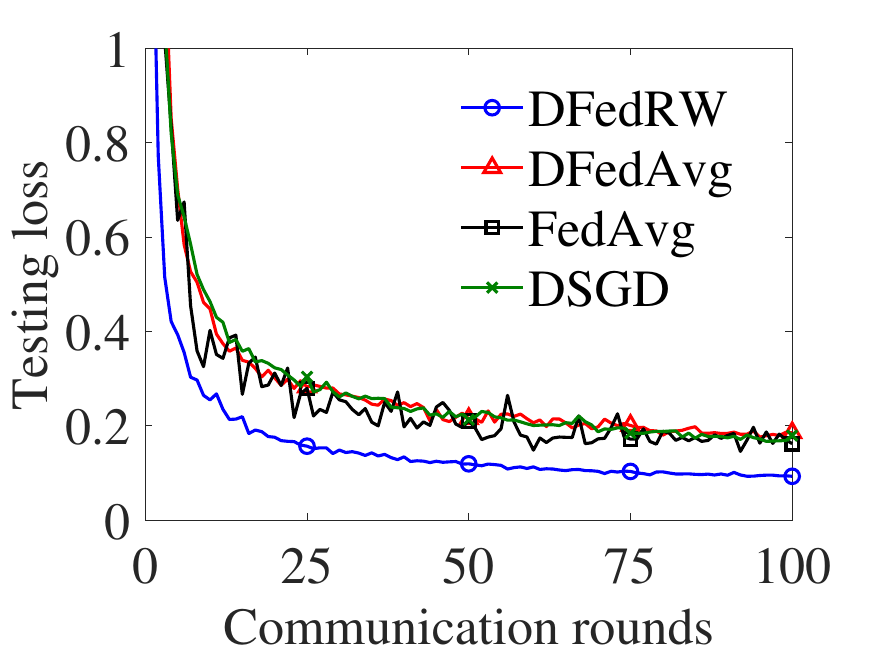}
	}
	
	\vspace{-0.4cm} 
	
	\subfloat[Fashion-MNIST (Dirichlet, $\alpha_d=0.1$) \label{acc_Fashion_MNIST_Dirichlet}]{
		\includegraphics[width=0.49\linewidth]{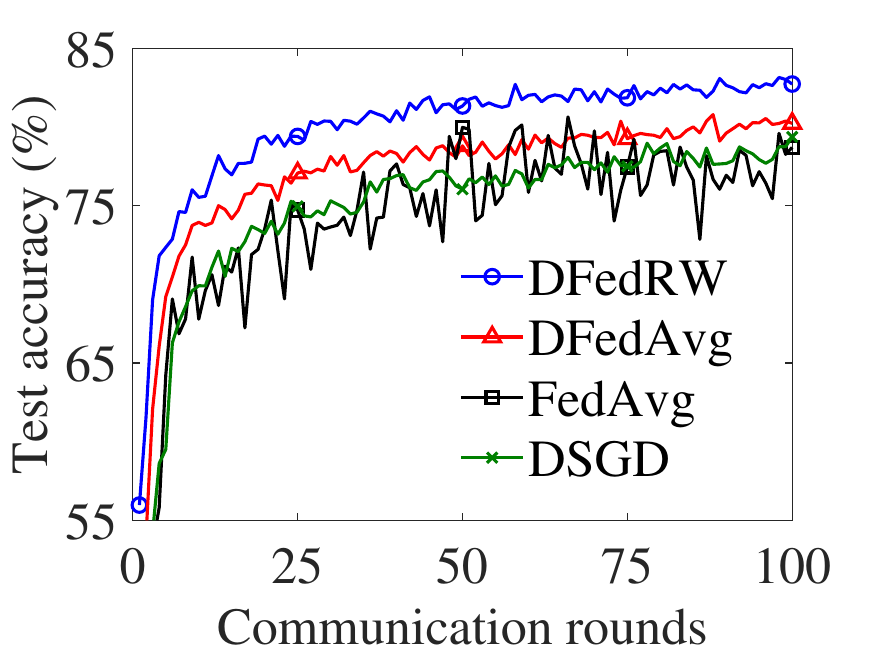}
	}
	\subfloat[Fashion-MNIST (Dirichlet, $\alpha_d=0.1$) \label{test_Fashion_MNIST_Dirichlet}]{
		\includegraphics[width=0.49\linewidth]{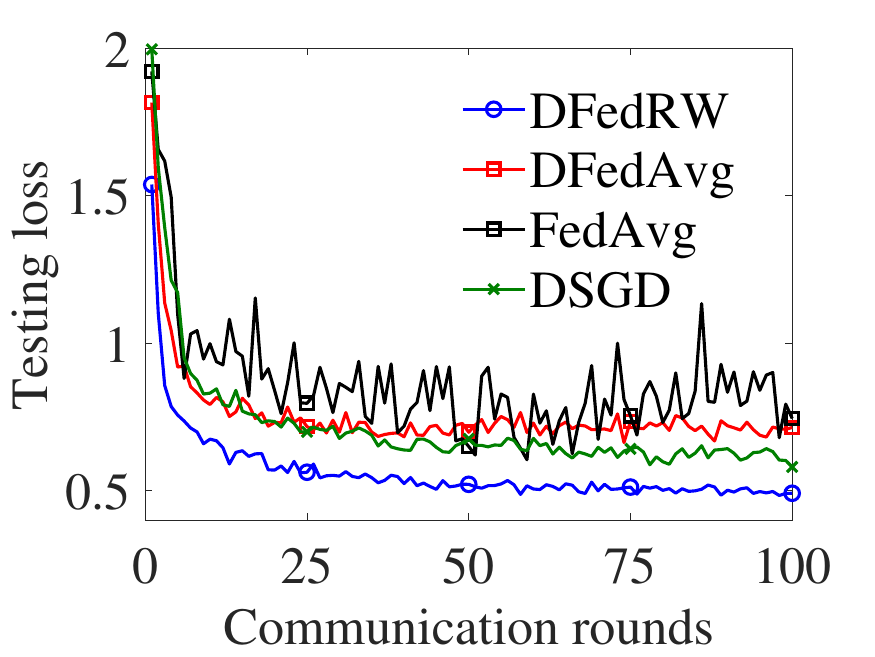}
	}
	
	\caption{Efficiency comparison of DSGD, FedAvg, DFedAvg and DFedRW in training 3FNN for Non-IID image classification with heterogeneous label distributions and sample sizes across devices.}
	\label{fig_Dirichlet}
\end{figure}

\begin{figure*}[!t]
	\centering	
	\subfloat[MNIST ($u$=100, $h$=50) \label{acc_iid_sys50}]{
		\includegraphics[width=0.24\linewidth]{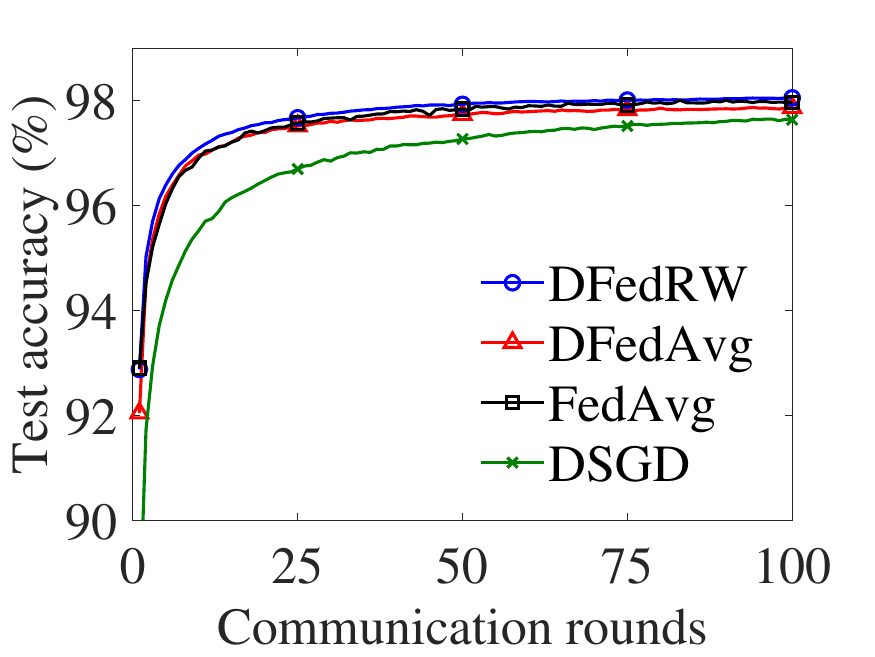}
	}
	\subfloat[MNIST ($u$=100, $h$=90) \label{acc_iid_sys90}]{
		\includegraphics[width=0.24\linewidth]{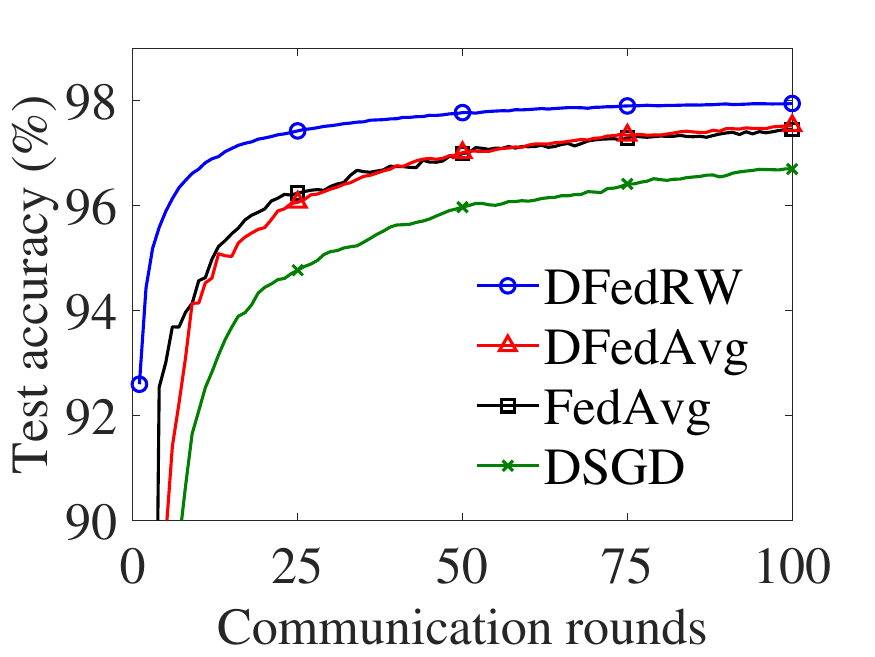}
	}
	\subfloat[MNIST ($u$=0, $h$=50) \label{acc_noniid_sys50}]{
		\includegraphics[width=0.24\linewidth]{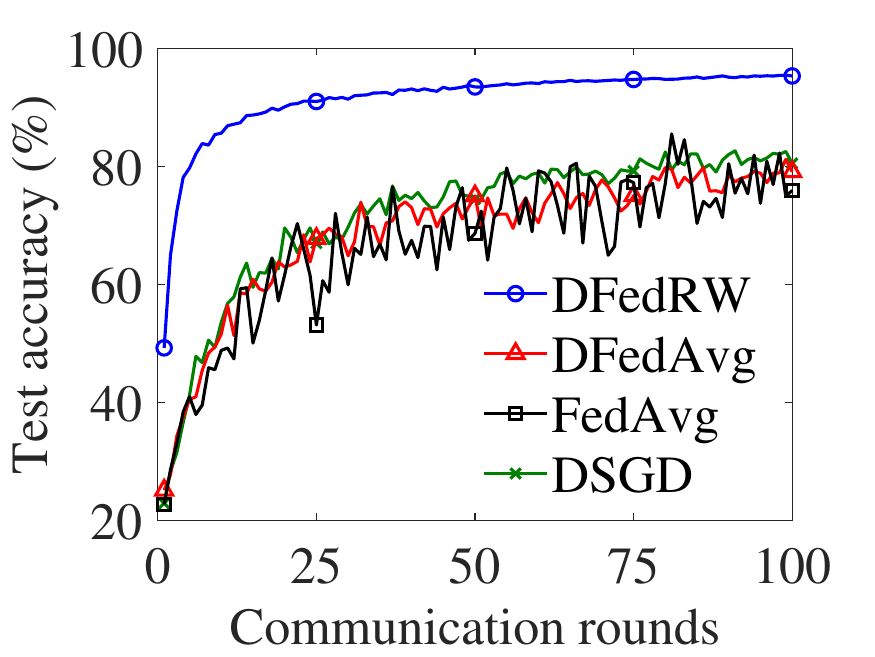}
	}
	\subfloat[MNIST ($u$=0, $h$=90) \label{acc_noniid_sys90}]{
		\includegraphics[width=0.24\linewidth]{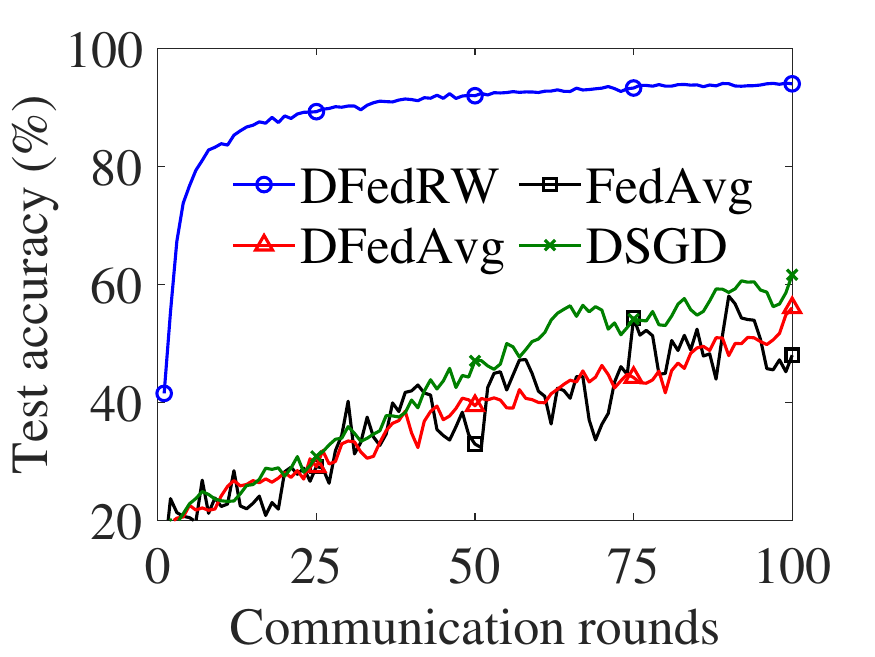}
	}
	
	\vspace{-0.4cm} 
	
	\subfloat[Fashion-MNIST ($u$=100, $h$=50) \label{acc_iid_sys50_FMNIST}]{
		\includegraphics[width=0.24\linewidth]{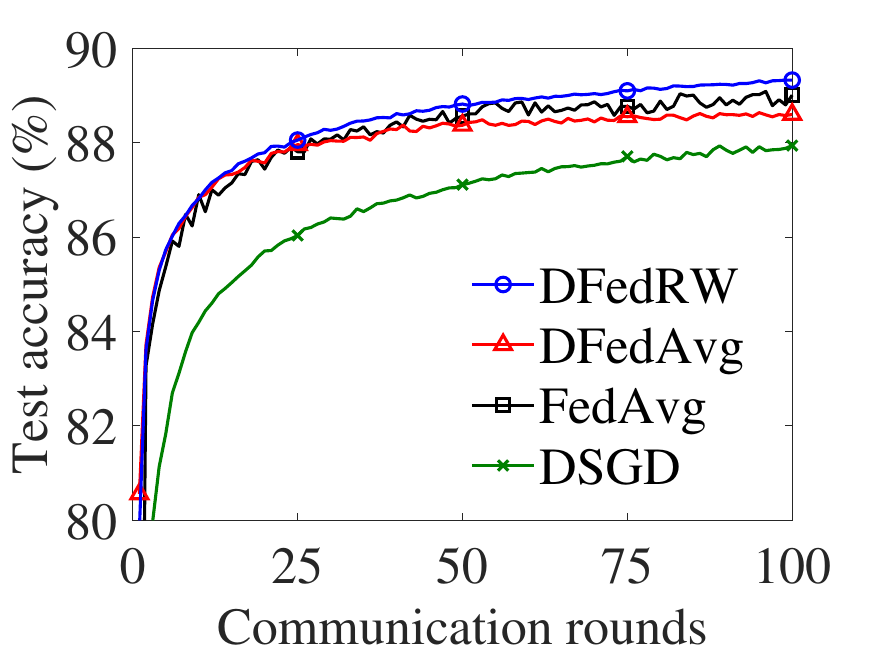}
	}
	\subfloat[Fashion-MNIST ($u$=100, $h$=90) \label{acc_iid_sys90_FMNIST}]{
		\includegraphics[width=0.24\linewidth]{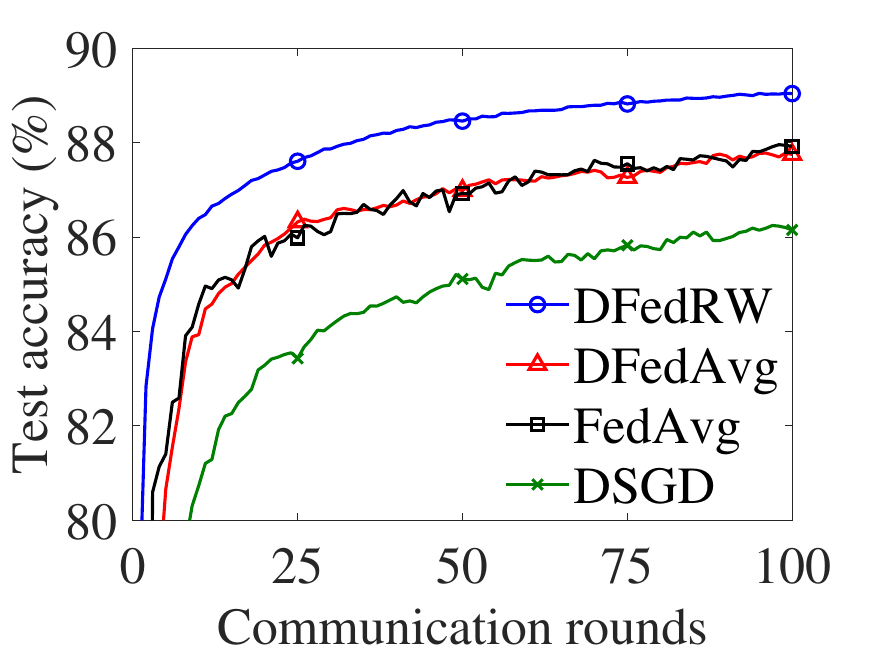}
	}
	\subfloat[Fashion-MNIST ($u$=0, $h$=50) \label{acc_noniid_sys50_FMNIST}]{
		\includegraphics[width=0.24\linewidth]{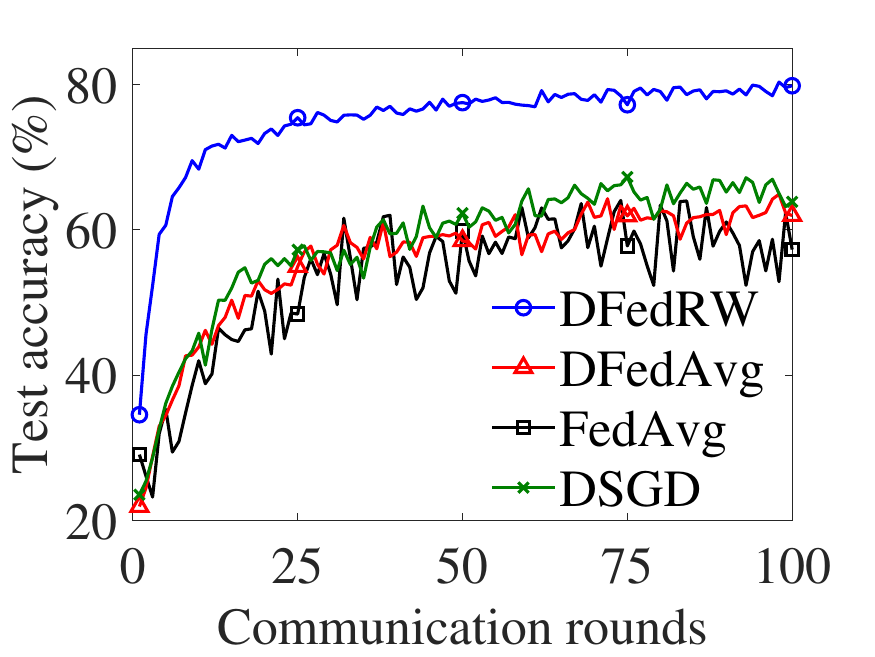}
	}
	\subfloat[Fashion-MNIST ($u$=0, $h$=90) \label{acc_noniid_sys90_FMNIST}]{
		\includegraphics[width=0.24\linewidth]{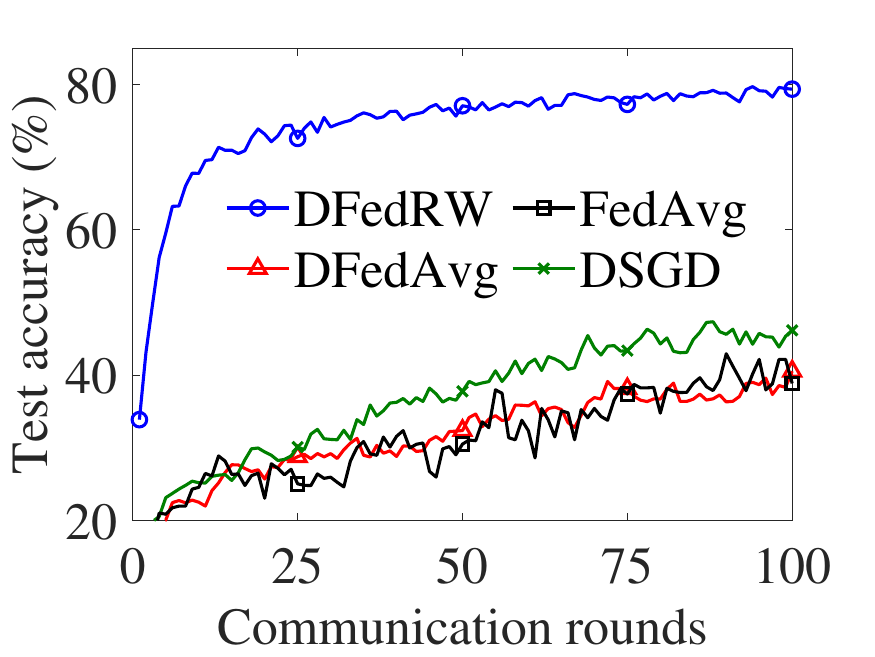}
	}
	
	\caption{Efficiency comparison of DSGD, FedAvg, DFedAvg and DFedRW in training 3FNN for different system heterogeneity in image classification. The left two columns are fixed as IID, while the right two columns are fixed as 100\% Non-IID, comparing performance under 50\% and 90\% system heterogeneity.}
	\label{fig_sys}
\end{figure*}

\begin{figure}[!t]
	\centering	
	\subfloat[Fashion-MNIST ($u$=100, $h$=50) \label{test_Fashion_MNIST_hete50_iid}]{
		\includegraphics[width=0.49\linewidth]{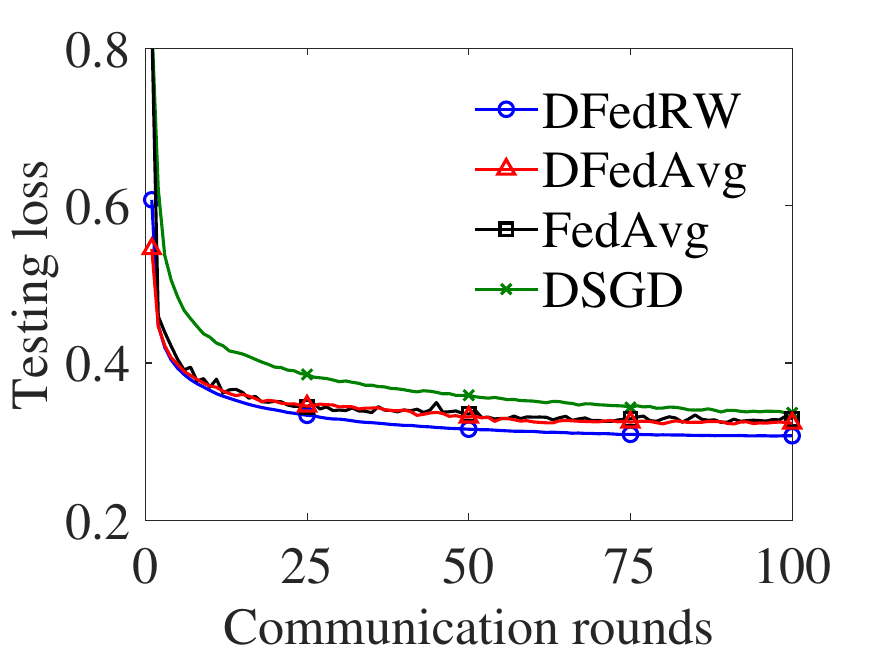}
	}
	\subfloat[Fashion-MNIST ($u$=100, $h$=90) \label{test_Fashion_MNIST_hete90_iid}]{
		\includegraphics[width=0.49\linewidth]{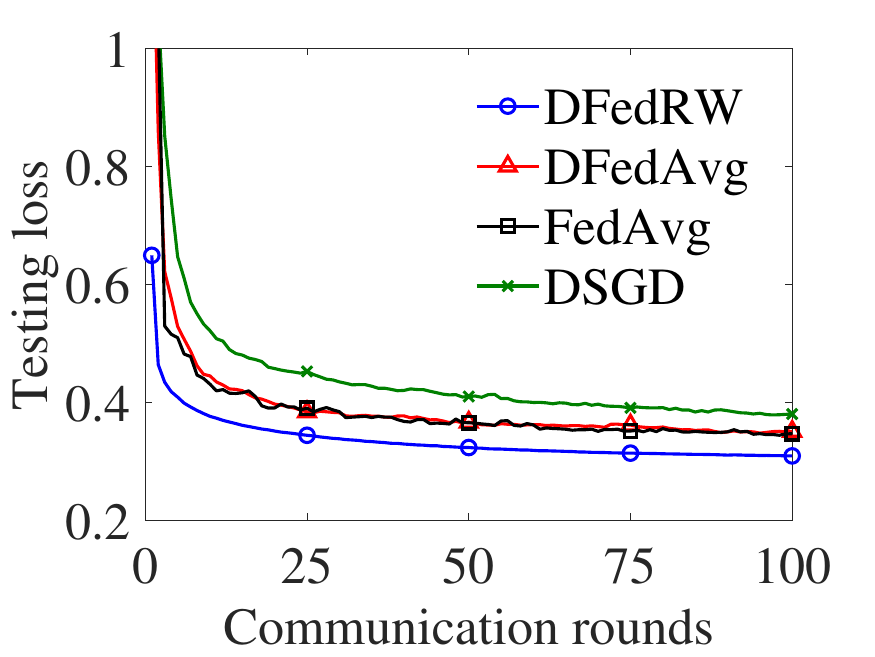}
	}
	
	\vspace{-0.4cm} 
	
	\subfloat[Fashion-MNIST ($u$=0, $h$=50) \label{test_Fashion_MNIST_hete50_noniid}]{
		\includegraphics[width=0.49\linewidth]{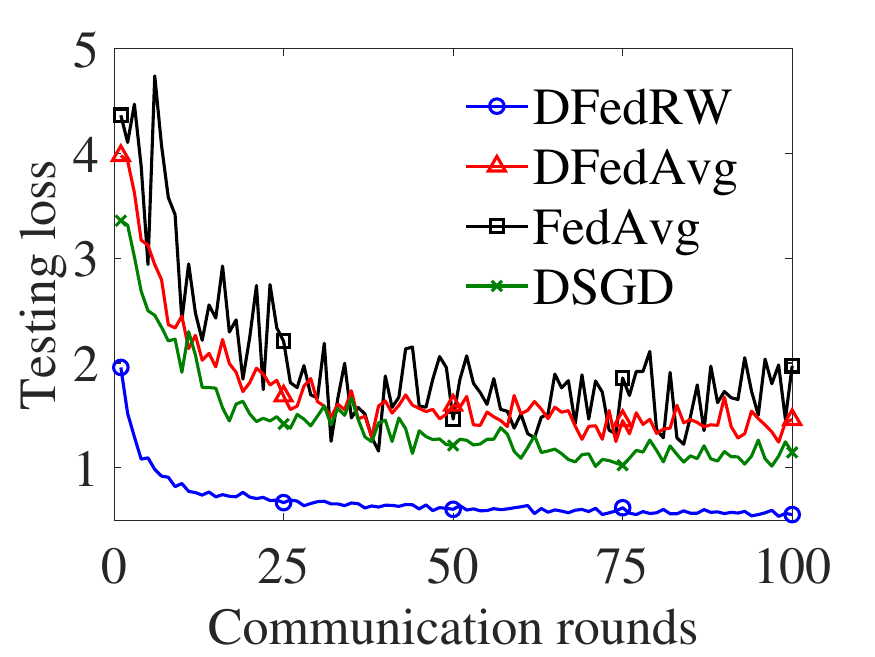}
	}
	\subfloat[Fashion-MNIST ($u$=0, $h$=90) \label{test_Fashion_MNIST_hete90_noniid}]{
		\includegraphics[width=0.49\linewidth]{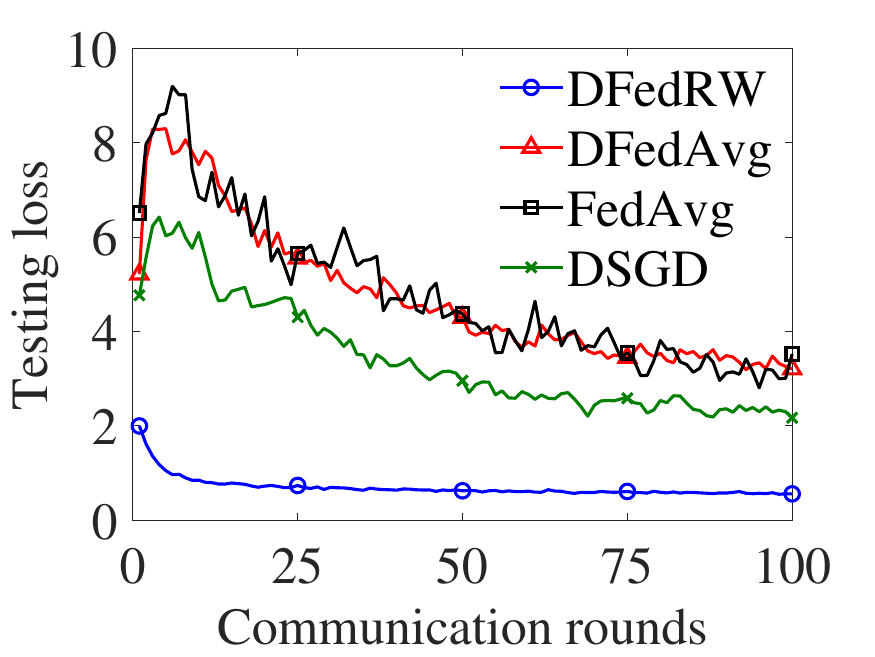}
	}
	
	\caption{Testing loss comparison of DSGD, FedAvg, DFedAvg and DFedRW in training 3FNN for different system heterogeneity in Fashion-MNIST.}
	\label{fig_sys_testloss}
\end{figure}

Secondly, we assess how effectively DFedRW combats system heterogeneity. In Fig. \ref{fig_sys}, we present the test accuracies of training the 3FNN under varying levels of system heterogeneity, and give the testing losses in Fig. \ref{fig_sys_testloss}. In the IID setting, the presence of 50\% heterogeneous devices has minimal impact on the accuracy for all algorithms. This is because the baselines dropped fewer stragglers, resulting in less data loss, and the discarded data are also IID. As the proportion of heterogeneous devices increases, when $h = 90$, system heterogeneity further reduces the test accuracy and convergence rates of the baseline algorithms, while DFedRW is largely unaffected. When the Non-IID setting is introduced, the impact of system heterogeneity on model accuracy and convergence rate is magnified. The convergence rates of the baselines sharply decrease as the level of heterogeneity increases, and the oscillation increases. Under the two levels of system heterogeneity ($h=50$ and $h=90$), DFedRW consistently reaches 95\% and 80\% accuracies at a comparable rate in two datasets. This is attributed to its ability to aggregate partial updates from these stragglers, whereas the baselines dropped them. At the highest level of test heterogeneity ($u = 0$, $h = 90$), DFedRW outperforms the baselines by an average of 38.8\% and 37.5\% in accuracy, respectively.

In Fig. \ref{fig_data}, \ref{fig_Dirichlet} and \ref{fig_sys}, we observe that different algorithms exhibit varying oscillation amplitudes and frequencies in test accuracy under different settings. This phenomenon primarily occurs in highly heterogeneous data distributions (i.e., $u=0$ or $\alpha_d=0.1$). Among the evaluated algorithms, DFedRW demonstrates the most stable performance across all settings, while FedAvg exhibits the highest oscillation amplitude. The primary cause of this behavior is device data drift in heterogeneous environments. When data is highly Non-IID, local model updates at each device may diverge significantly, leading to unstable global model updates. Specifically, FedAvg accumulates drift due to multiple local updates before aggregation. Since all device updates are aggregated at a central server simultaneously, the global model is directly influenced by the varying update directions, resulting in stronger oscillations. In contrast, DFedRW mitigates drift accumulation by enabling local model updates to traverse a broader range of heterogeneous data via random walks, thereby improving training stability. Additionally, DFedAvg and DSGD also exhibit lower oscillation amplitudes than FedAvg. DFedAvg and DSGD employ decentralized aggregation, where each device exchanges model parameters only with a small set of neighbors rather than relying on a central server. This localized aggregation reduces the rapid accumulation of drift, leading to a more stable training process. Meanwhile, DSGD performs fewer local updates per round compared to FedAvg, further limiting the impact of local drift and resulting in smaller oscillations.

\subsection{Comparison of DFedRW across Various Graphs}
\begin{figure}[!t]
	\centering	
	\subfloat[MNIST ($u$=100, $h$=0) \label{graph_acc_iid_sys0}]{
		\includegraphics[width=0.48\linewidth]{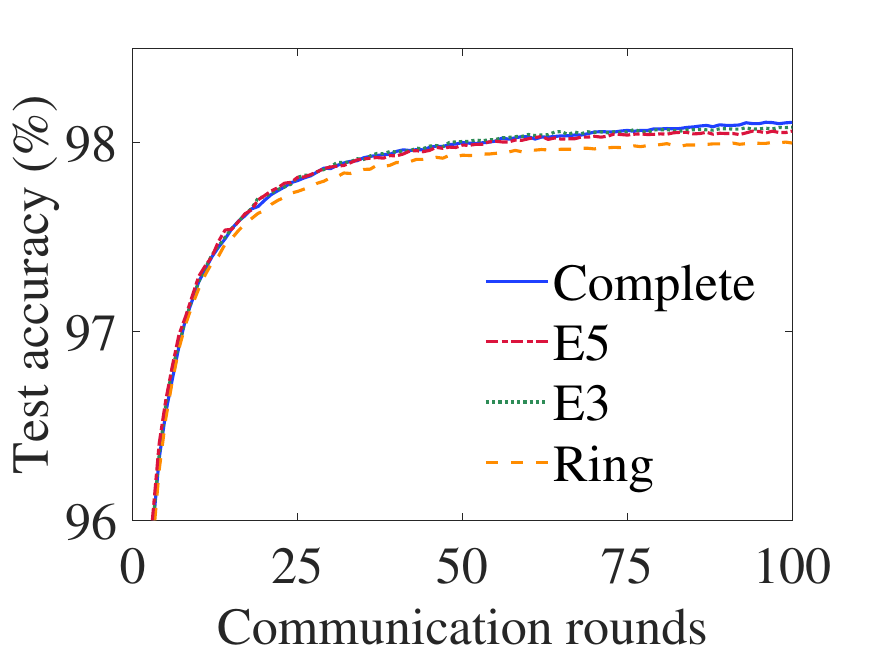}
	}
	\subfloat[MNIST ($u$=100, $h$=90) \label{graph_acc_iid_sys90}]{
		\includegraphics[width=0.48\linewidth]{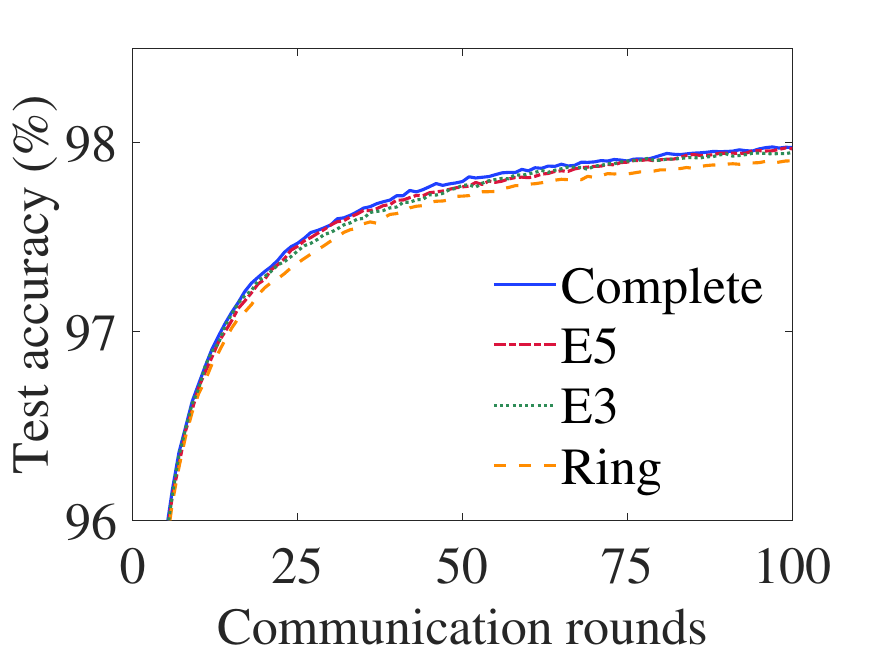}
	}
	
	\vspace{-0.4cm} 
	
	\subfloat[MNIST ($u$=0, $h$=0) \label{graph_acc_noniid0_sys0}]{
		\includegraphics[width=0.48\linewidth]{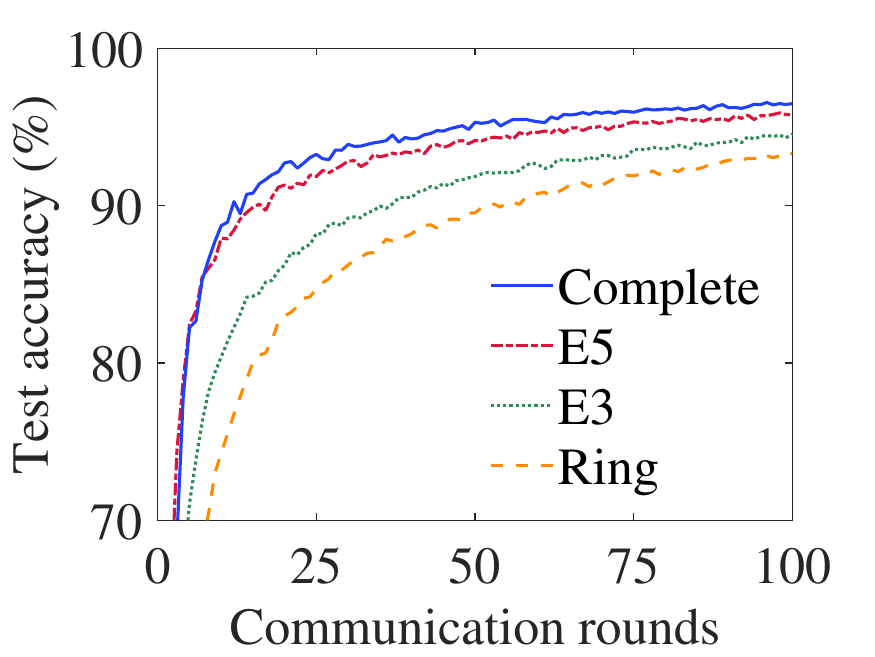}
	}
	\subfloat[MNIST ($u$=0, $h$=90) \label{graph_acc_noniid0_sys90}]{
		\includegraphics[width=0.48\linewidth]{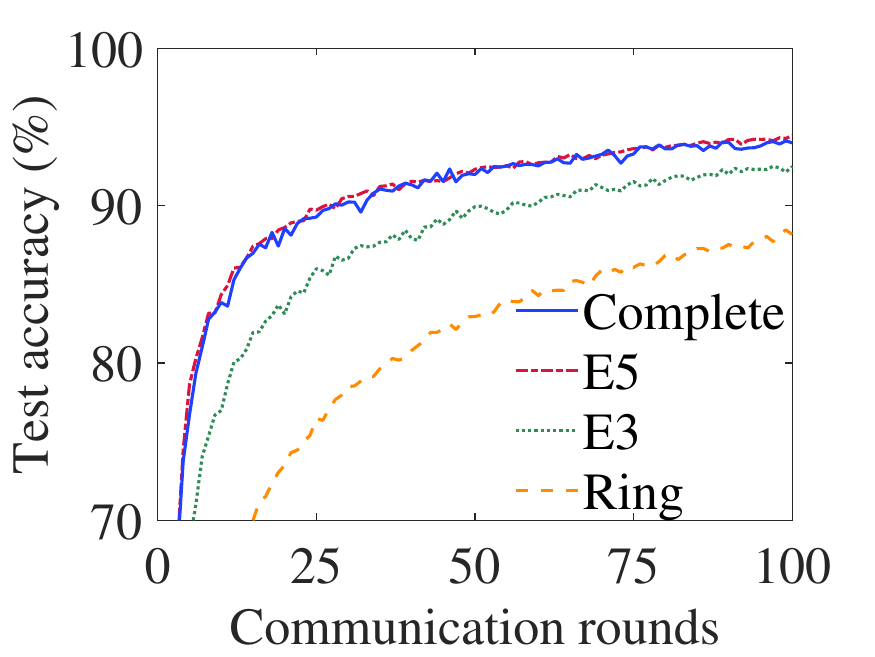}
	}
	
	\caption{Training 3FNN on different graphs for MNIST classification using DFedRW.}
	\label{fig_graph}
\end{figure}

We also consider various communication topologies on the graph, including the ring and $c$-regular expander graphs \cite{42Angel}. The parameter settings for DFedRW are consistent with those in the previous subsection. Fig. \ref{fig_graph} compares the test accuracy of the 3FNN under different graphs and varying levels of system heterogeneity. For the IID setting, regardless of the level of heterogeneities, DFedRW achieves nearly the same accuracy on ring graph, 3-expander graph (E3), and 5-expander graph (E5) as on the complete graph. Therefore, in the absence of statistical heterogeneity, using sparse topologies can effectively maintain the performance of DFedRW, allowing for reduced communication cost without compromising model accuracy. In the Non-IID setting, the accuracy on E3 and ring graphs are lower than those on the other graphs. This decrease is attributed to the limited connections. In the E3 and ring graphs, fewer than five aggregation devices reduce data diversity, weakening the effectiveness of DFedRW. When increasing system heterogeneity, the performance drop is even more pronounced on the ring graph due to its comparatively sparser topology. These findings show that while sparse graphs can reduce communication requirements in the presence of statistical heterogeneity in local data, overly sparse topologies can cause substantial performance degradation. Thus, designing a reasonable topology to achieve optimal trade-off between reduced communication and model performance is our future focus.

\begin{figure}[!t]
	\centering	
	\subfloat[MNIST ($u$=100, $h$=0) \label{quantized_acc_iid_sys0}]{
		\includegraphics[width=0.48\linewidth]{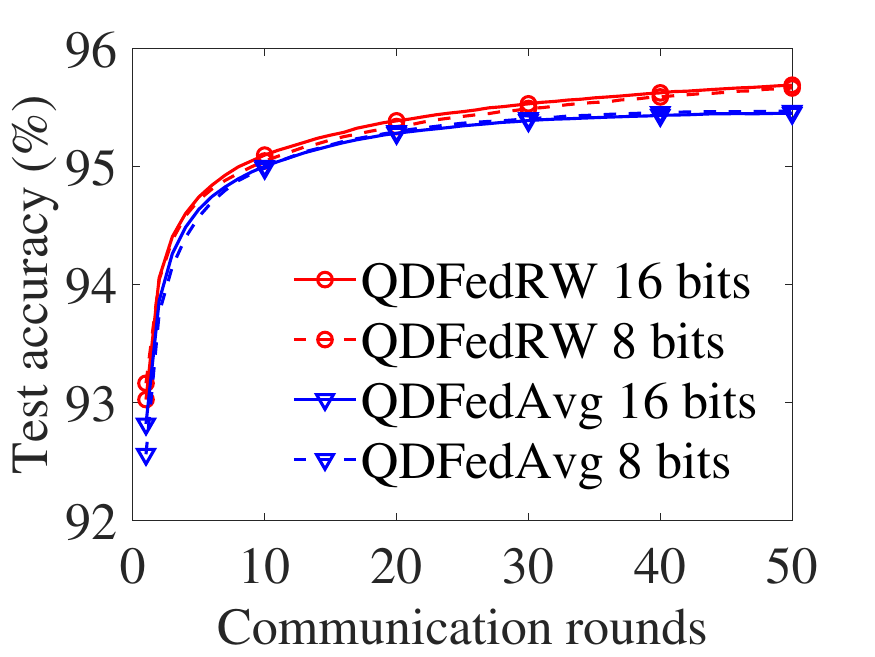}
	}
	\subfloat[MNIST ($u$=0, $h$=90) \label{quantized_acc_noniid_sys90}]{
		\includegraphics[width=0.48\linewidth]{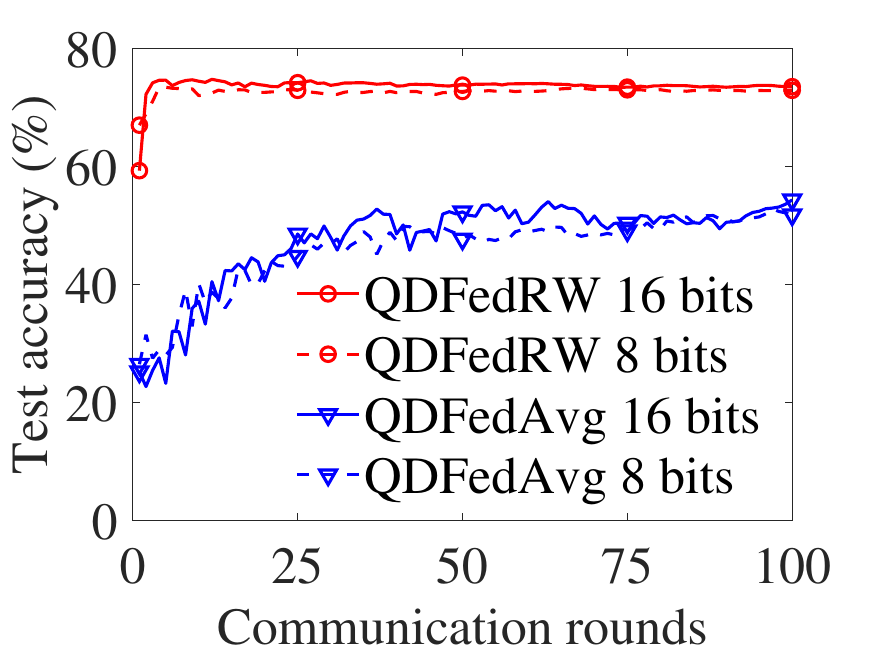}
	}
	
	\vspace{-0.4cm} 
	
	\subfloat[Fashion-MNIST ($u$=100, $h$=0) \label{quantized_acc_iid_sys0_FMNIST}]{
		\includegraphics[width=0.48\linewidth]{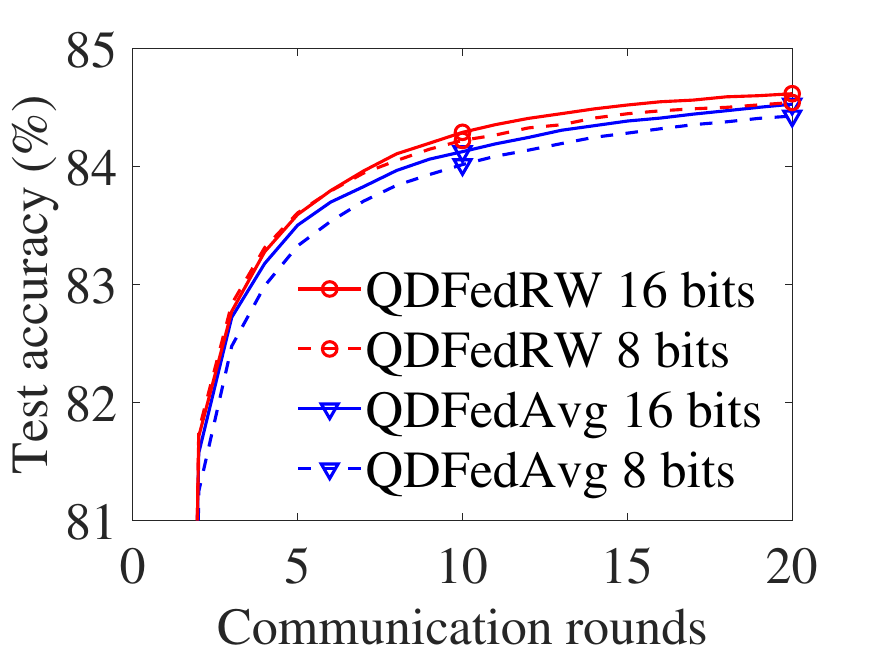}
	}
	\subfloat[Fashion-MNIST ($u$=0, $h$=90) \label{quantized_acc_noniid_sys90_FMNIST}]{
		\includegraphics[width=0.48\linewidth]{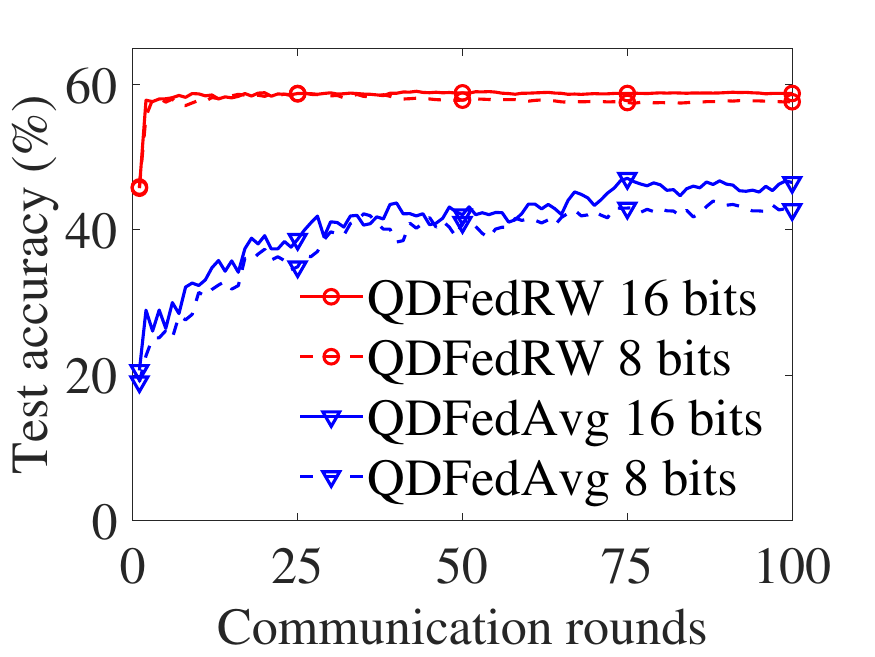}
	}
	
	\caption{Efficiency comparison of QDFedRW and QDFedAvg in training 2FNN for IID and Non-IID image classification with different communication bits.}
	\label{fig_quantized}
\end{figure}

\subsection{Efficiency of DFedRW}
To validate the efficiency of DFedRW under different communication bits and random walk epochs, we compared it with DFedAvg. Fig. \ref{fig_quantized} shows the results of training the 2FNN for MNIST and Fashion-MNIST classification using these two algorithms under different communication bits. We set $q=0.999$, with the number of aggregation devices per communication round being 20, the batch size set to 50, and both the DFedRW random walk epochs and DFedAvg local epochs set to 5. In a fully homogeneous setting, the quantized DFedRW achieves 0.23\% and 0.07\% higher accuracies than the quantized DFedAvg. This improvement increases to 18.6\% and 12.3\% under pronounced statistical and system heterogeneity. Notably, reducing the communication rate from 16 bits to 8 bits has no impact on the performance of either algorithm.

Fig. \ref{fig_epoch} illustrates the test accuracies of training the 3FNN classifier using two algorithms under different epochs. We fixed the full precision parameters for communication and only changed the number of aggregation devices to 5, as well as $q=0.501$. It can be observed that in the absence of heterogeneity, increasing the number of epochs improves the convergence rate of both algorithms, with DFedRW showing a greater improvement. In the scenario of Non-IID and 90\% system heterogeneity, more epochs do not aid the training of DFedAvg, whereas DFedRW experiences an improvement in accuracy.
\begin{figure}[!t]
	\centering	
	\subfloat[MNIST ($u$=100, $h$=0) \label{epoch_acc_iid_sys0}]{
		\includegraphics[width=0.48\linewidth]{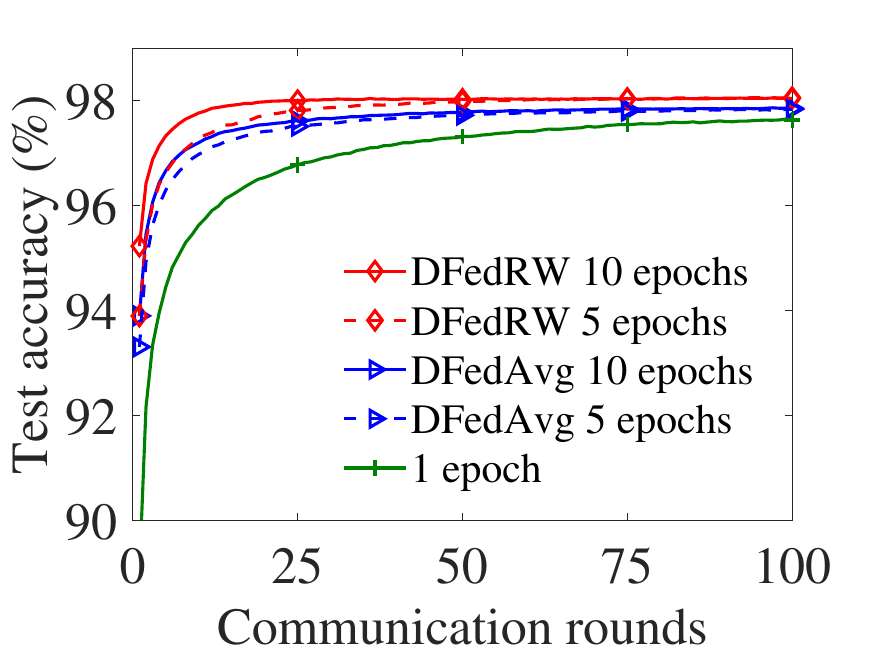}
	}
	\subfloat[MNIST ($u$=0, $h$=90) \label{epoch_acc_noniid_sys90}]{
		\includegraphics[width=0.48\linewidth]{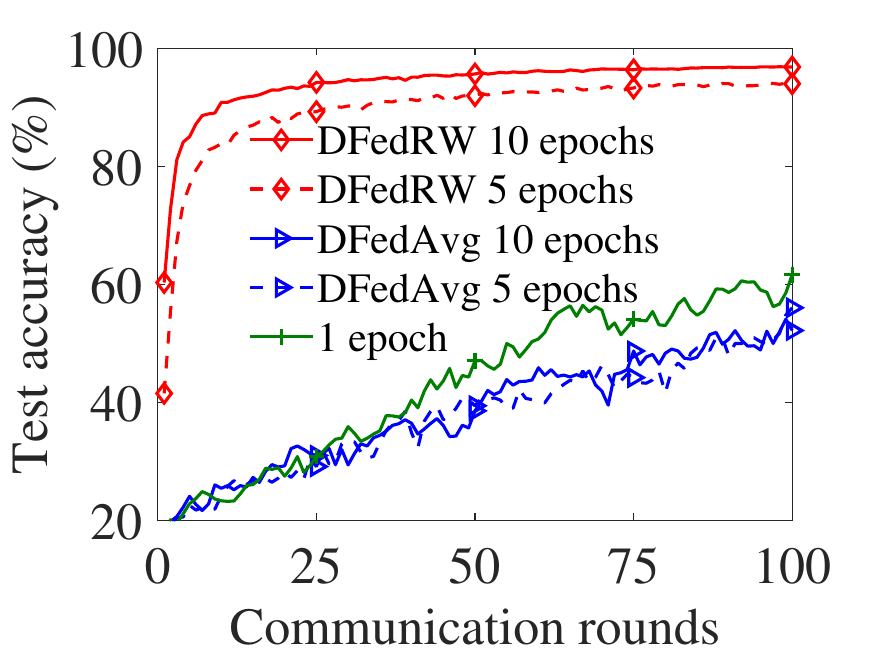}
	}
	\caption{Efficiency comparison of DFedRW and DFedAvg in training 2FNN for IID and Non-IID MNIST classification with different epochs. DFedAvg performs local epochs, while DFedRW executes random walk epochs. Both algorithms that execute 1 epoch are essentially the same.}
	\label{fig_epoch}
\end{figure}

Based on the theoretical analysis in Theorems \ref{theorem_1} and \ref{theorem_2}, we investigate how data and system heterogeneity, network topology, and quantization affect the actual convergence behavior. To this end, we set $M=20$, $K=3$ and $R=5$, while keeping other settings consistent with DFedRW in Section VI-B. Fig. \ref{fig_convergence} compares the empirical convergence bounds under different conditions, showing that DFedRW ($u=1$, $h=0$) achieves the lowest empirical convergence bound when none of these factors are present, which is consistent with the tightest theoretical upper bound in Theorem \ref{theorem_1}. Our experimental results further reveal that increasing data and system heterogeneity, reducing network connectivity, and lowering quantization precision all lead to a more relaxed convergence bound. Among these, heterogeneity and network sparsity play a dominant role in governing the degree of bound relaxation, whereas the impact of quantization is relatively minor. These observations align well with the theoretical insights presented in Theorem \ref{theorem_2}.

\begin{figure}[!t]
	\centering	
	\subfloat[MNIST \label{acc_mnist_convergence}]{
		\includegraphics[width=0.48\linewidth]{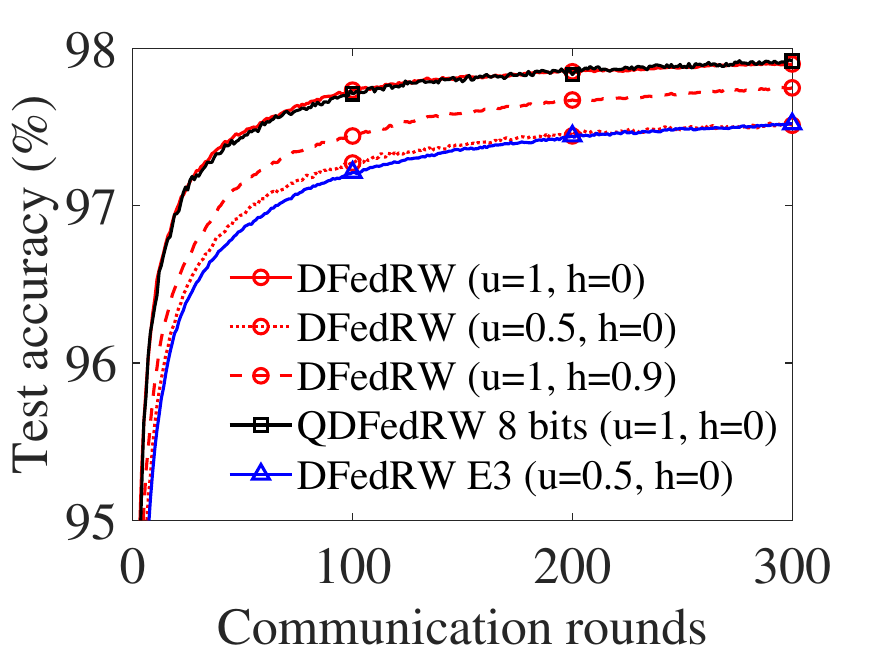}
	}
	\subfloat[MNIST \label{test_mnist_convergence}]{
		\includegraphics[width=0.48\linewidth]{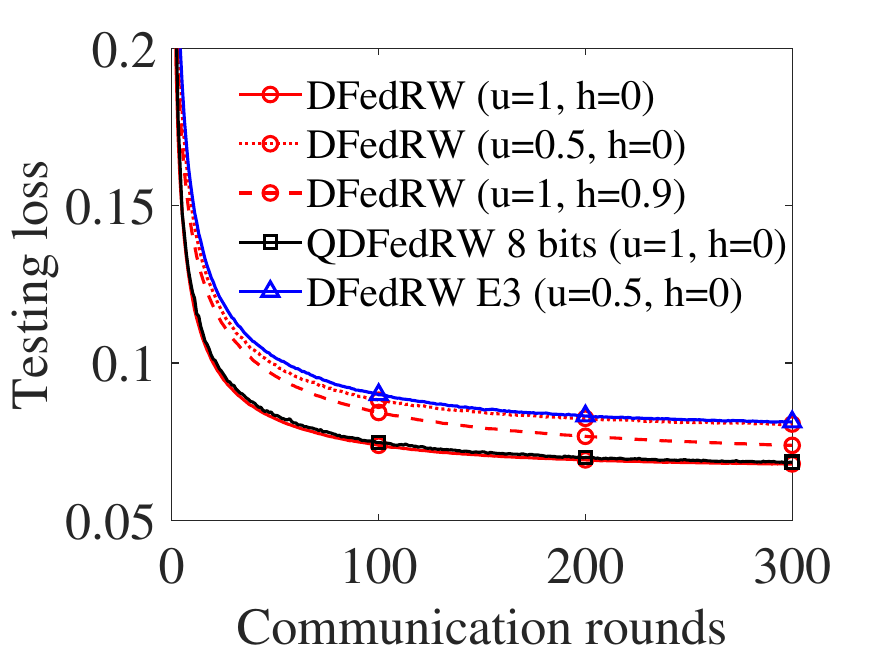}
	}
	
	\vspace{-0.4cm} 
	
	\subfloat[Fashion-MNIST \label{acc_fashion_mnist_convergence}]{
		\includegraphics[width=0.48\linewidth]{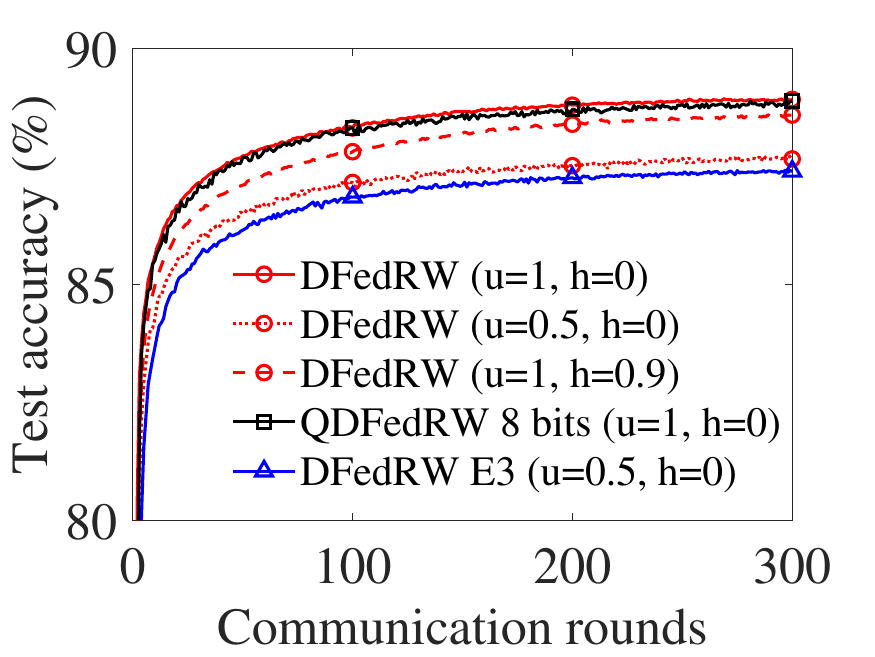}
	}
	\subfloat[Fashion-MNIST \label{test_fashion_mnist_convergence}]{
		\includegraphics[width=0.48\linewidth]{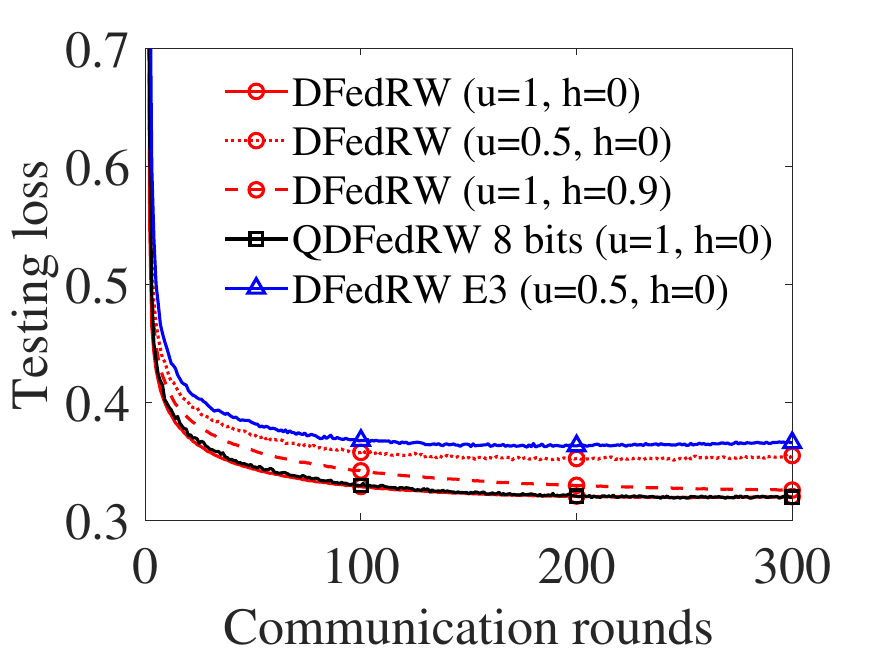}
	}
	
	\caption{Comparison of the impact of relaxing constraints on the empirical convergence bound.}
	\label{fig_convergence}
\end{figure}

\subsection{Comparison of Communication Costs under Various Factors}
We first evaluate DFedRW, DFedRW-E3, 8-bit QDFedRW, and full-precision fully connected baselines under moderate heterogeneity (i.e., $u=50$, $h=50$), analyzing their accuracy curves as a function of the communication cost (in MB) of the busiest device. We set $M = 5$, $R = 5$, with $|\mathcal{N}_A| = 5$ for the complete graph and $|\mathcal{N}_A| = 3$ for the E3 graph. In FedAvg, the server acts as the busiest device, sending the global model to $M$ selected devices and collecting their updates in each round, resulting in a total communication cost of $\mathcal{C}_A = 2M\phi$, where $\phi$ denotes the size of the transmitted model parameters. For DFedRW, the busiest device may participate in both random walk updates and decentralized aggregation. The total communication cost $\mathcal{C}_R$ is given by
\begin{equation}
\mathcal{C}_R=\mathcal{C}_{\mathrm{upd}}+\mathcal{C}_{\mathrm{agg}}=2 \sum_{m=1}^M \theta_{i^*}^m \Gamma_{i^*}^m \phi+ \left|\mathcal{N}_c(i^*)\right| \left|\mathcal{N}_A(i^*)\right| \phi,
\end{equation}
where $\mathcal{C}_{\mathrm{upd}}$ accounts for the communication during updates, and $\mathcal{C}_{\mathrm{agg}}$ corresponds to the cost of aggregation. $i^*$ represents the busiest device in the current round, $\theta_{i^*}^m \in \{0,1\}$ indicates whether $i^*$ is selected in chain $m$, and $\Gamma_{i^*}^m$ denotes the number of times $i^*$ is selected in chain $m$. The terms $\left|\mathcal{N}_c\left(i^*\right)\right|$ and $\left|\mathcal{N}_A\left(i^*\right)\right|$ denote the number of devices receiving the model from device $i^*$ and the number of neighboring devices participating in aggregation, respectively. In contrast, DFedAvg and DSGD only support local updates, so the communication cost for the busiest device is limited to $\mathcal{C}_{\mathrm{agg}}$.

As shown in Fig. \ref{fig_CommOH}, 8-bit QDFedRW achieves the highest accuracy per MB and the fastest convergence rate, followed by DFedRW-E3 and DFedRW. Notably, even DFedRW outperforms all baselines in terms of accuracy per MB for the busiest device. These results indicate that DFedRW attains superior model accuracy and convergence rate without imposing additional communication costs on the busiest device. Moreover, as heterogeneity increases (i.e., $u=0$, $h=50$), the advantages of 8-bit QDFedRW, DFedRW-E3, and DFedRW in accuracy and convergence rate become more pronounced.

\begin{figure}[!t]
	\centering	
	\subfloat[MNIST ($u$=50, $h$=50) \label{acc_mnist_u50_CommOH}]{
		\includegraphics[width=0.48\linewidth]{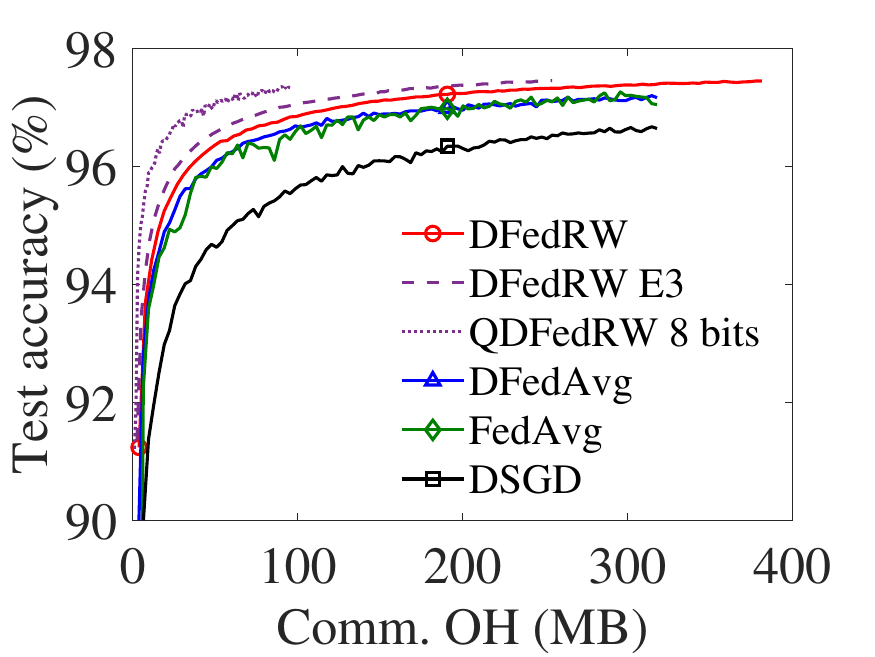}
	}
	\subfloat[MNIST ($u$=50, $h$=50) \label{test_mnist_u50_CommOH}]{
		\includegraphics[width=0.48\linewidth]{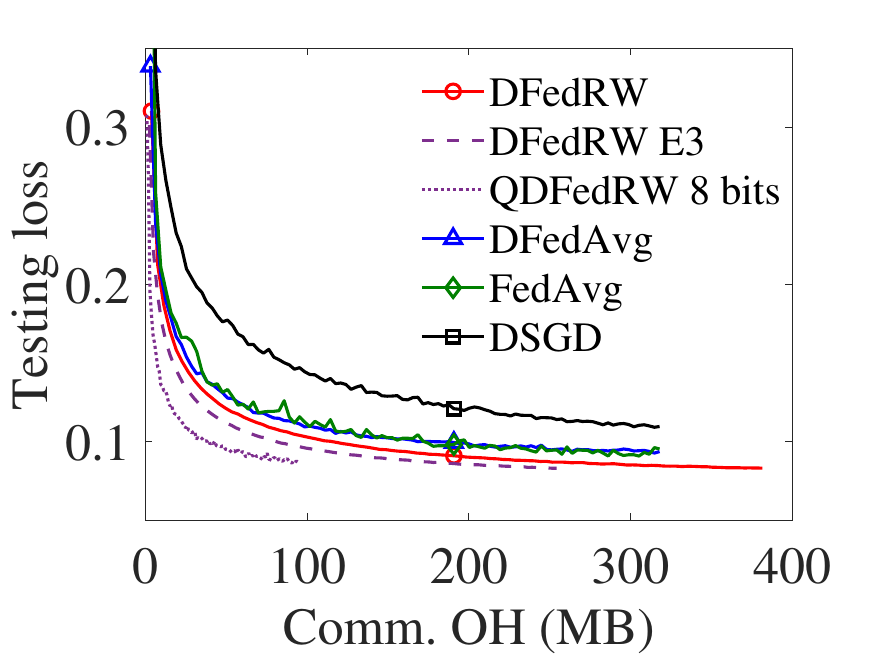}
	}
	
	\vspace{-0.4cm} 
	
	\subfloat[MNIST ($u$=0, $h$=50) \label{acc_mnist_u0_CommOH}]{
		\includegraphics[width=0.48\linewidth]{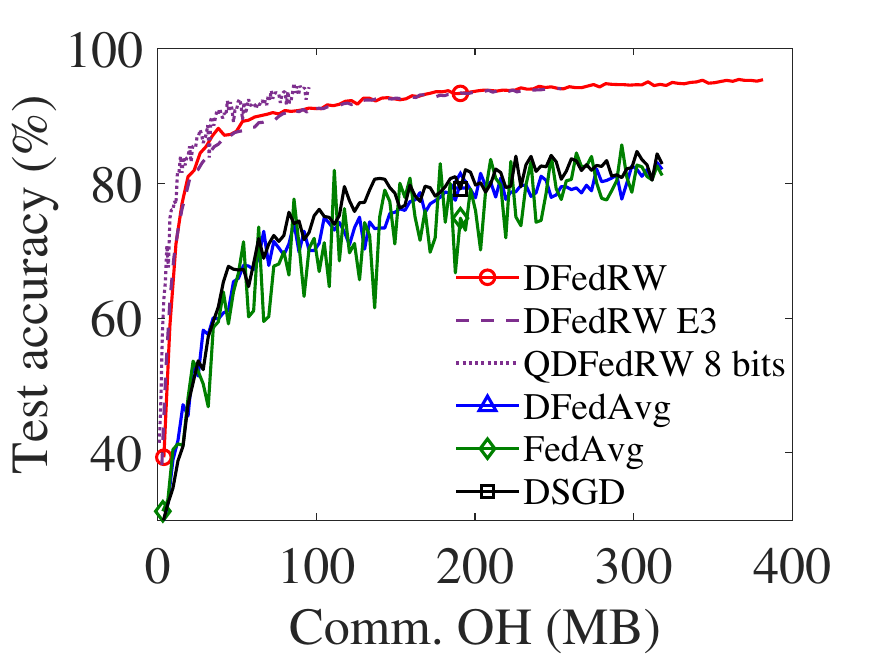}
	}
	\subfloat[MNIST ($u$=0, $h$=50) \label{test_mnist_u0_CommOH}]{
		\includegraphics[width=0.48\linewidth]{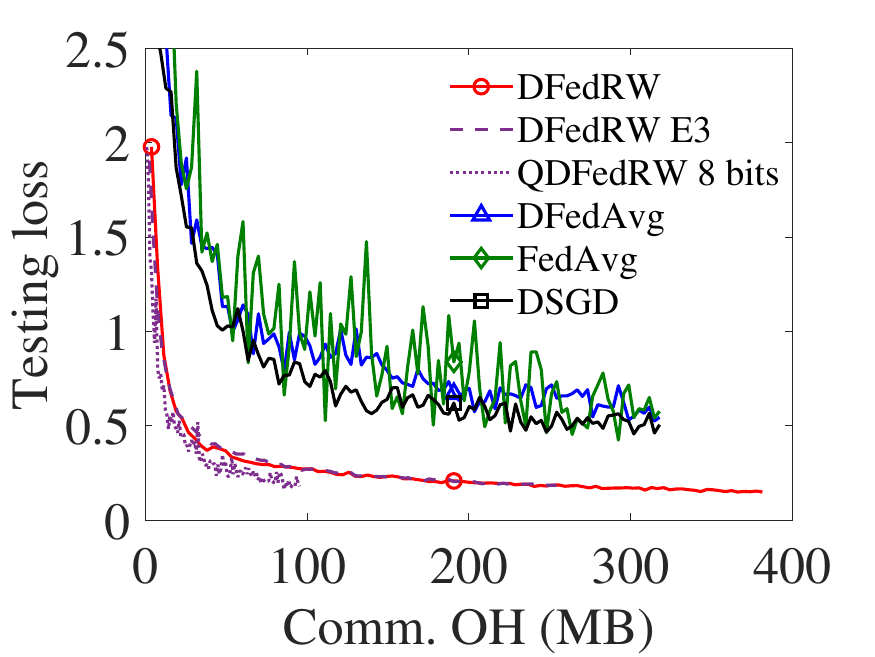}
	}
	
	\caption{Impact of network topology and data size on the communication cost of the busiest device when training 3FNN for MNIST classification with DFedRW and baselines under varying heterogeneity. Communication overhead is denoted as Comm. OH.}
	\label{fig_CommOH}
\end{figure}

\subsection{Large-scale Language Modelling}
To evaluate the effectiveness of DFedRW in real-world applications, we conducted large-scale language modeling experiments on the Reddit dataset. With up to 83,293 distributed devices available, we define the aggregation neighbor set of device $i \in i_m^{t,last}$, $m=\{1,2,... ,M\}$ in round $t$ as $\mathcal{N}_A(i)=\{i_1^{t,last},...,i_{m-1}^{t,last},i_{m+1}^{t,last},...,i_M^{t,last}\}$, where $i_m^{t,last}$ represents the last device in the $m$-th random walk chain. In each training round, the starting device of each random walk chain is directly inherited from the last device of the corresponding chain in the previous round, i.e., $i_m^{t,0}=i_m^{t-1,last}$. This allows us to select only a small number of devices ($M<<n$) per round while maintaining strong performance. In contrast, DFedAvg and DSGD require a significantly larger number of devices in each round, rendering them impractical in this scenario.

\begin{figure}[!t]
	\centering	
	\subfloat[  Reddit (M=10) \label{acc_reddit}]{
		\includegraphics[width=0.48\linewidth]{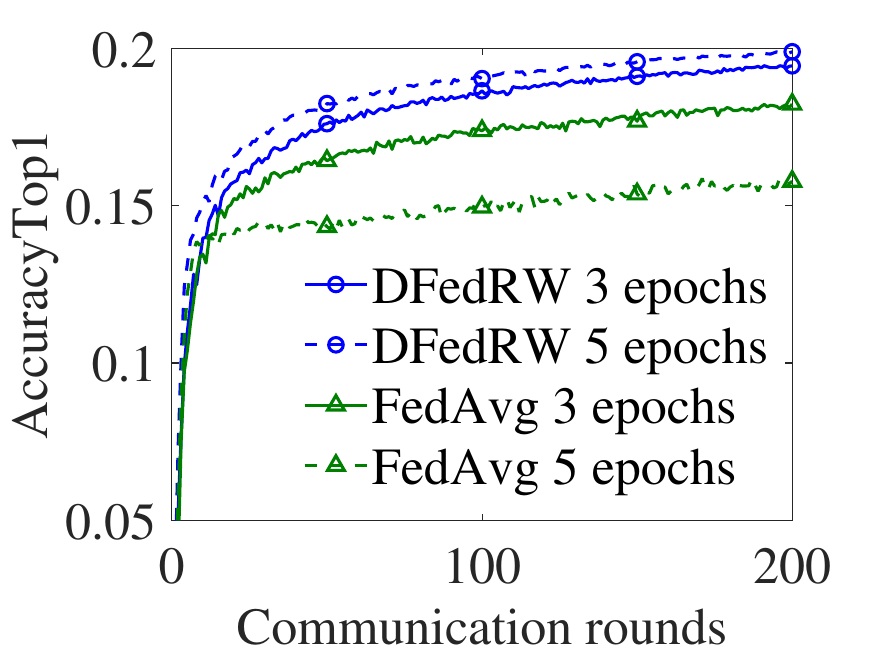}
	}
	\subfloat[  Reddit (M=10) \label{test_reddit}]{
		\includegraphics[width=0.48\linewidth]{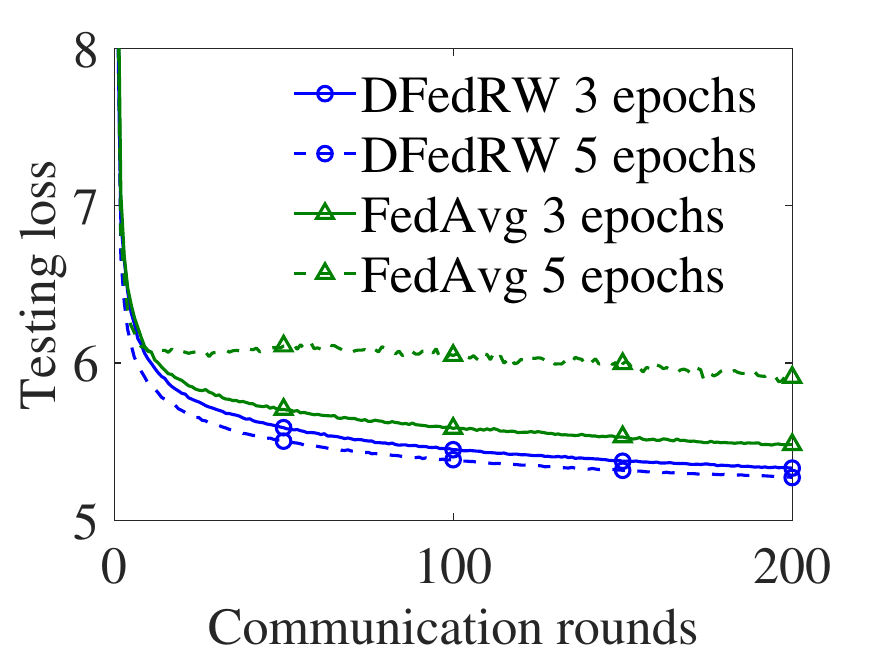}
	}
	\caption{Efficiency comparison of FedAvg and DFedRW in training LSTM for large-scale language modelling. FedAvg performs local epochs, while DFedRW performs random walk epochs.}
	\label{fig_reddit}
\end{figure}

We set the batch size of both DFedRW and FedAvg to 64 and adopt a decreasing learning rate of $\eta^{\bar{k}}=\frac{1}{0.5 \bar{k}^{0.499}}$. For DFedRW, we set $M=10$, while FedAvg selects 10 devices per round for global aggregation. We use \textit{AccuracyTop1} to evaluate the probability that the word assigned the highest probability by the model is correct \cite{43McMahan}. As shown in Fig. \ref{fig_reddit}, when both algorithms perform 3 epochs, DFedRW outperforms FedAvg in AccuracyTop1 by 0.012. As the number of local epochs increases to 5, FedAvg suffers from overfitting, whereas the random walk updates in DFedRW avoid overfitting, resulting in a AccuracyTop1 that is 0.005 higher than its performance at 3 epochs. Additionally, we also evaluate the impact of quantization on QDFedRW by training an LSTM model with different quantization bits on the Reddit dataset. We set $K=2$ while keeping all other configurations unchanged. The results in Fig. \ref{fig_reddit_bits} confirm the effectiveness of QDFedRW for large-scale language modeling, showing that the number of communication bits does not affect its AccuracyTop1. On the contrary, reducing the communication bits actually accelerates global convergence.

\begin{figure}[!t]
	\centering	
	\subfloat[  Reddit (M=10, K=2) \label{acc_reddit_bits}]{
		\includegraphics[width=0.48\linewidth]{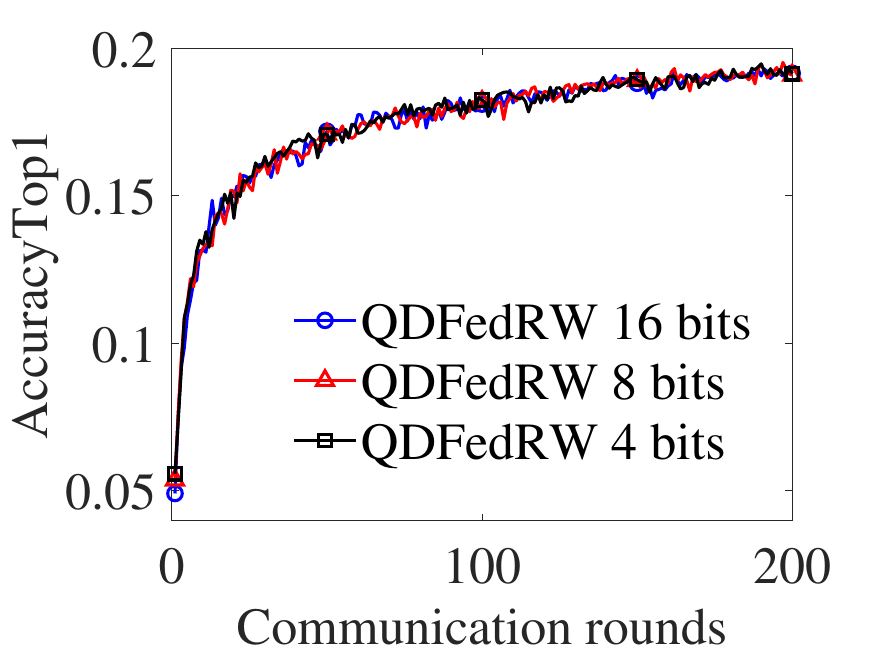}
	}
	\subfloat[  Reddit (M=10, K=2) \label{test_reddit_bits}]{
		\includegraphics[width=0.48\linewidth]{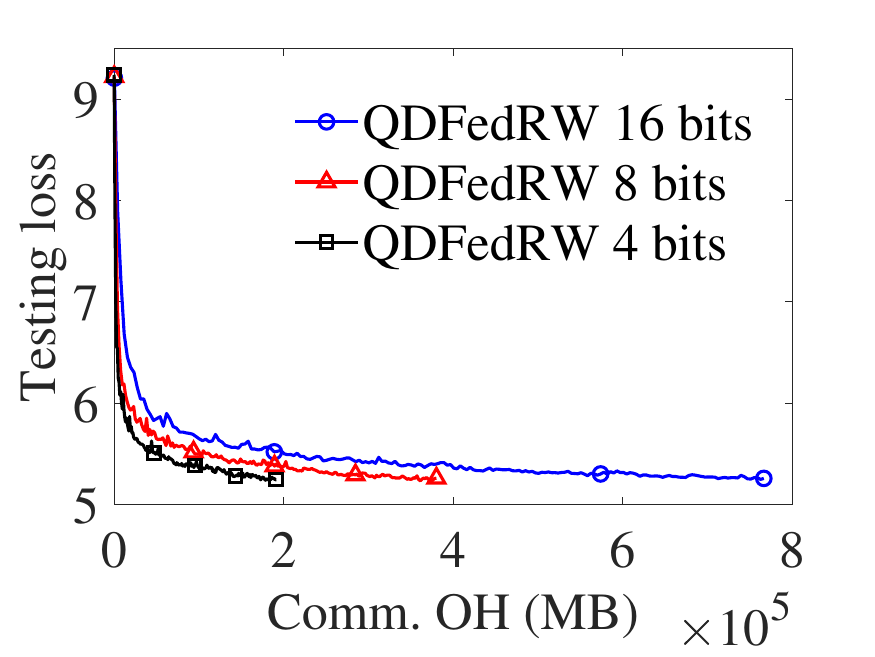}
	}
	\caption{Efficiency comparison of QDFedRW in training LSTM for language modeling with different communication bits. }
	\label{fig_reddit_bits}
\end{figure}

Furthermore, we analyze whether DFedRW becomes impractical in large-scale networks due to latency. We assume that the local computation latency for one local epoch is $T_p$, the communication latency for transmitting model updates between devices or between a device and the server is $T_c$, and the number of random walk epochs is $K$. Notably, this setup does not account for the heterogeneity in device computational capabilities, which actually favors FedAvg since its training time is determined by the slowest device, and multiple local updates exacerbate delays for slower devices. For FedAvg, the training latency per round is $T_{A} = K T_p + 2T_c$, whereas for DFedRW, the random walk updates result in a training latency of $T_{R} = KT_p + 2T_c + (K-1)T_c = KT_p + (K+1)T_c$. Both methods have the same local computation latency ($KT_p$) per round. However, since $K \geq 1$, the relative contribution of $KT_p$ to the total latency in FedAvg is higher than that in DFedRW. To highlight DFedRW’s advantages, we compare it under the most unfavorable scenario, i.e., $T_p=0$. When $K=3$, $T_{R} = 2T_{A} = 4T_c$, Table \ref{table_3} shows that even under the least favorable conditions for DFedRW, it achieves an AccuracyTop1 of at least 0.18 with significantly lower training latency than FedAvg in large-scale networks.

\begin{table}[!t]
	\centering
	\caption{Training Latency and AccuracyTop1 Comparison Under Favorable FedAvg and Unfavorable DFedRW Conditions}
	\label{table_3}
	\renewcommand{\arraystretch}{1}
	\begin{tabular}{lllll}
		\toprule
		AccuracyTop1 & 0.16 & 0.17 & 0.18 & 0.19 \\ 
		\midrule
		FedAvg & 64$T_c$ & 132$T_c$ & 316$T_c$ & 760$T_c$\\
		DFedRW & 88$T_c$ & 152$T_c$ & 252$T_c$ & 536$T_c$\\
		\bottomrule
	\end{tabular}
\end{table}

\section{Conclusion}
In this paper, we propose DFedRW and its quantized version, where local training is accomplished through random walks. Our proposed algorithms achieve higher accuracy and training rates under high statistical and system heterogeneity, while being relatively inexpensive. We establish the convergence bound of the algorithms under convex conditions and derive their relationship with heterogeneities. We also analyze when the quantized DFedRW saves communication. Extensive numerical experiments validate the advantages of our proposed algorithms in terms of convergence rate and accuracy, with these advantages increasing as heterogeneities grows.

Future research can focus on the following directions: 1) Investigating the adaptability of DFedRW in dynamic network topologies. In practice, device arrivals, departures, or movements disrupt the stationary distribution of random walks, leading to gradient discontinuities or data shifts. 2) As data distributions evolve over time, the inconsistency between training and testing distributions may affect model generalization. A key challenge is detecting and adapting to Non-IID data with concept drift. 3) Extending the theory to non-convex settings. By leveraging the Lipschitz gradient continuity assumption \cite{14Sun}, and incorporating data and system heterogeneity along with random walk updates, construct an appropriate Lyapunov function.

\begin{appendices}
\section{Proof of Theorem 1: Convergence Bound of DFedRW}
\subsection{Key Lemmas}
To prove Theorems \ref{theorem_1} and \ref{theorem_2}, we present several technical results that are utilized in the proof. The proof techniques are fundamentally inspired by the works in \cite{25Ayache}, \cite{32Sun}, and \cite{33Ayache}, and are extended to address the more challenging settings of statistical and system heterogeneity.

\begin{lemma}[Lipschitzness]
	\label{lemma_3}
	Under Assumption \ref{assumption_1}, there exists a constant $v_i \geq 0$ such that for any closed and bounded subset $\mathcal{W} \subseteq \mathbb{R}^d$ and any $\mathbf{w}_1, \mathbf{w}_2 \in \mathcal{W}$ satisfies $\left|F_i\left(\mathbf{w}_1\right)-F_i\left(\mathbf{w}_2\right)\right| \leq D\left\|\mathbf{w}_1-\mathbf{w}_2\right\|$, where $D=\sup_{i \in \mathcal{V}} v_i$.
\end{lemma}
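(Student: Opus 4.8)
\emph{Proof proposal.} The plan is to reduce the Lipschitz estimate to a uniform bound on the gradient norms over the compact set $\mathcal{W}$, and then to convert that gradient bound into the claimed inequality via the first‑order characterization of convexity supplied by Assumption~\ref{assumption_1}.

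First I would establish that $\nabla F_i$ is bounded on $\mathcal{W}$. Since $F_i$ is convex and differentiable on all of $\mathbb{R}^d$, its gradient is continuous on $\mathbb{R}^d$ (a standard property of a convex function that is differentiable on an open domain); because $\mathcal{W}$ is closed and bounded it is compact, so the map $\mathbf{w}\mapsto\|\nabla F_i(\mathbf{w})\|$ attains a finite maximum on $\mathcal{W}$. I then set $v_i:=\sup_{\mathbf{w}\in\mathcal{W}}\|\nabla F_i(\mathbf{w})\|$, which is finite and nonnegative.

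Next, for any $\mathbf{w}_1,\mathbf{w}_2\in\mathcal{W}$, I would apply the gradient inequality of Assumption~\ref{assumption_1} with the two arguments interchanged, obtaining both $F_i(\mathbf{w}_1)-F_i(\mathbf{w}_2)\le\nabla F_i(\mathbf{w}_1)^\top(\mathbf{w}_1-\mathbf{w}_2)$ and $F_i(\mathbf{w}_2)-F_i(\mathbf{w}_1)\le\nabla F_i(\mathbf{w}_2)^\top(\mathbf{w}_2-\mathbf{w}_1)$. By Cauchy--Schwarz each right‑hand side is at most $v_i\|\mathbf{w}_1-\mathbf{w}_2\|$, hence $|F_i(\mathbf{w}_1)-F_i(\mathbf{w}_2)|\le v_i\|\mathbf{w}_1-\mathbf{w}_2\|$. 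Taking $D:=\sup_{i\in\mathcal{V}}v_i$, which is attained and finite because $\mathcal{V}=[n]$ is finite, yields the assertion uniformly in $i$.

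The only step requiring care is the finiteness of $v_i$, i.e.\ the boundedness of $\nabla F_i$ on $\mathcal{W}$; it rests on continuity of $\nabla F_i$, which is automatic for a convex function differentiable on an open domain (or may simply be posited as a $C^1$ assumption). An alternative that avoids invoking continuity on all of $\mathbb{R}^d$ is to write $F_i(\mathbf{w}_1)-F_i(\mathbf{w}_2)=\int_0^1\nabla F_i\big(\mathbf{w}_2+t(\mathbf{w}_1-\mathbf{w}_2)\big)^\top(\mathbf{w}_1-\mathbf{w}_2)\,dt$ and bound the integrand along the segment; since $\mathcal{W}$ need not be convex, this would require a gradient bound on a slightly enlarged convex set (e.g.\ a closed ball containing $\mathcal{W}$), whereas the convexity‑inequality route above uses only the values of $\nabla F_i$ at points of $\mathcal{W}$ and is therefore the cleaner choice.
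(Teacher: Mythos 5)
Your proof is correct. The paper does not actually prove this lemma itself---it only cites an external reference---and your argument is the standard one that such references use: continuity of the gradient of a differentiable convex function gives boundedness of $\|\nabla F_i\|$ on the compact set $\mathcal{W}$, the two-sided application of the first-order convexity inequality plus Cauchy--Schwarz converts that into the Lipschitz bound, and finiteness of $\mathcal{V}=[n]$ makes $D=\sup_i v_i$ finite; your closing remark correctly identifies why the gradient-inequality route is preferable to the line-integral route when $\mathcal{W}$ is not assumed convex.
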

\begin{proof}
	A proof of Lemma \ref{lemma_3} can be found in \cite{32Sun}.
\end{proof}

\newtheorem{corollary}{Corollary}
\begin{corollary}[Bounded gradient]
	\label{corollary_1}
	Under assumptions \ref{assumption_1}, for a closed bounded subset $\mathcal{W} \subseteq \mathbb{R}^d$, $\left\|\nabla F_i(\mathbf{w})\right\| \leq D$, $\forall \mathbf{w} \in \mathcal{W}$.
\end{corollary}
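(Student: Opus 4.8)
The plan is to derive the gradient bound directly from the Lipschitz property established in Lemma \ref{lemma_3} together with the convexity postulated in Assumption \ref{assumption_1}. Since $D=\sup_{i\in\mathcal{V}}v_i\ge v_i$, Lemma \ref{lemma_3} already gives $|F_i(\mathbf{w}_1)-F_i(\mathbf{w}_2)|\le D\|\mathbf{w}_1-\mathbf{w}_2\|$ for all $\mathbf{w}_1,\mathbf{w}_2\in\mathcal{W}$, so it suffices to show that a differentiable convex function that is $D$-Lipschitz on $\mathcal{W}$ has gradient norm at most $D$ there. Equivalently, the maximum of $\|\nabla F_i(\cdot)\|$ over the compact set $\mathcal{W}$, which is finite by continuity of the gradient, is bounded by the Lipschitz constant.

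First I would treat a point $\mathbf{w}$ in the interior of $\mathcal{W}$ with $\nabla F_i(\mathbf{w})\ne 0$ (the case $\nabla F_i(\mathbf{w})=0$ is immediate). Set $\mathbf{u}=\nabla F_i(\mathbf{w})/\|\nabla F_i(\mathbf{w})\|$ and pick $t>0$ small enough that $\mathbf{w}+t\mathbf{u}\in\mathcal{W}$. Applying the convexity inequality of Assumption \ref{assumption_1} with $\mathbf{w}_1=\mathbf{w}+t\mathbf{u}$ and $\mathbf{w}_2=\mathbf{w}$ gives $t\|\nabla F_i(\mathbf{w})\|=\nabla F_i(\mathbf{w})^\top(t\mathbf{u})\le F_i(\mathbf{w}+t\mathbf{u})-F_i(\mathbf{w})\le D\|t\mathbf{u}\|=Dt$, and dividing by $t$ yields $\|\nabla F_i(\mathbf{w})\|\le D$. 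One may alternatively pass to the limit $t\downarrow 0$ and invoke that the directional derivative of $F_i$ at $\mathbf{w}$ along $\mathbf{u}$ equals $\|\nabla F_i(\mathbf{w})\|$, bounded above by $D\|\mathbf{u}\|=D$ via Lemma \ref{lemma_3}; this avoids convexity altogether but is otherwise identical.

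Next I would extend the bound to all of $\mathcal{W}$, including its boundary, by a density/continuity argument: every $\mathbf{w}\in\mathcal{W}$ is a limit of points of $\operatorname{int}\mathcal{W}$ (or of interior points of segments contained in $\mathcal{W}$), and a convex differentiable function on a convex domain is continuously differentiable on its interior, so $\nabla F_i$ is continuous and the inequality $\|\nabla F_i(\cdot)\|\le D$ carries over to the closure. Since $i\in\mathcal{V}$ and $\mathbf{w}\in\mathcal{W}$ were arbitrary, this establishes the corollary.

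The main obstacle, and it is a minor one, is exactly the boundary of $\mathcal{W}$: the perturbed point $\mathbf{w}+t\mathbf{u}$ need not stay in $\mathcal{W}$ when $\mathbf{w}$ is a boundary point, so the one-line convexity argument cannot be invoked there directly and one must transfer the interior estimate using continuity of $\nabla F_i$. If one prefers to sidestep this entirely, it suffices to assume (as is implicit in \cite{32Sun}, \cite{33Ayache}) that each $F_i$ extends to a convex differentiable function on an open neighborhood of $\mathcal{W}$, in which case the interior argument applies verbatim at every $\mathbf{w}\in\mathcal{W}$.
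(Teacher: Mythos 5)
Your proof is correct. The paper itself gives no argument for this corollary—it only points to \cite{25Ayache} and \cite{33Ayache}—and what you have written is precisely the standard derivation those references rely on: for a differentiable convex function, the first-order inequality of Assumption \ref{assumption_1} applied in the direction $\mathbf{u}=\nabla F_i(\mathbf{w})/\|\nabla F_i(\mathbf{w})\|$ converts the $D$-Lipschitz bound of Lemma \ref{lemma_3} into $\|\nabla F_i(\mathbf{w})\|\le D$. Your attention to the boundary of $\mathcal{W}$ is more careful than the paper's treatment; the cleanest resolution is the one you name last, namely that $F_i$ is in fact defined and convex on all of $\mathbb{R}^d$ (only the iterates are projected onto $\mathcal{W}$), so the interior argument applies verbatim at every point of $\mathcal{W}$ and the continuity detour is not needed.
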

\begin{proof}
	A proof of Corollary \ref{corollary_1} can be found in \cite{25Ayache}, \cite{33Ayache}.
\end{proof}

\begin{lemma} Using Lemma \ref{lemma_3} and Corollary \ref{corollary_1}, we have
	\label{lemma_5}
	\begin{equation}
		\begin{aligned}
			& \sum_{k} \eta^{k} \mathbb{E}\left[F_{i^{k}}(\mathbf{w}^{k-\tau^{k}})-F_{i^{k}}(\mathbf{w}^{k})\right] \\
			& \quad \leq \frac{D^2}{2} \sum_{k} \tau^{k}(\eta^{k})^2+\frac{D^2}{2} \sum_{k} \sum_{n=k-\tau^{k}}^{k-1}\left(\eta^{n}\right)^2.
		\end{aligned}
	\end{equation}
\end{lemma}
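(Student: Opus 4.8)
The plan is to reduce the left-hand side to a telescoping sum of one-step increments and then apply the arithmetic--geometric mean inequality term by term. First I would invoke Lemma~\ref{lemma_3} (Lipschitzness) to obtain, for each index $k$,
\[
F_{i^{k}}(\mathbf{w}^{k-\tau^{k}})-F_{i^{k}}(\mathbf{w}^{k}) \le D\,\big\|\mathbf{w}^{k-\tau^{k}}-\mathbf{w}^{k}\big\|,
\]
and then write the displacement as a telescoping sum $\mathbf{w}^{k}-\mathbf{w}^{k-\tau^{k}}=\sum_{n=k-\tau^{k}}^{k-1}\big(\mathbf{w}^{n+1}-\mathbf{w}^{n}\big)$, so that the triangle inequality gives $\big\|\mathbf{w}^{k-\tau^{k}}-\mathbf{w}^{k}\big\| \le \sum_{n=k-\tau^{k}}^{k-1}\big\|\mathbf{w}^{n+1}-\mathbf{w}^{n}\big\|$.

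Next I would bound each one-step increment. Since the update~(\ref{eq_5}) is the projection of $\mathbf{w}^{n}-\eta^{n}\hat{\mathbf{g}}_{i^{n}}$ onto the closed convex set $\mathcal{W}$ with $\mathbf{w}^{n}\in\mathcal{W}$, and the projection onto a closed convex set is non-expansive, we get $\big\|\mathbf{w}^{n+1}-\mathbf{w}^{n}\big\| \le \eta^{n}\big\|\hat{\mathbf{g}}_{i^{n}}\big\|$, and Corollary~\ref{corollary_1} (bounded gradient) then yields $\big\|\mathbf{w}^{n+1}-\mathbf{w}^{n}\big\| \le \eta^{n} D$. Multiplying by the factor $\eta^{k}$ in front and taking expectations (the resulting bound is deterministic given the trajectory) gives
\[
\eta^{k}\,\mathbb{E}\!\left[F_{i^{k}}(\mathbf{w}^{k-\tau^{k}})-F_{i^{k}}(\mathbf{w}^{k})\right] \le D^{2}\sum_{n=k-\tau^{k}}^{k-1}\eta^{k}\eta^{n}.
\]

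Finally I would apply $\eta^{k}\eta^{n} \le \tfrac12\big((\eta^{k})^{2}+(\eta^{n})^{2}\big)$ inside the sum: the $(\eta^{k})^{2}$ contributions accumulate to $\tfrac{D^{2}}{2}\sum_{k}\tau^{k}(\eta^{k})^{2}$ since the inner window has exactly $\tau^{k}$ terms, and the $(\eta^{n})^{2}$ contributions give the double sum $\tfrac{D^{2}}{2}\sum_{k}\sum_{n=k-\tau^{k}}^{k-1}(\eta^{n})^{2}$, which is precisely the claimed bound. The one point demanding care is the very first step: a window $[k-\tau^{k},k]$ may in principle straddle a decentralized aggregation step~(\ref{eq_6}), so one must check that the increment estimate $\big\|\mathbf{w}^{n+1}-\mathbf{w}^{n}\big\|\le\eta^{n}D$ still holds across aggregation. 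This is exactly where the convention stated just before the lemma is used --- the random walk before and after aggregation is treated as one continuous trajectory, and $\tau^{k}$ is chosen (as in the mixing-time definition of $\tau^{k}$ in Theorem~\ref{theorem_2}) so that the relevant window consists of genuine random-walk SGD steps, so the non-expansiveness argument applies verbatim and no separate treatment of the aggregation step is needed.
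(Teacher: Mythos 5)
Your proof is correct and follows essentially the same route as the paper's: Lipschitzness from Lemma \ref{lemma_3}, a telescoping/triangle-inequality decomposition of $\|\mathbf{w}^{k-\tau^{k}}-\mathbf{w}^{k}\|$, the per-step bound $\|\mathbf{w}^{n+1}-\mathbf{w}^{n}\|\le \eta^{n}D$ via non-expansiveness of the projection and Corollary \ref{corollary_1}, and finally $\eta^{k}\eta^{n}\le\tfrac12((\eta^{k})^2+(\eta^{n})^2)$ before summing over $k$. Your explicit remark about windows straddling an aggregation step, and your correct naming of the last inequality as AM--GM (the paper calls it Cauchy--Schwarz), are minor refinements rather than a different argument.
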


\begin{lemma} Under Assumption \ref{assumption_5}, by applying Definition \ref{definition_4} and Lemma \ref{lemma_10}, we have
	\label{lemma_6}
	\begin{equation}
		\begin{aligned}	
			& \sum_{k} \eta^{k} \mathbb{E}\left[f(\mathbf{w}^{k-\tau^{k}})-f\left(\mathbf{w}^*\right)\right] \\
			& \quad \leq \sum_{k} \eta^{k} \mathbb{E}\left[F_{i^k}(\mathbf{w}^{k-\tau^{k}})-F_{i^{k}}(\mathbf{w}^*)\right] +\sum_{k} \frac{n \eta^{k}}{2 k}.
		\end{aligned}
	\end{equation}
\end{lemma}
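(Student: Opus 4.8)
The plan is to bound, for each index $k$, the gap
\[
  \mathbb{E}\!\left[f(\mathbf{w}^{k-\tau^{k}})-f(\mathbf{w}^*)\right]-\mathbb{E}\!\left[F_{i^{k}}(\mathbf{w}^{k-\tau^{k}})-F_{i^{k}}(\mathbf{w}^*)\right]
\]
by $\tfrac{n}{2k}$ and then multiply by $\eta^{k}$ and sum. The device is to look $\tau^{k}$ steps into the past: let $\mathcal{F}_{k-\tau^{k}}$ be the $\sigma$-algebra generated by the walk and the stochastic samples up through step $k-\tau^{k}$, so that both $\mathbf{w}^{k-\tau^{k}}$ and the state $i^{k-\tau^{k}}$ are $\mathcal{F}_{k-\tau^{k}}$-measurable while the $\tau^{k}$ transitions taken afterwards are independent of $\mathbf{w}^{k-\tau^{k}}$. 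By the time-homogeneous Markov property (Assumption \ref{assumption_5}) the conditional law of $i^{k}$ given $\mathcal{F}_{k-\tau^{k}}$ is the row $\mathbf{P}^{\tau^{k}}(i^{k-\tau^{k}},:)$, hence for $\mathbf{w}\in\{\mathbf{w}^{k-\tau^{k}},\mathbf{w}^*\}$ one gets
\[
  \mathbb{E}\!\left[F_{i^{k}}(\mathbf{w})\mid\mathcal{F}_{k-\tau^{k}}\right]=\sum_{j=1}^{n}\mathbf{P}^{\tau^{k}}(i^{k-\tau^{k}},j)\,F_{j}(\mathbf{w}).
\]
Because the Metropolis--Hastings transition rule in (\ref{eq_10}) is designed so that the stationary distribution is uniform, $f(\mathbf{w})=\tfrac1n\sum_{j}F_{j}(\mathbf{w})=\sum_{j}\pi^{*}_{j}F_{j}(\mathbf{w})$. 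Subtracting these two identities at $\mathbf{w}^{k-\tau^{k}}$ and at $\mathbf{w}^*$ and taking expectations turns the gap above into the exact expression
\[
  \mathbb{E}\!\left[\sum_{j=1}^{n}\big(\pi^{*}_{j}-\mathbf{P}^{\tau^{k}}(i^{k-\tau^{k}},j)\big)\big(F_{j}(\mathbf{w}^{k-\tau^{k}})-F_{j}(\mathbf{w}^*)\big)\right].
\]

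Next I would estimate this sum termwise. By H\"older's inequality it is at most $\big(\max_{j}|F_{j}(\mathbf{w}^{k-\tau^{k}})-F_{j}(\mathbf{w}^*)|\big)\sum_{j}|\pi^{*}_{j}-\mathbf{P}^{\tau^{k}}(i^{k-\tau^{k}},j)|$; the first factor is a fixed constant by the Lipschitz estimate of Lemma \ref{lemma_3} together with the boundedness of $\mathcal{W}$, and each of the $n$ entries of the second factor is bounded by $\|\pi^{*}-\mathbf{P}^{\tau^{k}}(i^{k-\tau^{k}},:)\|\le\zeta\lambda_{P}^{\tau^{k}}$ via Lemma \ref{lemma_10} (which uses Definition \ref{definition_4}), so the whole sum is $\le n\,\zeta\lambda_{P}^{\tau^{k}}$ up to that constant. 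The choice $\tau^{k}=\min\{k,\max\{\lceil\ln(2\zeta k)/\ln(1/\lambda_{P})\rceil,K_{\mathbf{P}}\}\}$ is precisely what makes $\zeta\lambda_{P}^{\tau^{k}}\le\tfrac1{2k}$ while keeping $\tau^{k}\ge K_{\mathbf{P}}$, so Lemma \ref{lemma_10} is indeed applicable; hence the per-step gap is $\le\tfrac{n}{2k}$. Multiplying through by $\eta^{k}$ and summing over $k$ yields the additive term $\sum_{k}\tfrac{n\eta^{k}}{2k}$, which is exactly the claimed inequality.

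The main obstacle is the boundary regime of small $k$. When $k$ is below the mixing threshold one has $\tau^{k}=k$, so $\mathbf{w}^{k-\tau^{k}}=\mathbf{w}^{0}$ and $k-\tau^{k}=0$, and the estimate $\zeta\lambda_{P}^{\tau^{k}}\le\tfrac1{2k}$ can fail; this finite initial block must instead be controlled by the crude uniform bound $|F_{j}(\mathbf{w}^{0})-F_{j}(\mathbf{w}^*)|\le D\,\mathrm{diam}(\mathcal{W})$, and one must verify that the resulting finite sum $\sum\eta^{k}\cdot\mathrm{const}$ is absorbed without disturbing the stated form. A secondary delicate point is the conditioning itself: one has to argue carefully that $\mathbf{w}^{k-\tau^{k}}$ is frozen by $\mathcal{F}_{k-\tau^{k}}$ — which is why it matters that the walk is unbroken across aggregation rounds, so that $\mathbf{w}^{k-\tau^{k}}$ genuinely depends only on the first $k-\tau^{k}$ iterates — and that the $\tau^{k}$ subsequent transitions (and the fresh mini-batch draw at step $k$, if $F_{i^{k}}$ carries a sample) are independent of it, so that the transition-matrix identity is legitimate.
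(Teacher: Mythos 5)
Your argument is essentially the paper's own proof: condition on the history up to step $k-\tau^{k}$, invoke the Markov property so that the conditional law of $i^{k}$ is the row $\mathbf{P}^{\tau^{k}}(i^{k-\tau^{k}},:)$, compare against the uniform stationary distribution via Lemma \ref{lemma_10} and the choice of $\tau^{k}$ giving $\zeta\lambda_{P}^{\tau^{k}}\le\tfrac{1}{2k}$, and sum the resulting per-step gap of $\tfrac{n}{2k}$ weighted by $\eta^{k}$. Your remarks on the absorbed multiplicative constant $\max_{j}|F_{j}(\mathbf{w}^{k-\tau^{k}})-F_{j}(\mathbf{w}^{*})|$ and the small-$k$ regime are in fact slightly more careful than the paper, which passes over both points silently.
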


\begin{lemma} Under Assumption \ref{assumption_3}, by applying Lemma \ref{lemma_3} and Corollary \ref{corollary_1}, we have
	\label{lemma_7}
	\begin{equation}
		\begin{aligned}
			\sum_{k} \eta^{k} \mathbb{E}\left[f(\mathbf{w}^{k})-f(\mathbf{w}^{k-\tau^{k}})\right] \leq \frac{2 n D^2}{\ln \left(1 / \lambda_p\right)} \sum_{k=\bar{K}}^{\infty} \ln k \cdot(\eta^{k})^2.
		\end{aligned}
	\end{equation}
\end{lemma}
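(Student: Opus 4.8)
The plan is to turn the function gap $f(\mathbf{w}^{k})-f(\mathbf{w}^{k-\tau^{k}})$ into a displacement bound via Lipschitzness, control the displacement by the step sizes accumulated over the lookback window of length $\tau^{k}$, and then exploit that $\tau^{k}$ grows only logarithmically in $k$ so that the whole sum is absorbed into the summable quantity $\sum_{k}\ln k\,(\eta^{k})^{2}$ guaranteed by Assumption~\ref{assumption_3}.

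First I would invoke Lemma~\ref{lemma_3}: each $F_{i}$ is $D$-Lipschitz on $\mathcal{W}$, hence so is $f=\frac{1}{n}\sum_{i}f_{i}$, giving $f(\mathbf{w}^{k})-f(\mathbf{w}^{k-\tau^{k}})\le D\,\|\mathbf{w}^{k}-\mathbf{w}^{k-\tau^{k}}\|$. Next, from the random walk update (\ref{eq_5}), the nonexpansiveness of the projection $\boldsymbol{\Pi}_{\mathcal{W}}$, and the gradient bound $\|\nabla F_{i^{j}}\|\le D$ of Corollary~\ref{corollary_1} (which carries over to the unbiased estimate $\hat{\mathbf{g}}_{i^{j}}$ once expectations are taken, using $\mathbb{E}[X]\le\sqrt{\mathbb{E}[X^{2}]}$ termwise), each one-step move obeys $\mathbb{E}\|\mathbf{w}^{j+1}-\mathbf{w}^{j}\|\le\eta^{j}D$; summing over the window and applying the triangle inequality yields $\mathbb{E}\|\mathbf{w}^{k}-\mathbf{w}^{k-\tau^{k}}\|\le D\sum_{j=k-\tau^{k}}^{k-1}\eta^{j}$. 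Combining the two estimates, multiplying by $\eta^{k}$ and summing over $k$ gives $\sum_{k}\eta^{k}\mathbb{E}[f(\mathbf{w}^{k})-f(\mathbf{w}^{k-\tau^{k}})]\le D^{2}\sum_{k}\eta^{k}\sum_{j=k-\tau^{k}}^{k-1}\eta^{j}$.

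The heart of the argument is to bound $\sum_{k}\eta^{k}\sum_{j=k-\tau^{k}}^{k-1}\eta^{j}$ by a constant multiple of $\frac{1}{\ln(1/\lambda_{P})}\sum_{k\ge\bar{K}}\ln k\,(\eta^{k})^{2}$. Since $\eta^{j}$ is nonincreasing, $\sum_{j=k-\tau^{k}}^{k-1}\eta^{j}\le\tau^{k}\,\eta^{k-\tau^{k}}$, so it suffices to control $\eta^{k}\tau^{k}\eta^{k-\tau^{k}}$. The explicit form $\tau^{k}=\min\{k,\max\{\lceil\ln(2\zeta k)/\ln(1/\lambda_{P})\rceil,K_{\mathbf{P}}\}\}$ gives $\tau^{k}=\mathcal{O}(\ln k/\ln(1/\lambda_{P}))$, so beyond a threshold $\bar{K}$ (depending on $\lambda_{P}$, $\zeta$, $K_{\mathbf{P}}$) one has $k-\tau^{k}\ge k/2$; together with the polynomial rate $\eta^{k}=\mathcal{O}(k^{-q})$ this yields $\eta^{k-\tau^{k}}\le 2^{q}\eta^{k}$ for $k\ge\bar{K}$. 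Hence for $k\ge\bar{K}$, $\eta^{k}\tau^{k}\eta^{k-\tau^{k}}=\mathcal{O}\big(\tfrac{\ln k}{\ln(1/\lambda_{P})}(\eta^{k})^{2}\big)$, while the finitely many terms with $k<\bar{K}$ are bounded by a constant times $\sum_{k<\bar{K}}(\eta^{k})^{2}<\infty$. Collecting all constants (the same mixing/window bookkeeping that produces the factor $n$ in Lemmas~\ref{lemma_5}--\ref{lemma_6} enters here as well) gives the stated bound $\frac{2nD^{2}}{\ln(1/\lambda_{P})}\sum_{k=\bar{K}}^{\infty}\ln k\,(\eta^{k})^{2}$, which is finite by Assumption~\ref{assumption_3}.

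I expect the main obstacle to be this window/monotonicity step: one must check that $\tau^{k}$ is genuinely $\mathcal{O}(\ln k)$ (so the lookback window never consumes a constant fraction of the horizon, which is precisely why the threshold $\bar{K}$ and the truncated sum $\sum_{k\ge\bar{K}}$ appear in the statement) and that replacing $\eta^{k-\tau^{k}}$ by a bounded multiple of $\eta^{k}$ is legitimate --- this genuinely uses the polynomial form $\eta^{k}=\mathcal{O}(k^{-q})$, not merely the abstract summability of Assumption~\ref{assumption_3}. The Lipschitz and projection estimates and the passage to expectations are routine by comparison.
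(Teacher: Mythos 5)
Your proposal is correct and follows essentially the same route as the paper's: Lipschitzness from Lemma~\ref{lemma_3}, the triangle inequality over the lookback window, the one-step bound $\mathbb{E}\|\mathbf{w}^{n+1}-\mathbf{w}^{n}\|\le \eta^{n}D$ from Corollary~\ref{corollary_1} and nonexpansiveness of the projection, and finally $\tau^{k}=\mathcal{O}\bigl(\ln k/\ln(1/\lambda_{P})\bigr)$ combined with the summability in Assumption~\ref{assumption_3}. The only divergence is the last reduction: the paper applies Young's inequality $\eta^{k}\eta^{n}\le\frac{1}{2}\bigl((\eta^{n})^{2}+(\eta^{k})^{2}\bigr)$ and then absorbs the window sum via $\sum_{n=k-\tau^{k}}^{k-1}(\eta^{n})^{2}\le\tau^{k}(\eta^{k})^{2}$, whereas you use monotonicity together with the polynomial form $\eta^{k}=\mathcal{O}(k^{-q})$ to replace $\eta^{k-\tau^{k}}$ by $2^{q}\eta^{k}$ --- your version is in fact the more careful one, since for a decreasing step size the paper's window inequality is justified only by exactly the kind of $\eta^{k-\tau^{k}}\lesssim\eta^{k}$ argument you supply, at the harmless cost of a $2^{q}$ in the constant (and note the paper's factor $n$ arises simply from its use of $nD$ as the Lipschitz constant of $f$, so your tighter $D$ is not a gap).
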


\subsection{Completing the Proof of Theorem 1}
\begin{proof}[Proof of Theorem 1]
	We analyze the expected decrease in the objective when executing one step of DFedRW. Given that $\mathcal{W}$ is convex, it follows that
	\begin{equation}
		\label{eq_21}
		\begin{aligned}
			& \|\mathbf{w}^{k+1}-\mathbf{w}^*\|^2\\
			&\quad=\|\boldsymbol{\Pi}_{\mathcal{W}}(\mathbf{w}^{k}-\eta^{k} \hat{\nabla} F_{i^{k}}(\mathbf{w}^{k}))-\boldsymbol{\Pi}_{\mathcal{W}}(\mathbf{w}^*)\|^2 \\
			& \quad \leq\|\mathbf{w}^{k}-\eta^{k} \hat{\nabla} F_{i^{k}}(\mathbf{w}^{k})-\mathbf{w}^*\|^2 \\
			& \quad =\|\mathbf{w}^{k}-\mathbf{w}^*\|^2-2 \eta^{k} \hat{\nabla} F_{i^{k}}(\mathbf{w}^{k})^\top (\mathbf{w}^{k}-\mathbf{w}^*)\\
			& \qquad+(\eta^{k})^2\|\hat{\nabla} F_{i^{k}}(\mathbf{w}^{k})\|^2.
		\end{aligned}
	\end{equation}
	Using the convexity of function $f_i$, as well as Lemma \ref{lemma_1} and gradient boundedness, we have
	\begin{equation}
		\label{eq_22}
		\begin{aligned}
			& \|\mathbf{w}^{k+1}-\mathbf{w}^*\|^2 \\
			& \quad \leq\|\mathbf{w}^{k}-\mathbf{w}^*\|^2 -2 \eta^{k}\left(F_{i^{k}}(\mathbf{w}^{k})-F_{i^{k}}\left(\mathbf{w}^*\right)\right) \\
			& \qquad +(\eta^{k})^2\|\nabla F_{i^{k}}(\mathbf{w}^{k})\|^2 \\
			& \quad \leq\|\mathbf{w}^{k}-\mathbf{w}^*\|^2-2 \eta^{k}\left(F_{i^{k}}(\mathbf{w}^{k})-F_{i^{k}}\left(\mathbf{w}^*\right)\right) \\
			& \qquad+\frac{(\eta^{k})^2}{2}\|\nabla F_{i^{k}}(\mathbf{w}^{k})\|^2+\frac{(\eta^{k})^2}{2} \hat{\gamma}^2 D^2.
		\end{aligned}
	\end{equation}
	
	By re-arranging (\ref{eq_22}), we come to
	\begin{equation}
		\label{eq_11}
		\begin{aligned}
			&\eta^{k}\left(F_{i^{k}}(\mathbf{w}^{k})-F_{i^{k}}\left(\mathbf{w}^*\right)\right)  \\
			& \quad \leq \frac{1}{2}\left(\|\mathbf{w}^{k}-\mathbf{w}^*\|^2-\|\mathbf{w}^{k+1}-\mathbf{w}^*\|^2\right) \\
			& \qquad+\frac{(\eta^{k})^2}{4}\|\nabla F_{i^{k}}(\mathbf{w}^{k})\|^2+\frac{(\eta^{k})^2}{4} \hat{\gamma}^2 D^2.
		\end{aligned}
	\end{equation}
	
	Summing the above equation for $k$ and using the boundedness of $\mathcal{W}$,
	\begin{equation}
		\label{eq_12}
		\begin{aligned}
			& \sum_{k} \eta^{k}\left(F_{i^{k}}(\mathbf{w}^{k})-F_{i^{k}}\left(\mathbf{w}^*\right)\right) \leq \frac{1}{2}\|\mathbf{w}^{0}-\mathbf{w}^*\|^2 \\
			& \quad+\sum_k \frac{(\eta^{k})^2}{4}\|\nabla F_{i^{k}}(\mathbf{w}^{k})\|^2+\sum_k \frac{(\eta^{k})^2}{4} \hat{\gamma}^2 D^2.
		\end{aligned}
	\end{equation}
	
	Using Lemma \ref{lemma_10}, we get \cite{25Ayache}
	\begin{equation}
		\label{eq_13}
		\max _i \|\boldsymbol{\Pi}^*-\mathbf{P}^{\tau^{k}}(i,:)\| \leq \zeta \lambda_P^{\tau^{k}} \leq \frac{1}{2 k},
	\end{equation}
	for $\tau^{k}=\min \{k, \max \{ \lceil \frac{\ln (2 \zeta k)}{\ln \left(1 / \lambda_P\right)} \rceil ,K_\mathbf{P}\}\}$.
	
	By using (\ref{eq_12}) and Lemma \ref{lemma_5}, we have
	\begin{equation}
		\label{eq_14}
		\begin{aligned}
			& \sum_{k} \eta^{k} \mathbb{E}\left[F_{i^{k}}(\mathbf{w}^{k-\tau^{k}})-F_{i^{k}}\left(\mathbf{w}^*\right)\right] \\
			& \leq \frac{D^2}{2} \Big[\sum_{k} \tau^{k}(\eta^{k})^2+ \sum_{k} \sum_{n={k}-\tau^{k}}^{{k}-1}(\eta^{n})^2 \Big] +\frac{1}{2}\|\mathbf{w}^{0}-\mathbf{w}^*\|^2\\ 
			& \quad  +\sum_{k} \frac{(\eta^{k})^2}{4} \mathbb{E}\|\nabla F_{i^{k}}(\mathbf{w}^{k})\|^2 +\sum_{k} \frac{(\eta^{k})^2}{4} \hat{\gamma}^2 D^2.
		\end{aligned}
	\end{equation}
	
	We introduce the statistical heterogeneity of data. According to (\ref{eq_14}), Definition \ref{definition_1}, and Corollary \ref{corollary_1}, it follows that
	\begin{equation}
		\label{eq_15}
		\begin{aligned}
			&\sum_{k} \eta^{k} \mathbb{E}\left[F_{i^{k}}(\mathbf{w}^{k-\tau^{k}})-F_{i^{k}}\left(\mathbf{w}^*\right)\right] \\
			& \quad \leq \frac{D^2}{2} \sum_{k} \tau^{k}(\eta^{k})^2+\frac{D^2}{2} \sum_{k} \sum_{n={k}-\tau^{k}}^{{k}-1}(\eta^{n})^2\\
			& \qquad +\frac{1}{2}\|\mathbf{w}^{0}-\mathbf{w}^*\|^2 +\frac{(\delta^2+\hat{\gamma}^2) D^2}{4} \sum_{k}(\eta^{k})^2.
		\end{aligned}
	\end{equation}
	
	According to the step size summability in Assumption \ref{assumption_3}, the results are as follows \cite{32Sun}
	\begin{equation}
		\label{eq_16}
		\begin{aligned}
			\sum_{{k}=\bar{K}}^{\infty} \sum_{n=k-\tau^{k}}^{k-1}\left(\eta^{n}\right)^2 &\leq \sum_{{k}=\bar{K}}^{\infty} \tau^{k}(\eta^{k})^2\\ 
			&\leq \frac{2}{\ln \left(1 / \lambda_P\right)} \sum_{{k}=\bar{K}}^{\infty} \ln {k} \cdot(\eta^{k})^2<\infty.
		\end{aligned}
	\end{equation}
	where $\bar{K}$ is a large integer. It follows from (\ref{eq_15}), (\ref{eq_16}) and Lemma \ref{lemma_6} that
	\begin{equation}
		\label{eq_16-1}
		\begin{aligned}
			& \sum_{k} \eta^{k} \mathbb{E}\left[f(\mathbf{w}^{k-\tau^{k}})-f(\mathbf{w}^*)\right] \\
			& \leq \frac{2 D^2}{\ln \left(1 / \lambda_P\right)} \sum_{k=\bar{K}}^{\infty} \ln k \cdot(\eta^{k})^2+\frac{1}{2}\|\mathbf{w}^{0}-\mathbf{w}^*\|^2 \\
			& \quad+\frac{(\delta^2+\hat{\gamma}^2) D^2}{4} \sum_{k}(\eta^{k})^2+\sum_{k} \frac{n \eta^{k}}{2 k}
		\end{aligned}
	\end{equation}
	
	Finally, combining (\ref{eq_16-1}) and Lemma \ref{lemma_7},
	\begin{equation}
		\label{eq_17}
		\begin{aligned}
			& \sum_{k} \eta^{k} \mathbb{E}\left[f(\mathbf{w}^{k})-f\left(\mathbf{w}^*\right)\right] \\
			& \quad \leq \frac{(1+n) \cdot 2 D^2}{\ln \left(1 / \lambda_P\right)} \sum_{k=\bar{K}}^{\infty} \ln k \cdot(\eta^{k})^2+\frac{1}{2}\|\mathbf{w}^{0}-\mathbf{w}^*\|^2 \\
			& \qquad +\frac{(\delta^2+\hat{\gamma}^2) D^2}{4} \sum_{k}(\eta^{k})^2+\sum_{k} \frac{n \eta^{k}}{2 k}.
		\end{aligned}
	\end{equation}
	
	By using the convexity of $f$ and Jensen’s inequality, we have
	\begin{equation}
		\label{eq_18}
		\begin{aligned}
			& \sum_{i=1}^k \eta^{i} \cdot \mathbb{E}\left[f(\overline{\mathbf{w}}^{k})-f\left(\mathbf{w}^*\right)\right] \leq \sum_{i=1}^k \eta^{i} \mathbb{E}\left[f\left(\mathbf{w}^{i}\right)-f\left(\mathbf{w}^*\right)\right].
		\end{aligned}
	\end{equation}
	
	Then
	\begin{equation}
		\label{eq_19}
		\begin{aligned}
			& \mathbb{E} {\left[f(\overline{\mathbf{w}}^{k})-f\left(\mathbf{w}^*\right)\right] } \\
			& \quad\leq \frac{(1+n) \cdot 2 D^2 \sum_{k=\bar{K}}^{\infty} \ln k \cdot(\eta^{k})^2}{\ln \left(1 / \lambda_P\right) \sum_{k} \eta^{k}}+\frac{\|\mathbf{w}^{0}-\mathbf{w}^*\|^2}{2 \sum_{k} \eta^{k}} \\
			& \qquad +\frac{(\delta^2+\hat{\gamma}^2) D^2}{4 \sum_{k} \eta^{k}} \sum_{k}(\eta^{k})^2+\frac{\sum_{k} \frac{n \eta^{k}}{2 k}}{\sum_{k} \eta^{k}}.
		\end{aligned}
	\end{equation}
	
	We have completed the proof of Theorem \ref{theorem_1}.
\end{proof}

\subsection{Deferred Proofs of Key Lemmas}
\begin{proof}[Proof of Lemma \ref{lemma_1}]
	In accordance with Definition \ref{definition_2}, if only $K=1$ local update is performed in the $t$-th communication round, 
	\begin{equation}
		\label{eq_lemma1}
		\|\nabla F_i(\mathbf{w}^{k})\| \leq \gamma_i^{k-1}\|\nabla F_i(\mathbf{w}^{k-1})\|.
	\end{equation}
	DFedRW distributes $K>1$ local updates across different devices, executing (\ref{eq_lemma1}) on $i^{k-K}, i^{k-K+1}, \dots, i^{k-1}$ respectively. We recursively apply the above inequality and obtain
	\begin{equation}
		\|\nabla F_{i^k}(\mathbf{w}^k)\| \leq \gamma_{i^{k-1}}^{k-1}\gamma_{i^{k-2}}^{k-2}\cdots\gamma_{i^{k-K}}^{k-K} \|\nabla F_{i^{k-K}}(\mathbf{w}^{k-K})\|.
	\end{equation}
\end{proof}

\begin{proof}[Proof of Lemma \ref{lemma_5}]
	Based on Lemma \ref{lemma_3} and Corollary \ref{corollary_1}, and using the Cauchy-Schwarz inequality, we obtain 
	\begin{equation}
		\label{eq_23}
		\begin{aligned}
			& \eta^{k} \mathbb{E}\left[F_{i^{k}}(\mathbf{w}^{k-\tau^{k}})-F_{i^{k}}(\mathbf{w}^{k})\right]  \leq D \eta^{k} \mathbb{E}\|\mathbf{w}^{k-\tau^{k}}-\mathbf{w}^{k}\| \\
			& \quad \leq D \eta^{k} \sum_{n=k-\tau^{k}}^{k-1} \mathbb{E}(\|\mathbf{w}^{n+1}-\mathbf{w}^{n}\|) \\
			& \quad \leq D^2 \eta^{k} \sum_{n=k-\tau^{k}}^{k-1} \eta^{n}  \leq \frac{D^2}{2} \sum_{n=k-\tau^{k}}^{k-1}\left((\eta^{n})^2+(\eta^{k})^2\right) \\
			& \quad \leq \frac{D^2}{2} \tau^{k}(\eta^{k})^2+\frac{D^2}{2} \sum_{n=k-\tau^k}^{k-1}\left(\eta^{n}\right)^2.
		\end{aligned}
	\end{equation}
	Summing the above equation over $k$, we obtain
	\begin{equation}
		\label{eq_24}
		\begin{aligned}
			& \sum_{k} \eta^{k} \mathbb{E}\left[F_{i^{k}}(\mathbf{w}^{k-\tau^{k}})-F_{i^{k}}(\mathbf{w}^{k})\right] \\
			& \quad \leq \sum_{k} \frac{D^2}{2} \tau^{k}(\eta^{k})^2+\frac{D^2}{2} \sum_{k} \sum_{n=k-\tau^{k}}^{k-1}\left(\eta^{n}\right)^2.
		\end{aligned}
	\end{equation}
\end{proof}

\begin{proof}[Proof of Lemma \ref{lemma_6}]
	By using the Markov property and directly calculating from Definition \ref{definition_4} and Lemma \ref{lemma_10}, we have
	\begin{equation}
		\label{eq_25}
		\displaybreak[4]
		\begin{aligned}
			& \mathbb{E}_{i^{k}}\left[F_{i^{k}}(\mathbf{w}^{k-\tau^{k}})-F_{i^{k}}\left(\mathbf{w}^*\right) \mid  X_0, X_1, \ldots, X_{k-\tau^{k}}\right] \\
			& = \sum_{i=1}^n\left(F_i(\mathbf{w}^{k-\tau^{k}})-F_i(\mathbf{w}^*)\right)\\
			& \qquad \times \mathbb{P}\left(i^{k}=i \mid X_0, X_1, \ldots, X_{k-\tau^{k}}\right) \\
			& = \sum_{i=1}^n\left(F_i(\mathbf{w}^{k-\tau^{k}})-F_i\left(\mathbf{w}^*\right)\right) \times \mathbb{P}\left(i^{k}=i \mid X_{k-\tau^{k}}\right) \\
			& =\sum_{i=1}^n\left(F_i(\mathbf{w}^{k-\tau^{k}})-F_i\left(\mathbf{w}^*\right)\right) \times [\mathbf{P}^{\tau^{k}}]_{i^{k}-\tau^{k},i} \\
			& \geq \left(f(\mathbf{w}^{k-\tau^{k}})-f\left(\mathbf{w}^*\right)\right)-\frac{n}{2 k}.
		\end{aligned}
	\end{equation}
	Therefore,
	\begin{equation}
		\label{eq_26}
		\begin{aligned}
			& \sum_{k} \eta^{k} \mathbb{E}\left[f(\mathbf{w}^{k-\tau^{k}})-f\left(\mathbf{w}^*\right)\right] \\
			& \quad \leq \sum_{k} \eta^{k} \mathbb{E}\left[F_{i^k}(\mathbf{w}^{k-\tau^{k}})-F_{i^{k}}\left(\mathbf{w}^*\right)\right] +\sum_{k} \frac{N \eta^{k}}{2 k}.
		\end{aligned}
	\end{equation}
\end{proof}

\begin{proof}[Proof of Lemma \ref{lemma_7}]
	From Lemma \ref{lemma_3}, Corollary \ref{corollary_1}, and the summability of $k$, we obtain
	
	\begin{equation}
		\label{eq_27}
		\begin{aligned}
			& \eta^{k} \mathbb{E}\left[f(\mathbf{w}^{k})-f(\mathbf{w}^{k-\tau^{k}})\right] \leq n D \eta^{k} \mathbb{E}\|\mathbf{w}^{k}-\mathbf{w}^{k-\tau^{k}}\| \\
			& \quad \leq n D \eta^{k} \sum_{n=k-\tau^{k}}^{k-1} \mathbb{E}(\|\mathbf{w}^{n+1}-\mathbf{w}^{n}\|) \\
			& \quad \leq n D^2 \eta^{k} \sum_{n=k-\tau^{k}}^{k-1} \eta^{n} \leq \frac{N D^2}{2} \sum_{n=k-\tau^{k}}^{k-1}\left(\left(\eta^{n}\right)^2+(\eta^{k})^2\right) \\
			& \quad \leq \frac{n D^2}{2} \tau^{k}(\eta^{k})^2+\frac{n D^2}{2} \sum_{n=k-\tau^{k}}^{k-1}\left(\eta^{n}\right)^2 .
		\end{aligned}
	\end{equation}
	
	Namely,
	\begin{equation}
		\label{eq_28}
		\begin{aligned}
			&\sum_{k} \eta^{k} \mathbb{E}\left[f(\mathbf{w}^{k})-f(\mathbf{w}^{k-\tau^{k}})\right] \leq \frac{2 n D^2}{\ln (1 / \lambda_p)} \sum_{k=\bar{K}}^{\infty} \ln k \cdot (\eta^{k})^2.
		\end{aligned}
	\end{equation}
\end{proof}

\section{Proof of Theorem 2: Convergence Bound of quantized DFedRW}
First, we prove Lemma \ref{lemma_2}, and then prove Theorem 2.

\begin{proof}[Proof of Lemma \ref{lemma_2}]
	Let $\varpi=\frac{|w^\nu|}{\|\mathbf{w}\|}-s \ell$, where $s \ell \leq \frac{|w^\nu|}{\|\mathbf{w}\|}<s(\ell+1)$, $0 \leq \varpi<s$. To obtain the bound on the quantized variance $\mathbb{E}[\|Q(\mathbf{w})-\mathbf{w}\|^2]$, we first calculate the expectation $\mathbb{E}[Q(w^\nu)-w^\nu]$ of the parameter difference. We have
	
	\begin{equation}
		\begin{aligned}
			& \mathbb{E}\left[Q(w^\nu)-w^\nu\right]\\
			& \quad =  \left(s\ell-\frac{|w^\nu|}{\|\mathbf{w}\|}\right)\|\mathbf{w}\|\operatorname{sgn}(w^\nu)\left(1-\frac{|w^\nu|}{s\|\mathbf{w}\|}+\ell\right) \\
			& \qquad + \left(s(\ell+1)-\frac{|w^\nu|}{\|\mathbf{w}\|}\right)\|\mathbf{w}\|\operatorname{sgn}(w^\nu)\left(\frac{|w^\nu|}{s\|\mathbf{w}\|}-\ell\right)\\
			& \quad =\|\mathbf{w}\|\operatorname{sgn}(w^\nu)\left[-\varpi\left(1-\frac{\varpi}{s}\right)+(s-\varpi) \frac{\varpi}{s}\right]=0,
		\end{aligned}
	\end{equation}
	and
	\begin{equation}
		\begin{aligned}
			\mathbb{E}\left[(Q(w^\nu)-w^\nu)^2\right] & =\|\mathbf{w}\|^2\left[\varpi^2\left(1-\frac{\varpi}{s}\right)+(s-\varpi)^2 \frac{\varpi}{s}\right]\\
			& =\|\mathbf{w}\|^2\varpi(s-\varpi).
		\end{aligned}
	\end{equation}
	
	Due to $0 \leq \varpi<s$, there is $\varpi(s-\varpi) \in\left[0, \frac{s^2}{4}\right]$. Assuming a bounded squared parameter norm $\|\mathbf{w}\|^2 \leq \sigma^2$, we have
	
	\begin{equation}
		\begin{aligned}
			& \mathbb{E}\left[\|Q(\mathbf{w})-\mathbf{w}\|^2\right]=\mathbb{E} \sum_{\nu=1}^d\left(Q\left(w^\nu\right)-w^\nu\right)^2 \\
			& \quad =\sum_{\nu=1}^d \mathbb{E}\left[(Q\left(w^\nu\right)-w^\nu)^2\right] =\sum_{\nu=1}^d \|\mathbf{w}\|^2\varpi(s-\varpi) \\
			& \quad \leq \|\mathbf{w}\|^2 \sum_{i=1}^d \frac{s^2}{4}=\frac{\sigma^2 d s^2}{4}.
		\end{aligned}
	\end{equation}
\end{proof}

To prove Theorem 2, we also need to have following lemmas. In the quantized DFedRW, the difference in parameters $Q(\mathbf{w}^{k+1}-\mathbf{w}^{k})$ transmitted between devices.

\begin{lemma}
	\label{lemma_9}
	Under Assumption \ref{assumption_3}, by using Lemmas \ref{lemma_2}, \ref{lemma_3} and Corollary \ref{corollary_1}, we have
	\begin{equation}
		\begin{aligned}
			& \eta^{k} \mathbb{E}\left[f(\mathbf{w}^{k})-f(\mathbf{w}^{k-\tau^{k}})\right] \\
			& \quad \leq \frac{2 n D^2}{\ln \left(1 / \lambda_p\right)} \sum_{k=\bar{K}}^{\infty} \ln k \cdot(\eta^{k})^2+n D \eta^{k} \tau^{k} \frac{\sigma \sqrt{d} s}{2}.
		\end{aligned}
	\end{equation}
\end{lemma}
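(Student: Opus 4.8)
The plan is to replay the proof of Lemma~\ref{lemma_7} essentially line by line, the only new ingredient being the additional per-step displacement created by the stochastic quantization of the differences communicated along the walk, which Lemma~\ref{lemma_2} lets us control. First I would apply the Lipschitz estimate of Lemma~\ref{lemma_3} together with Corollary~\ref{corollary_1} exactly as in the opening of the proof of Lemma~\ref{lemma_7}, and then telescope along the window $[k-\tau^{k},k]$:
\[
\eta^{k}\,\mathbb{E}\!\left[f(\mathbf{w}^{k})-f(\mathbf{w}^{k-\tau^{k}})\right]\;\le\; nD\,\eta^{k}\sum_{n=k-\tau^{k}}^{k-1}\mathbb{E}\bigl\|\mathbf{w}^{n+1}-\mathbf{w}^{n}\bigr\|,
\]
so that everything reduces to a one-step bound on $\mathbb{E}\|\mathbf{w}^{n+1}-\mathbf{w}^{n}\|$ for the quantized recursion~(\ref{eq_8}).

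The decisive step is precisely that one-step bound. Using the nonexpansiveness of $\boldsymbol{\Pi}_{\mathcal{W}}$ (and $\mathbf{w}^{n}=\boldsymbol{\Pi}_{\mathcal{W}}(\mathbf{w}^{n})$), the displacement decomposes into the genuine stochastic gradient step and the quantization perturbation of the transmitted increment; the former is at most $\eta^{n}D$ by Corollary~\ref{corollary_1}, while the latter has expectation at most $\sqrt{\mathbb{E}\|Q(\cdot)-(\cdot)\|^{2}}\le \frac{\sigma\sqrt{d}\,s}{2}$ by Lemma~\ref{lemma_2} and Jensen's inequality. Hence $\mathbb{E}\|\mathbf{w}^{n+1}-\mathbf{w}^{n}\|\le \eta^{n}D+\frac{\sigma\sqrt{d}\,s}{2}$, i.e. exactly the Lemma~\ref{lemma_7} bound $\eta^{n}D$ augmented by a single quantization term per step. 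Substituting this back and separating the two contributions gives
\[
\eta^{k}\,\mathbb{E}\!\left[f(\mathbf{w}^{k})-f(\mathbf{w}^{k-\tau^{k}})\right]\;\le\; nD^{2}\,\eta^{k}\sum_{n=k-\tau^{k}}^{k-1}\eta^{n}\;+\; nD\,\eta^{k}\tau^{k}\,\frac{\sigma\sqrt{d}\,s}{2}.
\]

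For the first sum I would use $\eta^{n}\eta^{k}\le\frac12((\eta^{n})^{2}+(\eta^{k})^{2})$, sum over $k$, and invoke Assumption~\ref{assumption_3} via the estimate~(\ref{eq_16}), namely $\sum_{k}\sum_{n=k-\tau^{k}}^{k-1}(\eta^{n})^{2}\le\sum_{k}\tau^{k}(\eta^{k})^{2}\le\frac{2}{\ln(1/\lambda_P)}\sum_{k=\bar{K}}^{\infty}\ln k\cdot(\eta^{k})^{2}$, which reproduces the term $\frac{2nD^{2}}{\ln(1/\lambda_P)}\sum_{k=\bar{K}}^{\infty}\ln k\cdot(\eta^{k})^{2}$ of the claim verbatim as in Lemma~\ref{lemma_7}; the residual $nD\eta^{k}\tau^{k}\frac{\sigma\sqrt{d}\,s}{2}$ is already the stated second term, and after the analogous summation over $k$ (together with the matching term coming from the quantized counterpart of Lemma~\ref{lemma_5}) it is exactly what produces the $\psi(d,s)=(1+n)\frac{D\sigma\sqrt{d}\,s}{2}\sum_{k}\eta^{k}\tau^{k}$ term of Theorem~\ref{theorem_2}.

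The main obstacle is the one-step displacement bound itself: because QDFedRW uses \emph{incremental} quantization of parameter differences (following \cite{14Sun}), one must verify that re-injecting $Q(\mathbf{w}^{n}-\mathbf{w}^{n-1})$ at the next walker does not let quantization errors compound, so that each step of the mixing window contributes a single $\frac{\sigma\sqrt{d}\,s}{2}$ rather than a growing partial sum — this is where quantizing differences, and normalising before quantizing (which keeps $s$, and hence the Lemma~\ref{lemma_2} bound, small), is essential. A secondary bookkeeping point is that the random walk must be treated as one unbroken trajectory through the aggregation steps, so that the telescoping over $[k-\tau^{k},k]$ stays valid even when an aggregation falls inside the window; the global index $\bar{k}=(t-1)K+k$ and the projection then cause no further difficulty.
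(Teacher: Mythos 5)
Your proposal matches the paper's own proof: the paper likewise applies the $nD$-Lipschitz bound on $f$, telescopes over the window $[k-\tau^{k},k]$, splits each one-step displacement via the triangle inequality into the true increment (bounded by $\eta^{n}D$ through Corollary~\ref{corollary_1}) plus the quantization error of the transmitted difference (bounded by $\frac{\sigma\sqrt{d}\,s}{2}$ through Jensen and Lemma~\ref{lemma_2}, i.e. (\ref{eq_35})), and then invokes (\ref{eq_16}) for the first sum. The argument and the resulting two terms are identical to (\ref{eq_37}), so no further comparison is needed.
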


\begin{proof}[Proof of Theorem 2]
	Using Lemma \ref{lemma_3} and Corollary \ref{corollary_1}, since $\mathbf{w}^{k+1}-\mathbf{w}^{k}\leftarrow Q(\mathbf{w}^{k+1}-\mathbf{w}^{k})$, we have 
	\begin{equation}
		\label{eq_34}
		\begin{aligned}
			& \eta^{k} \mathbb{E}\left[F_{i^{k}}(\mathbf{w}^{k-\tau^{k}})-F_{i^{k}}(\mathbf{w}^{k})\right] \\
			& \quad \leq D \eta^{k} \sum_{n=k-\tau^{k}}^{k-1} \mathbb{E}\left(\|Q(\mathbf{w}^{n+1}-\mathbf{w}^{n})\right. \\
			& \qquad \left. +(\mathbf{w}^{n+1}-\mathbf{w}^{n})-(\mathbf{w}^{n+1}-\mathbf{w}^{n})\|\right) \\
			& \quad \leq D \eta^{k} \sum_{n=k-\tau^{k}}^{k-1} \mathbb{E}\left(\|\mathbf{w}^{n+1}-\mathbf{w}^{n}\| \right. \\
			& \qquad \left. +\|Q(\mathbf{w}^{n+1}-\mathbf{w}^{n})-(\mathbf{w}^{n+1}-\mathbf{w}^{n})\|\right) \\
			& \quad \leq D \eta^{k} \sum_{n=k-\tau^{k}}^{k-1} \left(\mathbb{E}\|\mathbf{w}^{n+1}-\mathbf{w}^{n}\| \right. \\
			& \qquad \left. +\mathbb{E}\|Q(\mathbf{w}^{n+1}-\mathbf{w}^{n})-(\mathbf{w}^{n+1}-\mathbf{w}^{n})\|\right).
		\end{aligned}
	\end{equation}
	
	Using the Cauchy Schwarz inequality $(\mathbb{E}[X Y])^2 \leq$ $\mathbb{E}[X^2] \cdot \mathbb{E}[Y^2]$ and Lemma \ref{lemma_2}. Let $X=\left\|Q(\mathbf{w}^{n+1}-\mathbf{w}^{n})-(\mathbf{w}^{n+1}-\mathbf{w}^{n})\right\|$ and $Y=1$. Then, we have
	\begin{equation}
		\label{eq_35}
		\begin{aligned}
			& \mathbb{E}\left\|Q(\mathbf{w}^{n+1}-\mathbf{w}^{n})-(\mathbf{w}^{n+1}-\mathbf{w}^{n})\right\| \\
			& \quad \leq \sqrt{\mathbb{E}\left\|Q(\mathbf{w}^{n+1}-\mathbf{w}^{n})-(\mathbf{w}^{n+1}-\mathbf{w}^{n})\right\|^2} \\
			& \quad =\sqrt{\frac{\sigma^2 d s^2}{4}}=\frac{\sigma \sqrt{d} s}{2}.
		\end{aligned}
	\end{equation}
	
	Then
	\begin{equation}
		\label{eq_36}
		\begin{aligned}
			& \eta^{k} \mathbb{E}\left[F_{i^{k}}(\mathbf{w}^{k-\tau^{k}})-F_{i^{k}}(\mathbf{w}^{k})\right] \\
			& \leq \frac{D^2}{2} \tau^{k}(\eta^{k})^2+\frac{D^2}{2} \sum_{n=k-\tau^{k}}^{k-1}\left(\eta^{n}\right)^2 +D \eta^{k} \tau^{k} \frac{\sigma \sqrt{d} s}{2} .
		\end{aligned}
	\end{equation}

	Summing the results of (\ref{eq_36}), we have
	
	\begin{equation}
		\label{eq_29}
		\begin{aligned}
			& \sum_{k} \eta^{k} \mathbb{E}\left[F_{i^{k}}(\mathbf{w}^{k-\tau^{k}})-F_{i^{k}}(\mathbf{w}^{k})\right] \\
			& \quad \leq \frac{D^2}{2} \sum_{k} \tau^{k}(\eta^{k})^2+\frac{D^2}{2} \sum_{k} \sum_{n=k-\tau^{k}}^{k-1}\left(\eta^{n}\right)^2 \\
			& \qquad + \frac{D \sigma \sqrt{d} s}{2} \sum_{k} \eta^{k} \tau^{k}.
		\end{aligned}
	\end{equation}
	
	Derived from (\ref{eq_29}) and (\ref{eq_12}), we further have
	\begin{equation}
		\label{eq_30}
		\begin{aligned}
			& \sum_{k} \eta^{k} \mathbb{E}\left[F_{i^{k}}(\mathbf{w}^{k-\tau^{k}})-F_{i^{k}}(\mathbf{w}^{*})\right] \\
			& \quad \leq \frac{D^2}{2} \sum_{k} \tau^{k}(\eta^{k})^2+\frac{D^2}{2} \sum_{k} \sum_{n=k-\tau^{k}}^{k-1}\left(\eta^{n}\right)^2 \\
			& \qquad +\frac{D \sigma \sqrt{d} s}{2} \sum_{k} \eta^{k} \tau^{k} +\frac{1}{2}\|\mathbf{w}^{0}-\mathbf{w}^*\|^2 \\
			& \qquad +\sum_{k} \frac{(\eta^{k})^2}{4}\|\nabla f_{i^{k}}(\mathbf{w}^{k})\|^2 + \sum_{k} \frac{(\eta^{k})^2}{4} \hat{\gamma}^2 D^2.
		\end{aligned}
	\end{equation}
	
	It follows from (\ref{eq_30}), (\ref{eq_16}) and lemma \ref{lemma_6} that
	\begin{equation}
		\label{eq_31}
		\begin{aligned}
			& \sum_{k} \eta^{k} \mathbb{E}\left[f(\mathbf{w}^{k-\tau^{k}})-f\left(\mathbf{w}^*\right)\right] \\
			& \quad \leq \frac{2 D^2}{\ln \left(1 / \lambda_P\right)} \sum_{k=\bar{K}}^{\infty} \ln k \cdot (\eta^{k})^2+\frac{D \sigma \sqrt{d} s}{2} \sum_{k} \eta^{k} \tau^{k}  \\
			& \qquad+\frac{1}{2}\|\mathbf{w}^{0}-\mathbf{w}^*\|^2+\sum_{k} \frac{(\eta^{k})^2}{4}\|\nabla f_{i^{k}}(\mathbf{w}^{k})\|^2 \\
			& \qquad+\sum_{k} \frac{(\eta^{k})^2}{4} \hat{\gamma}^2 D^2+\sum_{k} \frac{n \eta^{k}}{2 k}.
		\end{aligned}
	\end{equation}
	
	Finally, using Lemmas \ref{lemma_5}, \ref{lemma_9}, and (\ref{eq_31}), we know that
	\begin{equation}
		\label{eq_32}
		\begin{aligned}
			& \mathbb{E} {\left[f(\overline{\mathbf{w}}^{k})-f\left(\mathbf{w}^*\right)\right] } \\
			& \quad\leq \frac{(1+n) \cdot 2 D^2 \sum_{k=\bar{K}}^{\infty} \ln k \cdot(\eta^{k})^2}{\ln \left(1 / \lambda_P\right) \sum_{k} \eta^{k}}+\frac{\|\mathbf{w}^{0}-\mathbf{w}^*\|^2}{2 \sum_{k} \eta^{k}} \\
			& \qquad +\frac{(\delta^2+\hat{\gamma}^2) D^2}{4 \sum_{k} \eta^{k}} \sum_{k}(\eta^{k})^2+\frac{\sum_{k} \frac{N \eta^{k}}{2 k}}{\sum_{k} \eta^{k}}\\
			& \qquad +\frac{(1+n) \frac{D \sigma \sqrt{d} s}{2} \sum_{k} \eta^{k} \tau^{k}}{\sum_{k} \eta^{k}}.
		\end{aligned}
	\end{equation}
	
	We have completed the proof of Theorem \ref{theorem_2}.
\end{proof}

\begin{proof}[Proof of Lemma \ref{lemma_9}] 
	Using Lemma \ref{lemma_3}, Corollary \ref{corollary_1}, (\ref{eq_16}) and (\ref{eq_35}), we have
	\begin{equation}
		\label{eq_37}
		\begin{aligned}
			& \eta^{k} \mathbb{E}\left[f(\mathbf{w}^{k})-f(\mathbf{w}^{k-\tau^{k}})\right] \\
			& \quad \leq n D \eta^{k} \sum_{n=k-\tau^{k}}^{k-1}\left(\mathbb{E}\|\mathbf{w}^{n+1}-\mathbf{w}^{n}\|\right. \\
			& \qquad \left.+\mathbb{E}\|Q(\mathbf{w}^{n+1}-\mathbf{w}^{n})-(\mathbf{w}^{n+1}-\mathbf{w}^{n})\|\right) \\
			& \quad \leq \frac{2 n D^2}{\ln (1 / \lambda_p)} \sum_{k=\bar{K}}^{\infty} \ln k \cdot(\eta^{k})^2+n D \eta^{k} \tau^{k} \frac{\sigma \sqrt{d} s}{2}.
		\end{aligned}
	\end{equation}
\end{proof}

\section{Proof of Proposition 1: Sufficient Conditions for Reducing Communication}
\begin{proof}
	Comparing Theorems \ref{theorem_1} and \ref{theorem_2}, we need to ensure that the additional term $\frac{\psi(d, s)}{k^{1-q}}$ in Theorem 2 does not significantly affect the convergence bound under error $\epsilon$, as given by
	\begin{equation}
		\label{eq_38}
		\frac{(1+n) \frac{D \sigma \sqrt{d} s}{2} \sum_{k} \eta^{k} \tau^{k}}{k^{1-q}}<\epsilon.
	\end{equation}
	
	Considering the decreasing learning rate $\eta^{k}=\mathcal{O}(\frac{1}{k^q})$, $\frac{1}{2}<q<1$, and the convergence bound $T_q=\varrho T_{n q}$, $\varrho>1$, we redefine $\psi(d, s)$ as 
	\begin{equation}
		\label{eq_39}
		\psi(d, s) = \Upsilon \sum_{k=1}^{T_q} \frac{1}{k^q} \cdot \min \left\{k, \max \left\{ \left\lceil \frac{\ln (2 \zeta k)}{\ln (1 / \lambda_P)} \right\rceil , K_{\mathbf{P}} \right\} \right\},
	\end{equation}	
	where $\Upsilon =(1+n) \frac{D \sigma \sqrt{d} s}{2}$. We can decompose (\ref{eq_39}) into two parts:
	\begin{equation}
		\label{eq_40}
		\begin{aligned}
			\psi(d, s) \approx & \Upsilon\left(\sum_{k=1}^{K_\mathbf{P}} k^{1-q}+\sum_{k=K_\mathbf{P}+1}^{T^{\prime}} k^{1-q} \right. \\
			&  \left. \quad +\sum_{k=T^{\prime}+1}^{T_q} \frac{\ln (2 \zeta k)}{k^q \ln (1 / \lambda_P)}\right),
		\end{aligned}
	\end{equation}
	where $T'$ is the intersection point with $k = \frac{\ln(2 \zeta k)}{\ln (1 / \lambda_P)}$. This means that for $k \leq K_\mathbf{P}$, $k$ directly influences the selection of the minimum value. For $K_\mathbf{P}<k \leq T^{\prime}$, the maximum value is constrained by $\frac{\ln (2 \zeta k)}{\ln (1 / \lambda_P)}$. For $k>T^{\prime}$, $\frac{\ln(2 \zeta k)}{\ln (1 / \lambda_P)}$ governs the minimum.
	
	Since the growth rate of $k$ is faster than $\frac{\ln(2 \zeta k)}{\ln (1 / \lambda_P)}$, assuming $k = T' = a \frac{\ln(2 \zeta T')}{\ln (1 / \lambda_P)}$, we can approximate the solution as $T' \approx \frac{\ln (2 \zeta T')}{\ln (1 / \lambda_P)}$. If $k \leq T'$,
	\begin{equation}
		\label{eq_41}
		\sum_{k=1}^{T^{\prime}} \frac{k}{k^q}=\sum_{k=1}^{T^{\prime}} k^{1-q} \approx \int_1^{T^{\prime}} x^{1-q} d x \approx\frac{(T^{\prime})^{2-q}-1}{2-q}.
	\end{equation}
	
	When $k > T'$,
	\begin{equation}
		\label{eq_42}
		\begin{aligned}
			\sum_{k=T^{\prime}+1}^{T_q} \frac{\ln (2 \zeta k)}{k^q \cdot \ln (1 / \lambda_P)} & \approx  \frac{\ln (2 \zeta T)}{\ln (1 / \lambda_P)} \sum_{k=T^{\prime}+1}^{T_q} \frac{1}{k^q} \\
			& \approx \frac{\ln (2 \zeta T)}{\ln (1 / \lambda_P)} T_q^{1-q}.
		\end{aligned}
	\end{equation}
	
	From (\ref{eq_38})$-$(\ref{eq_42}), we further have
	\begin{equation}
		\label{eq_43}
		\begin{aligned}
			& (1+n) \frac{D \sigma \sqrt{d} s}{2} \left(\frac{(T^{\prime})^{2-q}-1}{(2-q) T_q^{1-q}}\right. \\
			& \quad \left.+\frac{T_q^{1-q}\left[(1-q) \ln \left(2 \zeta T_q\right)-1\right]+1}{(1-q)^2 \ln (1 / \lambda_P) T_q^{1-q}}\right)<\epsilon
		\end{aligned}
	\end{equation}
	
	When $T_q$ is large and $q \rightarrow 1$, assuming $\ln (2 \zeta T_q) \approx \ln (T_q)$ and ignoring the logarithmic growth term of $T'$. We apply the AM-GM inequality, and then (\ref{eq_43}) is simplified to
	\begin{equation}
		\label{eq_44}
		\begin{aligned}
			\Upsilon \left(\frac{\ln (T_q)}{T_q^{1-q}}+\frac{1}{\ln (1 / \lambda_P)}\right)=2 \Upsilon \sqrt{\frac{1}{\ln (1 / \lambda_P)}}<\epsilon.
		\end{aligned}
	\end{equation}
	
	To achieve the desired error $\epsilon$, the communication costs of DFedRW and its quantized version are $32 d(|\mathcal{N}_c(i)|+K-1) T_{n q}$ bits and $(64+b d)(|\mathcal{N}_c(i)|+K-1) T_q$ bits, respectively. Therefore, if $32 d>\varrho(64+b d)$, the quantization reduces the communication overhead of DFedRW.
\end{proof}
\end{appendices}

\bibliographystyle{IEEEtran}
\bibliography{IEEEabrv,ref}

\end{document}